\newtheorem{theorem}{Theorem}[section]
\newtheorem{definition}[theorem]{Definition}
\newtheorem{remark}[theorem]{Remark}
\newtheorem{lemma}[theorem]{Lemma}
\newtheorem{corollary}[theorem]{Corollary}
\newtheorem{conjecture}[theorem]{Corollary}
\newcommand{\bigO}{\smash{\ensuremath{O}}}
\newcommand{\tilO}{\smash{\ensuremath{\widetilde{O}}}}
\newcommand{\tilOm}{\smash{\ensuremath{\widetilde{\Omega}}}}
\newcommand{\tilT}{\smash{\ensuremath{\widetilde{\Theta}}}}
\newcommand{\E}{\mathbb{E}}
\newcommand{\eps}{\varepsilon}
\newcommand{\calA}{\mathcal{A}}
\newcommand{\calB}{\mathcal{B}}
\newcommand{\calG}{\mathcal{G}}
\newcommand{\code}[1]{\texttt{#1}}
\DeclareMathOperator{\polylog}{polylog}
\DeclareMathOperator{\hop}{hop}
\DeclareMathOperator*{\argmin}{arg\,min}
\newcommand{\hybrid}{\ensuremath{\mathsf{HYBRID}}\xspace}
\newcommand{\hybridpar}[2]{\ensuremath{\mathsf{HYBRID}(#1,#2)}}
\newcommand{\hybridparbig}[2]{\ensuremath{\mathsf{HYBRID}\big(#1,#2\big)}}
\newcommand{\LOCAL}{\ensuremath{\mathsf{LOCAL}}\xspace}
\newcommand{\CONGEST}{\ensuremath{\mathsf{CONGEST}}\xspace}
\newcommand{\NCC}{\ensuremath{\mathsf{NCC}}\xspace}
\newcommand{\CC}{\ensuremath{\mathsf{CLIQUE}}\xspace}
\newif\iflong
\newcommand{\lng}[1]{\iflong#1\fi}
\newcommand{\shrt}[1]{\iflong\else#1\fi}
\begin{document}

\title{Routing Schemes and Distance Oracles in the Hybrid Model}

\author{Fabian Kuhn and Philipp Schneider} 

\email{kuhn@cs.uni-freiburg.de, philipp.schneider@cs.uni-freiburg.de}
%
%

\begin{abstract}
The $\mathsf{HYBRID}$ model was introduced as a means for theoretical study of \textit{distributed} networks that use various communication modes. Conceptually, it is a synchronous message passing model with a \textit{local communication mode}, where in each round each node can send large messages to all its neighbors in a local network (a graph), and a \textit{global communication mode}, where each node is allotted limited (polylogarithmic) bandwidth per round which it can use to communicate with \textit{any} node in the network.

Prior work has often focused on shortest paths problems in the local network, as their global nature makes these an interesting case study how combining communication modes in the \hybrid model can overcome the individual lower bounds of either mode. In this work we consider a similar problem, namely computation of \textit{distance oracles} and \textit{routing schemes}. In the former, all nodes have to compute \textit{local tables}, which allows them to look up the distance (estimates) to any target node in the local network when provided with the \textit{label} of the target. In the latter, it suffices that nodes give the next node on an (approximately) shortest path to the target.

Our goal is to compute these local tables as fast as possible with labels as small as possible. We show that this can be done \textit{exactly} in $\widetilde O(n^{1/3})$ communication rounds and labels of size $\Theta(n^{2/3})$ bits. For constant stretch approximations we achieve labels of size $O(\log n)$ in the same time. Further, as our main technical contribution, we provide computational lower bounds for a variety of problem parameters. For instance, we show that computing solutions with stretch below a certain constant takes $\widetilde \Omega(n^{1/3})$ rounds even for labels of size $O(n^{2/3})$.
\end{abstract}

\maketitle

\section{Introduction}

\paragraph{Hybrid Networks} Real networks often employ multiple communication modes. For instance, mobile devices combine high-bandwidth, short-range wireless communication with relatively low-bandwidth cellular communication (c.f., 5G \cite{Asadi2016}). Other examples are hybrid data centers, which combine wireless and wired communication \cite{Han2015} or optical circuit switching and electrical packet switching \cite{Wang2010}, or software defined networking \cite{Vissicchio2014}.

In this article we utilize the theoretical abstraction of such hybrid communication networks provided by \cite{Augustine2020} which became known as \textit{hybrid model} and was designed to reflect a high-bandwidth local communication mode and a low-bandwidth global communication mode, capturing one of the main aspects of real hybrid networks.
Fundamentally, the hybrid model builds on the concept of \textit{synchronous message passing}, a classic model to investigate communication complexity in distributed systems. 

\begin{definition}[Synchronous Message Passing, c.f., \cite{Lynch1996}] 
	\label{def:sync_msg_passing}
	Let $V$ be a set of $n$ nodes with unique identifiers $ID)(\cdot):V \to [n]$.\footnote{$[k] := \{1, \ldots , k\}$ for $k \in \mathbb{N}$.} Time is slotted into discrete rounds consisting of the following steps. First, all nodes receive the set of messages addressed to them in the last round. Second, nodes conduct computations based on their current state and the set of received messages to compute their new state (randomized algorithms also include the result of some random function). Third, based on the new state the next set of messages is sent.
\end{definition}

Synchronous message passing has a clear focus on investigating communication complexity, i.e., the number of communication rounds required to solve a problem with an input distributed over all nodes (usually a graph problem). For this purpose, nodes are usually assumed to be computationally unbounded.\footnote{Occasionally this model is ``overexploited'', e.g., nodes are supposed solve $\mathcal{NP}$-complete problems on their local data. We will refrain from that.} The hybrid model then places additional restrictions on the messages size and which pairs of nodes can exchange them.

\begin{definition}[Hybrid model \cite{Augustine2020}] 
	\label{def:hybrid_model}
	The \hybridpar{\lambda}{\gamma} model is a synchronous message passing model (Def.\ \ref{def:sync_msg_passing}), subject to the following restrictions. \emph{Local mode:} nodes may send one message per round of maximum size $\lambda$ bits to each of their neighbors in a graph. \emph{Global mode:} nodes can send and receive messages of total size at most $\gamma$ bits per round to/from any other node(s) in the network. If the restrictions are not adhered to then a strong adversary\footnote{The strong adversary knows the states of all nodes, their source codes and even the outcome of all random functions.} selects the messages that are delivered.
\end{definition}

Note that the parameter spectrum of the \hybridpar{\lambda}{\gamma} model covers the standard models \LOCAL, \CONGEST, \CC (aka ``Congested Clique'') and \NCC (``Node Capacitated Clique'') as marginal cases.\footnote{\LOCAL: $\lambda \!=\! \infty, \gamma \!=\! 0$, \CONGEST: $\lambda \!=\! \bigO(\log n), \gamma \!=\! 0$, \CC (+ Lenzen's Routing scheme \cite{Lenzen2013}): $\lambda \!=\! 0, \gamma \!=\! n \log n$, \NCC: $\lambda \!=\! 0, \gamma \!=\! \bigO(\log^2 n)$.} Given the ramifications of investigating \hybridpar{\lambda}{\gamma} in its entirety, we narrow our scope (for our upper bounds) to a particular parametrization that pushes both communication modes to one extreme end of the spectrum. Following the argumentation of \cite{Augustine2020} we leave the size of local messages unrestricted (modeling high local bandwidth) and allow only $\polylog n$ bits of global communication per node per round (modeling severely restricted global bandwidth). Formally, we define the ``standard'' hybrid model as combination of the standard \LOCAL and \NCC \cite{Augustine2019} models: $\hybrid := \hybridparbig{\infty}{\bigO(\log^2 n)}$.\footnote{Choosing $\gamma = \bigO(\log^2 n)$ is due to convenience. It allows nodes to exchange $\bigO(\log n)$ global messages of size $\bigO(\log n)$ bits, which often makes randomized algorithms more concise. The slight arbitrariness of this choice is one reason to resort to the $\tilO$ notation for analyzing the round complexity.} Note that our lower bounds are parametrized for the more general $\hybridpar{\infty}{\gamma}$ model (which also yields lower bounds for the weaker \hybrid model).

\paragraph{Routing Schemes \& Distance Oracles} A fundamental aspect of the Internet Protocol is packet forwarding, where every node has to compute a routing function, which -- when combined with target-specific information stored in the packet header -- must indicate the neighbor which the packet has to be forwarded to such that it reaches its intended destination. A correct \textit{routing scheme} consists of these routing functions and a unique \textit{label} per node, such that the packet forwarding procedure induces a path in the network from \textit{any} source node to \textit{any} destination node specified by the corresponding label attached to the packet. 

Typically, a distinction is made between \textit{stateful} and \textit{stateless} routing schemes. In the former, routing can be based on additional information accumulated in the packet header as the packet is forwarded, whereas in the latter routing decisions are completely oblivious to the previous routing path. A related problem is the computation of distance oracles, which has some similarities to the all pairs shortest paths problem. Each node must compute an oracle function that provides the distance (or an estimate) to any other node when provided with the corresponding label. Formal definitions are given in Section \ref{sec:preliminaries} (Def.\ \ref{def:distance_oracles}, \ref{def:stateless_routing} and \ref{def:stateful_routing}).

Our first goal is to gather the necessary information for labels, routing and oracle functions with as few communication rounds as possible. This is particularly important for dynamic or unreliable networks where changes in distances or topology necessitates (frequent) re-computation. In this work we allow that node labels\footnote{Allowing relabeling is also called a labeling scheme.} may contain information that help with distance estimation and routing decisions, which gives rise to our second goal; keeping node labels small.\footnote{Usually the amount of information stored at nodes for routing and distance estimation is also considered. Since the nodes in our model are computationally unbounded we do not focus on that. Our lower bounds have also no restriction on the local information.}
The third goal is to speed up the actual packet forwarding process to minimize latency and alleviate congestion. Given a graph with edge weights corresponding to (e.g.) link-latencies, we want to minimize the largest detour any packet takes in relation to the corresponding shortest path. This is also known as \textit{stretch}. Analogously, for distance oracles we want to minimize the worst estimation error relative to the true distance.

\paragraph{Routing Schemes \& Distance Oracles in Distributed Networks} In this work we are interested solving the above problems in a distributed setting (c.f., Definition \ref{def:sync_msg_passing}). This has particular importance given the distributed nature of many real networks where routing problems are relevant (most prominently, the Internet) and where providing a centralized view of the whole network is prohibitively expensive. Note that we are interested in computing routing schemes and distance oracles for the {local communication network}, which is motivated by the fact that typically lots of packets are routed during an ongoing session and due bandwidth and cost constraints of the global mode. However, the global mode can be used to send the (relatively small) destination label to the source of a packet quickly, which can then be stored at that node for the duration of a session.


From an algorithmic standpoint, computing routing schemes and distance oracles is an inherently \textit{global} problem. That is, allowing only local communication (i.e., the \LOCAL model) it takes $\Omega(n)$ rounds to accomplish this (we provide a proof of this in Lemma \ref{lem:lower_bound_local}).\footnote{Any graph problem can be solved in $\bigO(n)$ rounds in \LOCAL by collecting the graph and solving the problem locally at some node. This makes global problems uninteresting for the \LOCAL model, unless communication restrictions are increased (c.f., \CONGEST) or decreased (c.f., \hybrid).} A similar observation can be made for the global communication mode (\NCC model). If we are only allowed to use global communication and each node initially only knows its incident edges in the local network, it takes $\tilOm(n)$\footnote{The $\tilO(\cdot)$ notation suppresses multiplicative terms that are polylogarithmic in $n$.} rounds to compute routing schemes and distance oracles (we show that in Lemma \ref{lem:lower_bound_ncc}).\footnote{Computing routing schemes in \NCC is somewhat artificial, as the need for routing schemes for a local network suggests that it exists and can be used.}
This article addresses the question whether the combination of the two communication modes in the \hybrid model can overcome the $\tilOm(n)$ lower bound of the individual modes \LOCAL and \NCC. 

Our answer to this is two-pronged. First we show that indeed, we can compute routing schemes and distance oracles significantly faster, for instance, we show $\tilO(n^{1/3})$ rounds and labels of size $\Theta(n^{2/3})$ suffice (c.f., Theorem \ref{thm:exact_algo}). Second, we show that the \hybrid model is not arbitrarily powerful by giving polynomial lower bounds for these problems (depending on the stretch) that hold even for relatively large labels and unbounded local memory. For instance, we show that it takes $\tilOm(n^{1/3})$ rounds to solve either problem exact, even for unweighted graphs and labels of size $\bigO(n^{2/3})$ (c.f., Theorem \ref{thm:lower_bound_unweighted}).
We provide numerous, more nuanced results, depending on stretch and the type of problem, summarized in the following.

\subsection{Contributions and Overview}

Our contributions and results are summarized in Table \ref{tbl:results}, which gives a simplified overview of our complexity results for the various forms of routing scheme and distance oracle problems.
Here we also want to give some intuition into how our techniques work and highlight how some of the results are generalized in the main part.

\begin{table}[ht]
	\begin{center}
	\renewcommand{\arraystretch}{1.2} 
	\begin{tabular}{@{}p{13mm}>{\centering\arraybackslash}cccc@{}}
		\toprule
		problem & stretch & complexity & label-size & reference\\
		\midrule
		\multirow{2}{*}{\shortstack[l]{distance\\ oracles}} & $3\!-\!\eps$ & $\tilOm(n^{1/3})$ & $\bigO(n^{2/3})^\dagger$ & Thm.\ \ref{thm:lower_bound_distance_oracle_small_stretch}\\
		& $\ell$ & $\tilOm\big(n^{1/f(\ell)}\big)^\ddagger$ & $\bigO\big(n^{2/f(\ell)}\big)^\ddagger$ & Thm.\ \ref{thm:lower_bound_distance_oracle_small_stretch}, \ref{thm:lower_bound_distance_oracle_large_stretch} \\
		\midrule
		\multirow{3}{*}{\shortstack[l]{\textit{stateless}\\ routing\\ schemes}} & $\sqrt{3}\!-\!\eps$ & $\tilOm(n^{1/3})$ & $\bigO(n^{2/3})^\dagger$ & Thm.\ \ref{thm:lower_bound_stateless_routing}\\
		& $\sqrt{5}\!-\!\eps$ & $\tilOm(n^{1/5})$ & $\bigO(n^{2/5})^\dagger$ & Thm.\ \ref{thm:lower_bound_stateless_routing}\\
		& $1\!+\!\sqrt{2}\!-\!\eps$ & $\tilOm(n^{1/7})$ & $\bigO(n^{2/7})^\dagger$ & Thm.\ \ref{thm:lower_bound_stateless_routing}\\
		\midrule
		\multirow{4}{*}{\shortstack[l]{\textit{stateful}\\ routing\\ schemes}} & $\sqrt{2}\!-\!\eps$ & $\tilOm(n^{1/3})$ & $\bigO(n^{2/3})^\dagger$ & Thm.\ \ref{thm:lower_bound_stateful_routing}\\
		& $\frac{5}{3}\!-\!\eps$ & $\tilOm(n^{1/5})$ & $\bigO(n^{2/5})^\dagger$ & Thm.\ \ref{thm:lower_bound_stateful_routing}\\
		& $\frac{7}{4}\!-\!\eps$ & $\tilOm(n^{1/7})$ & $\bigO(n^{2/7})^\dagger$ & Thm.\ \ref{thm:lower_bound_stateful_routing}\\
		& $\approx 1.78$ & $\tilOm(n^{1/11})$ & $\bigO(n^{2/11})$ & Thm.\ \ref{thm:lower_bound_stateful_routing}\\
		\midrule
		\multirow{2}{*}{\shortstack[l]{all on\\ \textit{unw.\ graphs}}} & exact & $\tilOm(n^{1/3})$ &  $\bigO(n^{2/3})^\dagger$ & Thm.\ \ref{thm:lower_bound_unweighted}\\
		& $1\!+\!\eps$ & $\tilO(n^{1/3}/\eps)$ & $\Theta(\log n)$ & Thm.\ \ref{thm:approx_algo}\\
		\midrule
		\multirow{2}{*}{\shortstack[l]{all on \textit{weigh.}\\ \textit{graphs}}} & exact & $\tilO(n^{1/3})$ & $\Theta(n^{2/3})$ & Thm.\ \ref{thm:exact_algo}\\
		& 3 & $\tilO(n^{1/3})$ & $\Theta(\log n)$ & Thm.\ \ref{thm:approx_algo}\\
		\bottomrule
	\end{tabular}
	\shrt{\begin{minipage}[c]{0.58\linewidth}}
	\begin{tablenotes}
		\footnotesize
		\item $\dagger$ The lower bound on round complexity holds any node labeling of at most that size.
		\item $\ddagger$ For some function $f(\ell)$ that is linear in $\ell$.
	\end{tablenotes}
	\shrt{\end{minipage}}
	\end{center}
	\caption{Selected and simplified contributions of this paper.}
	\label{tbl:results}
\end{table}

\paragraph{Lower Bounds Summary} Our main contribution revolves around computational lower bounds for computing distance oracles and stateless and stateful routing schemes in the \hybrid model. Lower bounds for approximations are summarized in the first three groups of Table \ref{tbl:results}. We also provide a lower bound on unweighted graphs, given in the first row of the fourth group of Table \ref{tbl:results}. Note that all lower bounds hold regardless of the allowed local memory. Moreover, our lower bounds hold for randomized algorithms with constant success probability. 

In the main part, our results are formulated for the more general \hybridpar{\infty}{\gamma} model, that is $\gamma$ appears as a parameter. For instance, our lower bound on unweighted graphs is in fact \smash{$\Omega(n^{1/3}/\gamma^{1/3})$} rounds for labels of size up to \smash{$c \cdot n^{2/3} \cdot \gamma^{1/3}$} for some $c > 0$ (c.f., Theorem \ref{thm:lower_bound_unweighted}), that is, we get a polynomial lower bound for the \hybridpar{\infty}{\gamma} model for any $\gamma \in \tilde o(n)$. For easier readability we plug in the ``standard'' \hybrid model with $\gamma = \tilO(1)$, which lets us hide $\gamma$ by using the $\tilOm$ notation.

\paragraph{Lower Bounds Overview} The general proof idea is based on information theory and plays out roughly as follows. We start out with a two party communication problem, where Alice is given the state of some random variable $X$ and needs to communicate it to Bob (c.f., Definition \ref{def:party_comm_problem}). Any communication protocol that achieves this needs to communicate $H(X)$ (Shannon entropy of $X$ \cite{Shannon1948}) bits in expectation (c.f., Corollary \ref{cor:lower_bound_two_party}), which is a consequence of the source coding theorem (replicated in Lemma \ref{lem:source_coding_theorem}).

In Section \ref{sec:node_communication_problem} we translate this to the \hybrid setting into what we call the \textit{node communication problem}. There, we have two sets of nodes $A$ and $B$, where nodes in $A$ ``collectively know'' the state of some random variable $X$ and need to communicate it to $B$ (for more precise information see Definition \ref{def:node_comm_problem}). We show a reduction (via a simulation argument) where a \hybrid algorithm that solves the node communication problem on sets $A$ and $B$ that are at sufficiently large distance in the local graph, can be used to derive a protocol for the two party communication problem (c.f., Lemma \ref{lem:reduce_comm_problems}). We conclude that for sets $A,B$ with distance at least $h$ it takes $\tilOm\big(\min(H(X)/n,h)\big)$ rounds to solve this problem (Theorem \ref{thm:lower_bound_node_communication}).

In Section \ref{sec:lower_bounds_unweighted} we give a reduction from the node communication problem to distance oracle and routing scheme computation. The goal is to encode some random variable $X$ with large entropy (super-linear in $n$) into some randomized part of our local communication graph such that some node set $A$ knows $X$ by vicinity. We construct such a graph $\Gamma$ (see Figure \ref{fig:lower_bound_basic}) from the complete bipartite graph $G_{k,k} = (A,E)$ and a (i.i.d.) random $k^2$-bit-string \smash{$X = (x_e)_{e \in E}$} with \smash{$H(X)=\Theta(k^2)$}. Then an edge $e \in E$ of $G_{k,k}$ is present in $\Gamma$ iff $x_e = 1$.

The nodes in $A$ collectively know $X$ since they are incident to the edges sampled from $G_{k,k}$. We designate $k$ nodes of $A$ (one side of the bipartition in $G_{k,k}$) as the ``target nodes''. Then we connect each target with a path of length $h$ to one of $k$ ``source nodes'' which will take the role of $B$ (see Figure \ref{fig:lower_bound_basic}). We show that if the nodes in $B$ learn the distances to the source nodes they also learn about the (non-)existence of the edges sampled from $G_{k,k}$ and can conclude the state of $X$ and thus have solved the node communication problem. Choosing the trade-off between $k$ and $h$ appropriately (roughly $n^{2/3}$ and $n^{1/3}$) we conclude \smash{$\tilOm\big(\min(H(X)/n,h)\big) = \tilOm(n^{1/3})$} rounds of communication must have taken place to solve the (exact) distance oracle problem (Theorem \ref{thm:lower_bound_unweighted}).

One caveat is that in the distance oracle problem the nodes are only supposed to give a distance to a target when also provided with the target-label. So we choose the labels sufficiently small such that the ``free information'', given in form of the labels of all targets, is negligible. We can allow labels of size $\bigO(n^{2/3})$ without changing the above narrative, see Theorem \ref{thm:lower_bound_unweighted}.
For routing schemes we have to adapt the graph $\Gamma$ a bit. We add a slightly longer alternative route from sources to targets (Figure \ref{fig:lower_bound_basic}, left side) and show that the existence of edges and thus the state of $X$ can be concluded from the first routing decision the sources have to make.

So far we got lower bounds only for exact solutions with the advantage that they hold on unweighted graphs. In Section \ref{sec:lower_bounds_approx} we show how to use graph weights to get lower bounds for approximation algorithms. For this we replace $G_{k,k}$ with a balanced, bipartite graph $G = (A,E)$ with $k$ nodes and girth $\ell$ (length of the shortest cycle in $G$). As before, the existence of an edge $e \in E$ in $\Gamma$ is determined by to a random bit string $X = (x_e)_{e \in E}$, c.f., Figure \ref{fig:lower_bound_enhanced}. If some edge $e \in E$ is not in $\Gamma$, then the detour in $\Gamma$ between the endpoints of $e$ is at least $\ell\!-\!1$ edges (otherwise $e$ closes a loop of less than $\ell$ edges). By assigning large weights to edges sampled from $G$, we can transform this into multiplicative detour of almost $\ell\!-\!1$. Similar to the idea in the unweighted case, any algorithm for distance oracles that has stretch \textit{slightly} smaller than $\ell\!-\!1$ can be used to solve the node communication problem, which takes \smash{$\tilOm\big(\min(H(X)/n,h)\big)$} rounds.

To optimize the lower bound we need to maximize the entropy $H(X)$ of \smash{$X = (x_e)_{e \in E}$}, i.e., the density of $G$.
However, it is well known that girth and density of a graph are opposing goals: a graph with girth $2g+1$ can have at most \smash{$\bigO\big(n^{1+1/g}\big)$} edges (c.f., \cite{Alon2001}, simplified in Lemma \ref{lem:girth_edges_bound}). This inherently limits the amount of information we can encode in $\Gamma$ and we show in Lemma \ref{lem:approx_lower_bound_groundwork} how graph density affects lower bounds for the node communication problem.
The good news is, that for some girth values, graphs that achieve their theoretical density limit actually exist and have been constructed (c.f., \citealp{Benson1966, Singleton1966}, simplified form given in Lemma \ref{lem:low_girth_graph_density}). For higher girth values, graphs that come close to that limit are known (c.f., \cite{Lazebnik1997}, simplified form in Lemma \ref{lem:high_girth_graph_density}).\footnote{There is a long standing conjecture that for each girth $2g\!+\!1$ there exists a graph that reaches the theoretical limit of \smash{$\bigO\big(n^{1+1/g}\big)$} edges (\cite{Erdoes1982}, c.f., Conjecture \ref{con:erdoes_girth}). Our tools can be used to generate new lower bounds in case new such graphs are found.}

Utilizing these graphs we achieve polynomial lower bounds for the \textit{distance oracle} problem for some small stretch values (c.f., Theorem \ref{thm:lower_bound_distance_oracle_small_stretch}) and for arbitrary constant stretch (c.f., Theorem \ref{thm:lower_bound_distance_oracle_large_stretch}). Theorem \ref{thm:lower_bound_distance_oracle_large_stretch} is heavily parametrized, but to sum it up in a simpler way: for any constant stretch $\ell$ we attain a polynomial lower bound of \smash{$\tilOm(n^{1/f(\ell)})$}, that is, $f(\ell)$ is constant as well (roughly \smash{$f(\ell) \approx \frac{3}{2}\ell$}).

For approximate \textit{routing schemes} we have to be more careful and also make a distinction between the stateless and stateful variant (c.f., Definitions \ref{def:stateless_routing}, \ref{def:stateful_routing}). 
The idea is the same as in the exact, unweighted case, however, since a wrong routing decision at the source can still be completed into a routing path of relatively good quality, the best stretch that can be achieved for lower bounds is limited (even more so for stateful routing, where a packet may ``backtrack''). In particular, allowing too much stretch can open up unwanted routing paths that mislead the sources in their conclusions about $X$. We forbid these unwanted routing paths using inequalities parametrized by the stretch and graph weights. Maximizing the stretch subject to these conditions we obtain the lower bounds for stretch values that are given in the second and third group of Table \ref{tbl:results} with details in Theorems \ref{thm:lower_bound_stateless_routing} and \ref{thm:lower_bound_stateful_routing}.

\paragraph{Upper Bounds Summary \& Overview} Our computational upper bounds (Given in Appendix \ref{sec:upper_bounds}) can be expressed more concisely due to the existence of efficient randomized algorithms for shortest path problems in the \hybrid model. In particular we draw on fast solutions for the so called \textit{random sources shortest paths problem} (RSSP) \cite{CensorHillel2021}, where all nodes must learn their distance to a set of i.i.d. randomly sampled nodes, say $S$. After solving RSSP, our strategy is to use the distance between a node $u$ and the nodes in $S$ as its label $\lambda(u)$. 

Roughly speaking, provided that $u$ is sufficiently ``far away'', a node $v$ can combine $\lambda(u)$ with its own distances to $S$ to compute its distance (estimate) to $u$. If $u$ is ``close'' then we can use the local network to compute the distance directly. While this gives us only distance oracles, it is relatively straight forward to also derive routing schemes. Simply speaking, we can always send a packet to a neighbor that has the best distance (estimate) to $u$ (some care must be taken for approximations). Note that this process is oblivious to previous routing decisions so the obtained routing scheme is stateless (c.f., Definition \ref{def:stateless_routing}).

A trade-off arises from the local exploration around nodes and the global computation depending on the size of $S$ (since we solve RSSP on $S$), which balances out to a round complexity of \smash{$\tilO(n^{1/3})$} with \smash{$|S| \in \tilO(n^{2/3})$} (similar trade-offs were observed for shortest paths problems in \cite{Augustine2020, Kuhn2020, CensorHillel2021}). For exact algorithms (distance oracles and routing schemes) we require labels of size \smash{$\Theta(n^{2/3})$} (however we can decrease the label size to \smash{$\Theta(n^{2/3-\zeta})$} at a cost of \smash{$\tilO(n^{1/3+\zeta})$} rounds, c.f., Theorem \ref{thm:exact_algo}). This is tight up to $\polylog n$ factors as is shown by the corresponding lower bound in Table \ref{tbl:results} group 4 line 1 (which holds even on unweighted graphs). 

For smaller labels we show that restricting $\lambda(u)$ to $u$'s closest node in $S$ gives good approximations. We obtain a 3-approximation on weighted graphs and a $(1\!+\!\eps)$ approximation on unweighted graphs in $\tilO(n^{1/3})$ rounds (assuming $\eps>0$ is constant) with labels of size $\bigO(\log n)$ (c.f., Theorem \ref{thm:approx_algo}). Compare this to our lower bounds: even much larger labels of size $\Theta(n^{2/3})$ do \textit{not} help to improve the runtime or the stretch by much, as this still takes $\tilOm(n^{1/3})$ rounds for stretch of $3\!-\!\eps$ for distance oracles on weighted graphs, and stretch $1$ on unweighted graphs (see Table \ref{tbl:results}).

\subsection{Related Work}

There was an early effort to approach hybrid networks from a theoretic angle \cite{Afek1990}, with a conceptually different model.\footnote{Essentially \cite{Afek1990} combines \LOCAL with a global channel where in each round \textit{one} node may broadcast a message, making it much weaker than the \hybrid model.} 
Research on the current take of the \hybrid model was initiated by \cite{Augustine2020} in the context of shortest paths problems, which most of the research has focused on so far. As shortest paths problems problems are closely related, we give a brief account of the recent developments.

\paragraph{Shortest Paths in the Hybrid Model} \cite{Augustine2020} introduced an information dissemination scheme to efficiently broadcast small messages to all nodes in the network. Using this protocol, they derive various solutions for shortest paths problems. For instance, for SSSP:\footnote{In the $k$ sources shortest paths problem ($k$-SSP) all nodes must learn their distance to $k$ dedicated source nodes. Then SSSP \smash{$\stackrel{\text{def}}{=}$} $1$-SSP, APSP \smash{$\stackrel{\text{def}}{=}$} $n$-SSP.} a $(1\!+\!\eps)$ stretch, \smash{$\tilO(n^{1/3})$}-round algorithm and a $(1/\eps)^{\bigO(1/\eps)}$-stretch, \smash{$\tilO(n^{\eps})$}-round algorithm.\footnote{\cite{Augustine2020} also gives an exact, \smash{$\tilO\big(\!\sqrt{\text{SPD}}\big)$}-round SSSP algorithm depending on the shortest path diameter SPD, using a completely different approach.}  Further, an approximation of APSP with stretch 3 in \smash{$\tilO(n^{1/2})$} rounds, which closely matches their corresponding \smash{$\tilOm(n^{1/2})$} lower bound (which holds for much larger stretch).
Subsequently, \cite{Kuhn2020} introduced a protocol for efficient routing of small messages between dedicated source-target pairs in the \hybrid model (not to be confused with routing schemes), which they use to solve APSP and SSSP {exactly} in \smash{$\tilO(n^{1/2})$} and \smash{$\tilO(n^{2/5})$} rounds, respectively. For computing the diameter they provide algorithms (e.g., a $3/2\!+\!\eps$ approximation in \smash{$\tilO(n^{1/3})$} rounds) and a \smash{$\tilOm(n^{1/3})$} lower bound.
\cite{CensorHillel2021} combines the techniques of \cite{Kuhn2020} with a densitiy sensitive approach, to solve $n^{1/3}$-SSP (thus SSSP) exactly and compute a $(1\!+\!\eps)$-approximation of the diameter in \smash{$\tilO(n^{1/3})$} rounds.\footnote{As we reuse some techniques of \cite{Augustine2020, Kuhn2020, CensorHillel2021}, we explain them in a bit more detail in Section \ref{sec:upper_bounds}).} \cite{CensorHillel2021a} uses density awareness in a different way to improve SSSP to \smash{$\tilO(n^{5/17})$} rounds for a small stretch of $(1\!+\!\eps)$. 
\cite{Anagnostides2021} derandomized the dissemination protocol of \cite{Augustine2020} to obtain a deterministic APSP-algorithm with stretch \smash{$\frac{\log n}{\log\log n}$} in \smash{$\tilO(n^{1/2})$} rounds.
For classes of sparse graphs (e.g., cactus graphs) \cite{Feldmann2020} demonstrates that $\polylog n$ solutions are possible even in the harsher hybrid combination \CONGEST and \NCC.

\paragraph{Routing Schemes in Distributed Models} Routing schemes in a hybrid network model have been pioneered in \cite{Coy2021}. Their work can be contrasted to this article in three main ways: First, they consider specific types of local graphs, namely ``hole-free'' grid graphs and unit disc graphs (UDGs) (which have practical relevance in the context of wireless local networks), whereas this work focuses on general graphs. Second, \cite{Coy2021} provides upper bounds based on a routing scheme on a grid-graph abstraction of UDGs, whereas the main technical contribution of this work are lower bounds. Third, we consider hybrid networks with unlimited \LOCAL edges (which strengthens our lower bounds), whereas \cite{Coy2021} gives a $\bigO(\log n)$ round algorithm with labels and local tables of size $\bigO(\log n)$ and stretch 1 and constant stretch on certain grid-graphs and UDGs, respectively, which holds even for the stricter combination of \CONGEST and \NCC. Note that our lower bounds supplement the work of \cite{Coy2021} in that it shows the necessity to narrow the scope (for instance to particular graph classes) in order to achieve $\bigO(\log n)$ round algorithms with small labels and stretch.

A related line of work investigates the round complexity of implementing routing schemes and distance oracles in the \CONGEST model, where the challenge is that only small local messages can be used. Here, a \smash{$\tilOm(n^{1/2} \!+\! D_G)$} lower bound for computing routing schemes is implied by \cite{Sarma2012} (even for polynomial labels, whereas $D_G$ is the diameter of $G$). For distance oracles in \CONGEST, \cite{Izumi2014} gives a lower bound of \smash{$\Omega(n^{\frac{1}{2}+\frac{1}{5k}})$} with stretch $2k$ for graphs with small diameter assuming Erd\H{o}s' girth conjecture (c.f., Conjecture \ref{con:erdoes_girth}). Their bound holds for small labels without this assumption.
A line of papers narrows the gap to these lower bounds \cite{Lenzen2013a, Lenzen2015, Elkin2016}. For instance \cite{Elkin2016} achieves a routing scheme with stretch $\bigO(k)$, routing tables of size $\tilO(n^{1/k})$, labels of size $\tilO(k)$ in $\bigO\big(n^{1/2 + 1/k} \!+\! D_G\big) \cdot n^{o(1)}$ rounds.

\paragraph{Routing Schemes as Distributed Data Structure} In this branch of research the goal is often to optimize the trade-off between stretch and the size of the local routing tables. It is well established that $\tilO(n)$ bit of memory per node always suffices for routing schemes\footnote{As $\tilO(n)$ bit of memory allows each node to store the next node for every destination.} \cite{Peleg1989}. For routing schemes on general graphs with small stretch ($<\!3$) $\Omega(n)$ bit are required \cite{Peleg1989, Thorup2001}.\footnote{With $o(n)$ local memory a $\Omega(n)$-degree node has to ``forget'' neighbors, necessarily generating a detour when routing to the neighbors that have been forgotten. This lower bound is part of the reason why we do not analyze local memory in this work.} More generally, for a stretch smaller than $2k\!+\!1$ it is known that local tables of size $\Omega({n^{1/k}})$ bit are required for small values $k \in \{1,2,3,5\}$, see \cite{Thorup2001}. This is true for all parameters of $k$ if one believes Erd\H{o}s' girth conjecture (see Conjecture \ref{con:erdoes_girth}) \cite{Thorup2001}. The reliance on this conjecture can be dropped for weaker \textit{name-independent} routing schemes, where nodes can \textit{not} be re-labeled and \cite{Abraham2006} shows that this requires \smash{$\Omega({(n\log n)^{1/k}})$} memory for any stretch smaller than $2k\!+\!1$. 
On the positive side, \cite{Thorup2001} shows that a stretch of $\bigO(k)$ with \smash{$\tilO({kn^{1/k}})$} memory can be achieved. The concrete stretch is $4k\!-\!5$ or even $2k\!-\!1$ (i.e., almost optimal) if \textit{handshaking} is allowed (an initial exchange of a message between source and destination before a packet is routed). The lower bound for name independent routing has been contrasted with a scheme that achieves $O(k)$ stretch with memory size \smash{$\tilO(k^2n^{1/k})$} \cite{Abraham2006a}.

\subsection{Preliminaries}
\label{sec:preliminaries}

\paragraph{General Definitions} The scope of this paper is solving graph problems, typically in the undirected communication graph. Let $G = (V,E)$ be undirected. Edges have {weights} $w: E \to [W]$, where $W$ is at most polynomial in $n$, thus the weight of an edge and of a simple path fits into a $\bigO(\log n)$ bit message.\footnote{In this article, $\log$ functions are always to the base of 2.} A graph is considered {unweighted} if $W=1$. Let $w(P) = \sum_{e \in P}w(e)$ denote the length of a path $P \subseteq E$. Then the \emph{distance} between two nodes $u,v \in V$ is
\lng{\[
d_G(u,v) := \!\min_{\text{$u$-$v$-path } P} w(P).
\]}
\shrt{$d_G(u,v) := \!\min_{\text{$u$-$v$-path } P} w(P).$}
A path with smallest length between two nodes is called a \emph{shortest path}.
Let $|P|$ be the number of edges (or \emph{hops}) of a path $P$.
The \emph{hop-distance} between two nodes $u$ and $v$ is defined as: 
\lng{\[
\hop_G(u,v) := \!\min_{\text{$u$-$v$-path } P} |P|.
\vspace*{-1mm}
\] }
\shrt{$\hop_G(u,v) := \!\min_{\text{$u$-$v$-path } P} |P|.$}
We generalize this for sets $U,W \subseteq V$ (whereas $\hop_G(v,v) := 0$):
\lng{\[
\hop_G(U,W) := \!\min_{u \in U, w \in W} \hop_G(u,w).
\vspace*{-1mm}
\] }
\shrt{$\hop_G(U,W) := \!\min_{u \in U, w \in W} \hop_G(u,w).$}
The \emph{diameter} of $G$ is defined as: 
\lng{\[
D_G:= \max_{u,v \in V} \hop_G(u,v).
\]}
\shrt{$D_G:= \max_{u,v \in V} \hop_G(u,v).$}
Let the \emph{$h$-hop distance} from $u$ to $v$ be: 
\lng{\[
d_{G,h}(u,v) := \!\!\min_{{\text{$u$-$v$-path } P, |P| \leq h }}\, w(P).
\]}
\shrt{$d_{G,h}(u,v) := \!\!\min_{{\text{$u$-$v$-path } P, |P| \leq h }}\, w(P).$}
If there is no $u$-$v$ path $P$ with $|P|\leq h$ we define $d_{h}(u,v) := \infty$. We drop the subscript $G$, when $G$ is clear from the context. In this paper we consider the following problem types:

\begin{definition}[Distance Oracles]
	\label{def:distance_oracles}
	Every node $v \in V$ of a graph $G= (V,E)$ needs to compute a \textit{label} $\lambda(v)$ and an \textit{oracle function} $o_v : \lambda(V) \to \mathbb{N}$, such that $o_v(\lambda(u)) \geq d(u,v)$ for all $u \in V$.
	An oracle function $o_v$ is an $(\alpha, \beta)$-approximation if $o_v(\lambda(u)) \leq \alpha \cdot d(u,v) + \beta$ for all $u,v \in V$, that is, $\alpha, \beta$ are the \textit{multiplicative} and \textit{additive} approximation error, respectively. We speak of a \textit{stretch} of $\alpha$ in case of an $(\alpha, 0)$-approximation. If the stretch is one, we call $o_v$ \textit{exact}.
\end{definition}

\begin{definition}[Stateless Routing Scheme] 
	\label{def:stateless_routing}
	Every node $v \in V$ of a graph $G= (V,E)$ needs to learn a \textit{label} $\lambda(v)$ and a \textit{routing function} (sometimes called ``table'') $\rho_v : \lambda(V) \to N(v)\cup \{v\}$ where $N(v)$ are adjacent nodes of $v$ in $G$ (whereas we formally set $\rho_v(\lambda(v)) := v$). The functions $\rho_v$ must fulfill the following correctness condition. 
	Let $v_0 := v$ and recursively define \smash{$v_i := \rho_{v_{i-1}}(\lambda(u))$}. 
	Then the routing functions $\rho_v, v \in V$ must satisfy $v_h = u$ for some $h \in \mathbb{N}$.
	Let $P_\rho(u,v)$ be the path induced by the visited nodes $v_0, \dots, v_h$. We call $\rho$ an $(\alpha, \beta)$-approximation if $w(P_\rho(u,v)) \leq \alpha d(u,v) + \beta$ for all $u,v \in V$.
\end{definition}

\begin{definition}[Stateful Routing Scheme] 
	\label{def:stateful_routing}
	This is mostly defined as in the stateless case, with the difference that the routing function $\rho_v$ can additionally depend on the information gathered along the path that has already been visited by a packet (which would be stored in its header). Note that in this means that the routing path defined by such a function $\rho$ is not necessarily simple (i.e., might have loops).
\end{definition}

\begin{definition}[Randomized Graph Algorithms] 
	\label{def:rand_algo}
	We say that an algorithm has success probability $p$, if it succeeds with probability at least $p$ on every possible input graph (however, some of our results are restricted to unweighted graphs as input). Specifically, for our upper bounds we aim for success \textit{with high probability} (w.h.p.), which means with success probability at least $1-\frac{1}{n^c}$ for any constant $c>0$.
\end{definition}



\lng{\section{Upper Bounds} 
\label{sec:upper_bounds}

\lng{The first part of this paper is to} \shrt{In this part of the paper we} derive algorithms that compute routing schemes and distance oracles in the \hybrid model\lng{, which we consider as a warm-up and complementary to the subsequent section on lower bounds}.
We can draw on the techniques and fast algorithms for shortest paths problems from \cite{Augustine2020, Kuhn2020, CensorHillel2021} (where most of the heavy lifting occurs) and show how to leverage these to obtain distance oracles and routing schemes efficiently in the \hybrid model. We start with a quick introduction to the techniques we use.

{\subsection{Techniques}
	
	Skeleton graphs were first used by \cite{Ullman1991} and became one of the main tools used in the context of shortest path algorithms in the \hybrid model (c.f., \cite{Augustine2020,Kuhn2020,CensorHillel2021}). Simply speaking, a skeleton is a minor of a graph $G$ consisting of a (usually relatively small) set of nodes sampled with some probability $\frac{1}{x}$ and virtual edges formed between sampled nodes at most $\tilO(x)$ hops apart with weights such that distances in the skeleton graph correspond to those in $G$ w.h.p. 
	
	The usual approach is to solve a given shortest path problem on an appropriately sized skeleton graph by leveraging the global communication provided by the \hybrid model and then extend that solution to the whole graph using local communication. While at this point it is not necessary to fully characterize skeleton graphs anymore (we use existing algorithms out of the box), we still have to use the subsequent property for a random sampling of nodes (from which all required properties of a skeleton graph are derived).
	
	Assume each node joins some set $S \subseteq V$ independently and identically distributed (i.i.d.) with probability $1/x$ for some $x > 1$. The expected number of nodes joining $S$ is $\mathbb E(|S|) = n/x$, and we also have $|S| \in \Theta(n/x)$ w.h.p.\ if $\mathbb E(|S|)$ is sufficiently large (which a simple application of the Chernoff bound given in Lemma \ref{lem:chernoffbound} shows).
	Furthermore, there exists some $h \in \tilO(x)$ such that for any $u,v \in V$ there will be a sampled node on some shortest $u$-$v$-path $P$ at least every $h$ hops for any $u,v \in V$ w.h.p. This is formalized as follows (the proof can be found at the end of Appendix \ref{apx:generalnotations}). 
	
	\begin{lemma}[c.f., \cite{Kuhn2020}, \cite{Augustine2020}]
		\label{lem:node_sampling}
		Let $G=(V,E)$. Let each node join some set $S\subseteq V$ i.i.d.\ with probability $\frac{1}{x}$. Then there is a constant $\xi \!>\! 0$, such that for any $u,v \!\in\! V$ with $hop(u,v) \!\geq\! h := \xi x \ln n$, there is at least one shortest path $P$ from $u$ to $v$, such that any sub-path $Q$ of $P$ with at least $h$ nodes contains a node in $S$ w.h.p.
	\end{lemma}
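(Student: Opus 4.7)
The plan is to fix, for each ordered pair $(u,v)$ with $\hop(u,v)\ge h$, one canonical shortest $u$-$v$-path $P=P(u,v)$ (say, the lexicographically smallest one according to node IDs). This eliminates the ``there exists'' quantifier: I will show that the stronger statement that the canonical path itself has the desired property holds with high probability, from which the lemma follows.

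Next I reduce ``every sub-path of at least $h$ nodes contains a sampled node'' to the simpler check that every \emph{window} of exactly $h$ consecutive nodes on $P$ contains a node of $S$. This is immediate: any sub-path of $P$ with at least $h$ nodes fully contains some such window. For a single window $Q$ of $h$ nodes, the events ``node is in $S$'' are independent across the distinct nodes of $Q$, so
\[
\Pr[Q \cap S = \emptyset] \;=\; \Bigl(1-\tfrac{1}{x}\Bigr)^{h} \;\le\; e^{-h/x} \;=\; e^{-\xi \ln n} \;=\; n^{-\xi}.
\]

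Now I apply a union bound in two stages. First, across the at most $|P| \le n$ windows along the fixed path $P(u,v)$, the probability that \emph{some} window avoids $S$ is at most $n \cdot n^{-\xi} = n^{1-\xi}$. Second, across the at most $n^2$ ordered pairs $(u,v)$, the total failure probability is at most $n^{3-\xi}$. Choosing the constant $\xi$ large enough (specifically $\xi \ge c+3$ for the desired high-probability exponent $c$) makes this polynomially small, establishing the claim simultaneously for all eligible pairs.

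The argument is genuinely routine once the right canonical choice is made; there is no serious obstacle. The only subtlety worth flagging is the order of quantifiers: we must commit to one shortest path \emph{before} drawing $S$, because otherwise the ``there exists'' over shortest paths would be chosen adversarially against $S$ and the independence argument would break. Fixing a canonical path per pair side-steps this cleanly and in fact yields a stronger statement than the lemma requires.
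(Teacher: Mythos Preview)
Your proof is correct and follows essentially the same approach as the paper: fix one canonical shortest path per pair, bound the probability that a single length-$h$ window misses $S$, then union-bound over at most $n$ windows and $n^2$ pairs. The only cosmetic difference is that the paper invokes a Chernoff bound (yielding $\xi \geq 8c$) whereas you use the more elementary direct estimate $(1-1/x)^h \le n^{-\xi}$ (yielding $\xi \geq c+3$); both work.
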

	
	We are interested in the distances to the sampled nodes, which can be formalize as follows.
	
	\begin{definition}[Random Sources Shortest Paths (RSSP)]	
		Given a subset of nodes (sources) that were sampled i.i.d.\ with probability $1/x$ for some $x \geq 1$ from an undirected, weighted graph $G$. The RSSP problem is solved when every node in the network has learned its distance to each of the random source nodes. 
	\end{definition}
	
	We provide a rough overview how RSSP was solved (simplified for \smash{$x = n^{1/3}$}) for readers unfamiliar with the topic. First, a skeleton graph of $G$ is constructed on the \smash{$\Theta(n^{2/3})$} sampled nodes as in \cite{Augustine2020}. Note that edges of that skeleton are at most \smash{$\tilT(n^{1/3})$} hops apart, which means that one round of the \LOCAL model can be simulated on that skeleton graph in \smash{$\tilT(n^{1/3})$} real rounds. 
	Second, \cite{Kuhn2020} showed that a round of the \CC model\footnote{In the \CC model, in each round each node is allowed to send a (different) $\bigO(\log n)$ message to every node.} can also be simulated in \smash{$\tilT(n^{1/3})$} rounds on the skeleton graph using a routing protocol tailored to the \hybrid model to efficiently communicate $\bigO(\log n)$ bit messages between pairs of senders and receivers.
	
	Third, \cite{CensorHillel2021} observed that skeleton nodes with a high degree in the skeleton graph can learn a lot of information by relying on the bandwidth of their neighbors in the skeleton. With this observation they use \smash{$\tilO(1)$} simulated \LOCAL and \CC rounds (of \smash{$\tilT(n^{1/3})$} real rounds each) to compute ``tiered oracles'' (not to be confused with distance oracles in this paper) meaning that each node with a degree in a certain ``tier'' (exponential degree class \smash{$\{2^{i-1}\!\!\!, \dots, 2^{i}\!-\!1\}$ with $i \in [\lceil\log n\rceil]$}) learns the subgraph of the skeleton graph induced by nodes of degree in its own tier or below.
	
	Iteratively, one can solve the all pairs shortest paths problem on the skeleton. Initially, the nodes of the highest tier know the whole skeleton graph and can send their distance to every node in one \CC round. This information enables the nodes in the next lower tier to compute their distance to every node in the skeleton. Then the process repeats (for each of the $\tilO(1)$ tiers) until eventually all pairs shortest paths is solved on the skeleton. All other nodes in $G$ can learn their distance to each skeleton node from those within $\tilO(n^{1/3})$ hops. The following lemma summarizes this.
	
	\begin{lemma}[c.f., \cite{CensorHillel2021}\footnote{For our applications, it is convenient to rephrase the result of \cite{CensorHillel2021}, which considers sampling probabilities $n^{y-1}$ for $0<y<1$ with running time \smash{$\tilO(n^{1/3}+n^{2y-1})$}. This corresponds to the variant given here via the substitution \smash{$y = 1-\frac{\log x}{\log n}$}.}]
		\label{lem:rssp}
		There is an algorithm that solves the random sources shortest path problem for sampling probability $1/x$ for some $x \geq 1$ exactly and w.h.p.\ in \smash{$\tilO(n^{1/3}+n/x^2)$} rounds.
	\end{lemma}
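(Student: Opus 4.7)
The plan is to follow the skeleton-graph template that has been refined through \cite{Augustine2020, Kuhn2020, CensorHillel2021}, adapted to the sampling parameter $x$. First I would invoke Lemma \ref{lem:node_sampling} to fix an $h \in \tilO(x)$ such that, when each node joins $S$ i.i.d.\ with probability $1/x$, we have $|S| \in \Theta(n/x)$ w.h.p.\ (Chernoff) and every shortest path in $G$ meets $S$ at least every $h$ hops w.h.p. The skeleton graph $G_S=(S,E_S)$ is then built by placing an edge of weight $d_{G,h}(u,v)$ between every pair $u,v\in S$ with $\hop_G(u,v)\le h$. Every $u\in S$ learns its incident skeleton edges via $h$-hop local exploration in $\tilO(x)$ real rounds, and Lemma \ref{lem:node_sampling} guarantees $d_{G_S}(u,v)=d_G(u,v)$ for all $u,v\in S$ w.h.p.

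Next I would set up two round-simulation primitives on $G_S$. A \LOCAL round on $G_S$ is emulated in $\tilO(x)$ real rounds by sending the message for each skeleton edge along the corresponding $h$-hop path in the local mode. A \CC round on $G_S$ (each skeleton node sending $O(\log n)$ bits to each other) is emulated using the point-to-point hybrid routing protocol of \cite{Kuhn2020}: since the aggregate global bandwidth is $\tilO(n)$ bits per round and each simulated round must carry $\tilO(|S|^2)=\tilO(n^2/x^2)$ bits, while also paying the $\tilO(x)$ overhead of any local leg, one \CC round on $G_S$ is emulated in $\tilO(n^{1/3}+n/x^2)$ real rounds.

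With the two primitives in place, I would apply the tiered-oracle scheme of \cite{CensorHillel2021}. Skeleton nodes are partitioned into $\tilO(1)$ exponential degree tiers; a constant number of simulated \LOCAL and \CC rounds suffices to let every tier learn the subgraph of $G_S$ induced by nodes of its own and lower tiers, leveraging the aggregate bandwidth of a high-degree skeleton node's many lower-degree neighbors. Starting from the top tier, whose members already see all of $G_S$ and can compute shortest paths locally, distances cascade down the tiers: each tier receives the next-higher tier's distances in a simulated \CC round and extends them using its own tier-limited view. After $\tilO(1)$ simulated rounds, every $s\in S$ knows $d_G(s,s')$ for every $s'\in S$. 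Finally each non-skeleton node $v$ combines an $h$-hop Bellman--Ford relaxation (another $\tilO(x)$ local rounds) with the source-distances held by the nearby sampled nodes to recover $d_G(v,s)$ for every $s\in S$. Summing the phases yields the claimed $\tilO(n^{1/3}+n/x^2)$ bound.

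The hard part is the emulation cost of the simulated \CC rounds: a naive approach would charge $\tilO(|S|)$ global rounds per simulated round, wiping out the density-sensitive savings and breaking the $n/x^2$ term. Controlling this requires the delegation argument at the heart of \cite{CensorHillel2021}, where a tier's high-degree skeleton nodes offload communication to their lower-degree neighbors so that no node is a bottleneck; once this is in place, the tiered oracle runs in a constant number of simulated rounds and the stated complexity follows.
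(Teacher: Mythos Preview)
The paper does not actually prove this lemma; it is quoted from \cite{CensorHillel2021} (with the reparametrisation explained in the footnote), and the surrounding text only offers an informal overview of the argument in the special case $x=n^{1/3}$. Your sketch follows that overview's template, so at the level of ``which ingredients are used'' you are aligned with what the paper presents.

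There is, however, a real gap in your extension to general $x$. You identify the skeleton node set with the random source set $S$ sampled at rate $1/x$, so skeleton edges span $h\in\tilO(x)$ hops. Then the initial construction of $G_S$, every simulated \LOCAL round, and the ``local leg'' you mention in the \CC emulation all cost $\tilO(x)$ real rounds. For $x>n^{1/3}$ this alone already exceeds the claimed bound $\tilO(n^{1/3}+n/x^2)=\tilO(n^{1/3})$. The sentence where you pass from ``$\tilO(x)$ overhead of any local leg'' directly to ``$\tilO(n^{1/3}+n/x^2)$'' is exactly where this is swept under the rug: the $n^{1/3}$ term has no source in your accounting unless $x=n^{1/3}$. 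In \cite{CensorHillel2021} the $n^{1/3}$ term is not tied to the source sampling rate; the skeleton used for the tiered APSP has hop radius $\tilO(n^{1/3})$ regardless of $x$, and the $n/x^2$ term arises from a separate dissemination phase for the $\Theta(n/x)$ sources on top of that fixed skeleton, not from simulating \CC on a skeleton of size $n/x$. Your final paragraph correctly flags the \CC emulation as the hard step, but the fix is structural (decouple skeleton from sources), not just a more careful delegation argument within your current setup.
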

	
	This result is almost tight (up to $\polylog(n)$ factors) for $x = n^{1/3}$ (i.e., there are $|S| = \Theta(n^{2/3})$ sources) due to a corresponding lower bound by \cite{CensorHillel2021} which was slightly adapted for random sources from the $\tilOm(\!\sqrt{k})$ lower bound for the $k$-sources shortest path problem given in \cite{Kuhn2020}.}

\subsection{Base Algorithm}

{We combine the previous two techniques to compute exact distance oracles and routing schemes in the \hybrid model.} We will presume a subroutine \code{explore($h$)} that floods all graph information for $h$ hops such that afterwards each node $u \in V$ knows $\calB_{u,h}$ defined as the subgraph induced by the nodes within $h$ hops around $u$. Furthermore we presume a subroutine \code{solve-rssp} that solves RSSP on the sampled nodes in accordance with Lemma \ref{lem:rssp}. 

Algorithm \ref{alg:base_algo} gives an overview. The parameters $x$ and $h$ are tuning parameters which will be specified for the task at hand. Algorithm \ref{alg:base_algo} consists of two parts: first all nodes collect the information required to compute distance oracles and routing schemes. Second, in the last three lines, the computation of the label $\lambda(v)$, oracle function $o_v$ and routing function $\rho_v$ takes place (c.f., Definitions \ref{def:distance_oracles}, \ref{def:stateful_routing}). We will override these sub-procedures to adapt the algorithm for exact and approximate results and for the special case of unweighted graphs. The steps are described on a node-level.

\begin{algorithm}[H]
	\caption{\code{base-algorithm($x,h$)}\Comment{\textit{$h$ not smaller as in Lem.\ \ref{lem:node_sampling}}}}
	\label{alg:base_algo}
	\begin{algorithmic}
		\State $v$ joins $S$ with probability $\frac{1}{x}$
		\State \code{explore($h$)} \Comment{\textit{$v$ learns $\calB_{v,h}$}}	
		\State \code{solve-rssp} \Comment{\textit{with source nodes $S$}}
		\State \code{create-label} \Comment{\textit{computation of $v$'s problem-specific label}}
		\State \code{create-oracle-function} \Comment{\textit{output  $o_v$ as in Def.\ \ref{def:distance_oracles}}}
		\State \code{create-routing-function} \Comment{\textit{output $\rho_v$ as in Def.\ \ref{def:stateless_routing}}}
	\end{algorithmic}
\end{algorithm}

\begin{lemma}
	\label{lem:base_algo_runtime}
	Given that the last three lines have constant running time, Algorithm \ref{alg:base_algo} takes $\tilO(n^{1/3}+n/x^2 + h)$ rounds.
\end{lemma}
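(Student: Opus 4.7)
The plan is to simply sum up the round complexities of the individual steps of Algorithm \ref{alg:base_algo}, treating each as a separate phase and using the round bounds of the subroutines that have already been established in the preceding preliminaries.

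First I would handle the sampling step: each node $v$ independently decides to join $S$ with probability $1/x$ using only private randomness, which contributes no communication rounds. Next, for \code{explore($h$)}, I would argue that in $h$ rounds of local communication every node $u$ can learn the subgraph $\calB_{u,h}$: since the local mode of \hybrid permits unbounded message sizes, a standard flooding argument shows that in round $i$ every node can forward to its neighbors the complete local view it has gathered so far, so after $h$ rounds each node knows all nodes and edges within $h$ hops. This contributes $h$ rounds. Then \code{solve-rssp} takes $\tilO(n^{1/3}+n/x^2)$ rounds w.h.p. by Lemma \ref{lem:rssp}, which is directly applicable since $S$ was sampled i.i.d.\ with probability $1/x$ in the first line. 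Finally, by hypothesis of the lemma, the three output lines \code{create-label}, \code{create-oracle-function}, \code{create-routing-function} take $O(1)$ rounds.

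Summing these contributions yields a total round complexity of $0 + h + \tilO(n^{1/3}+n/x^2) + O(1) = \tilO(n^{1/3}+n/x^2+h)$, as claimed. Since the round bound of \code{solve-rssp} holds w.h.p., the total bound likewise holds w.h.p.

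I do not anticipate a serious obstacle here; the statement is essentially bookkeeping over the round complexities of the subroutines, and the only mild subtlety is ensuring that the local flooding in \code{explore($h$)} really does fit in $h$ rounds, which relies crucially on the unbounded local bandwidth of the standard \hybrid model (i.e., $\lambda=\infty$). In the more general \hybridpar{\lambda}{\gamma} model this step could become a bottleneck, but since we work with unbounded $\lambda$ throughout the upper bound section, the argument goes through directly.
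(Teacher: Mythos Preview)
Your proposal is correct and follows essentially the same approach as the paper: both arguments simply sum the round costs of the individual steps, noting that \code{explore($h$)} contributes $h$ rounds of local flooding and \code{solve-rssp} contributes $\tilO(n^{1/3}+n/x^2)$ rounds by Lemma~\ref{lem:rssp}. Your version is slightly more explicit (spelling out why flooding fits in $h$ rounds under unbounded local bandwidth and that sampling costs no communication), but there is no substantive difference.
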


\begin{proof}
	The time consuming steps are \code{explore($h$)}, i.e., the flooding of the graph for $h \in \tilOm(x)$ rounds and solving the RSSP with $\tilO(n^{1/3}+n/x^2)$ (c.f., Lemma \ref{lem:rssp}).
\end{proof}

\subsection{Exact Distance Oracles \& Routing Schemes}

First, \code{create-label} computes $v$'s label $\lambda(v):=\big\{\big(\text{ID}(s),d(s,v)\big)\mid s \in S\big\}$ corresponding to the identifiers of all nodes in $S$ and the associated distances that are known from solving the RSSP problem. 
Second, for a given label $\lambda(u)$ of some destination $u \in V$ \code{create-oracle-function} outputs the following function
\begin{equation}
\label{eq:exact_oracle_function}
o_v(\lambda(u)) = \min\Big(d_h(v,u), \,\min_{s\in S} d(v,s) + d(s,u)\Big)
\end{equation}
Note that in the above equation we set $d(v,v)=0$ and $d_h(v,u) = \infty$ for $u \notin \calB_{v,h}$. The $h$-hop distances $d_h(v,u)$ are known since $v$ knowns $\calB_{v,h}$, the distances $d(v,s), s \in S$ are known from solving the RSSP problem on $S$ and the distances between pairs $d(s,u), u \in S$ are part of the label $\lambda(u)$, thus $o_v(\lambda(u))$ can be computed by $v$.

Exchanging the computed distance oracles between neighbors  problem gives us sufficient information to solve the routing problem. Roughly speaking, the next node on a shortest path is given by a neighbor that minimizes the distance to the destination of the packet (a little bit of care has to be taken for proving this, though).

More precisely, the subroutine \code{create-routing-function} will first do a single round of communication so that each node $v$ learns the oracle function $o_w$ of each of its neighbors $w$ in the local network. Then we use this knowledge to pass the packet only to a neighbor whose distance oracle to the target decreases by the weight of that edge (pick an arbitrary one if there is more than one).
\begin{align}
\label{eq:exact_routing_function}
& \rho_v(\lambda(u)) \in \big\{z \in N(v) \mid o_z(\lambda(u)) = o_v(\lambda(u)) - w(v,z) \big\}
\end{align}
Note that additional knowledge of $o_z(\lambda(u))$ for $z \in N(v)$ is sufficient to compute $\rho_v(\lambda(u))$.
It remains to analyze the resulting algorithm and give the remaining correctness arguments, which we do in the proof of the following theorem. Note that here we obtain a trade off between label size and the running time, which we formalize with a parameter $\zeta$.

\begin{theorem}
	\label{thm:exact_algo}
	For any $\zeta \geq 0$ \emph{exact} distance oracles and \emph{stateless} routing schemes with labels of size \smash{$\bigO(n^{\frac{2}{3}-\zeta})$} bits can be computed in \smash{$\tilO(n^{\frac{1}{3}+\zeta})$} rounds in the \hybrid model w.h.p.
\end{theorem}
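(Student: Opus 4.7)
The plan is to invoke Algorithm~\ref{alg:base_algo} with the parameter choice $x := \Theta(n^{1/3+\zeta}\log n)$ and corresponding $h := \xi x \ln n \in \tilO(n^{1/3+\zeta})$ as dictated by Lemma~\ref{lem:node_sampling}. I would first justify the claimed label size: a Chernoff bound yields $|S| \in \bigO(n^{2/3-\zeta}/\log n)$ w.h.p., and since IDs take $\lceil\log n\rceil$ bits and distances are polynomial in $n$ (weights lie in $[W]$ with $W$ polynomially bounded), the label $\lambda(v) = \{(\text{ID}(s),d(s,v)) : s \in S\}$ fits in $\bigO(n^{2/3-\zeta})$ bits. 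Plugging into Lemma~\ref{lem:base_algo_runtime}, the round complexity becomes $\tilO(n^{1/3} + n/x^2 + h) = \tilO(n^{1/3} + n^{1/3-2\zeta} + n^{1/3+\zeta}) = \tilO(n^{1/3+\zeta})$, provided I verify that each of the three \code{create-}subroutines completes in $\bigO(1)$ rounds — the first two require no communication, and the third needs only a single \LOCAL round in which every $v$ transmits $o_v$ to its neighbours before evaluating $\rho_v$.

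Next I would establish that $o_v$ as defined in Equation~\eqref{eq:exact_oracle_function} is exact. The inequality $o_v(\lambda(u)) \geq d(v,u)$ is immediate: $d_h(v,u) \geq d(v,u)$ trivially, and $d(v,s)+d(s,u) \geq d(v,u)$ by the triangle inequality for every $s \in S$. For tightness I split on hop-distance. If $\hop(v,u) < h$, then $\calB_{v,h}$ contains a full shortest $v$-$u$-path, so $d_h(v,u) = d(v,u)$. Otherwise, Lemma~\ref{lem:node_sampling} guarantees w.h.p.\ the existence of a shortest $v$-$u$-path that hosts a sampled $s \in S$ within its first $h$ hops; because $s$ lies on a shortest path, $d(v,s)+d(s,u)=d(v,u)$, so the $\min$ in~\eqref{eq:exact_oracle_function} attains $d(v,u)$.

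For the routing scheme I would first note that $\rho_v(\lambda(u))$ is computed purely from $v$'s local state (including the oracles $o_z$ of its neighbours), so the scheme qualifies as stateless per Definition~\ref{def:stateless_routing}. Non-emptiness of the defining set in Equation~\eqref{eq:exact_routing_function} for $v \neq u$ follows from exactness: letting $z$ be the next hop on any shortest $v$-$u$-path gives $o_z(\lambda(u)) = d(z,u) = d(v,u) - w(v,z) = o_v(\lambda(u)) - w(v,z)$. Iterating along the induced sequence $v_0 = v, v_1, \ldots$ preserves the invariant $o_{v_i}(\lambda(u)) = d(v_i,u)$; since weights are positive integers this value strictly decreases by at least one each step and therefore reaches $0$ (i.e.\ $v_h = u$) after finitely many hops, with the total path weight equalling $d(v,u)$, yielding stretch~$1$.

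The main obstacle I anticipate is not in any individual step but rather in the parameter balancing of the first paragraph: ensuring that the polylogarithmic slack introduced by the Chernoff bound, by Lemma~\ref{lem:node_sampling}, and by the RSSP subroutine does not inflate the label past $\bigO(n^{2/3-\zeta})$ bits nor push the runtime past $\tilO(n^{1/3+\zeta})$. Including the $\log n$ prefactor in the choice of $x$ is the clean way to absorb all such losses simultaneously; everything else in the argument is essentially bookkeeping over the machinery already provided by Lemmas~\ref{lem:node_sampling}, \ref{lem:base_algo_runtime}, and \ref{lem:rssp}.
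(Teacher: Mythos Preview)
Your proposal is correct and follows essentially the same approach as the paper: invoke Algorithm~\ref{alg:base_algo} with $x \in \tilT(n^{1/3+\zeta})$, bound the label size via a Chernoff bound on $|S|$, derive the runtime from Lemma~\ref{lem:base_algo_runtime}, and verify exactness of $o_v$ and $\rho_v$ by the same hop-distance case split using Lemma~\ref{lem:node_sampling}. The only cosmetic difference is in absorbing the extra $\log n$ factor in the label size: the paper first obtains labels of $\bigO(n^{2/3-\zeta}\log n)$ bits and then substitutes $\zeta = \zeta' + \tfrac{\log\log n}{\log n}$ to move the $\log$ into the $\tilO$ runtime, whereas you preemptively scale $x$ by a $\log n$ factor so that $|S| \in \bigO(n^{2/3-\zeta}/\log n)$ and the label bound comes out clean directly --- both tricks are equivalent.
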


\begin{proof}
	Note that for $\zeta \geq 2/3$ the problem becomes trivial (any graph problem can be solved in $n$ rounds in \hybrid), so we assume $\zeta < 2/3$. We choose \smash{$x = n^{\frac{1}{3}+\zeta}$} and $h = \tilT(x)$ as in Lemma \ref{lem:node_sampling}, then the runtime follows from Lemma \ref{lem:base_algo_runtime}. The number of sampled nodes is \smash{$|S| \in \Theta(n^{\frac{2}{3}-\zeta})$} (a simple application of Lemma \ref{lem:chernoffbound}). As the label $\lambda(v)$ contains information of size $\bigO(\log n)$ bits for each node $s \in S$ (recall that distances are polynomial in $n$), $\lambda(v)$ requires \smash{$\bigO(n^{\frac{2}{3}-\zeta}\log n)$} bits. We can shift the $\log n$ factor from the label size into the runtime (where it is absorbed by the $\tilO$ notation) with a substitution \smash{$\zeta= \zeta' + \frac{\log\log n}{\log n}$}.
	
	We already established in the algorithm description that the required information to compute $\lambda(v)$, $o_v$ and $\rho_v$ is present at the node $v$ from executing \code{explore($h$)} and \code{solve-rssp}. 		
	It remains to show that $o_v(\lambda(u)) = d(v,u)$. If there is a shortest $v$-$u$-path with at most $h$ hops, then the first argument of the outer $\min$ function in Equation \eqref{eq:exact_oracle_function} corresponds to $d(v,u)$. If all shortest $v$-$u$-paths have more than $h$ hops then there must be a node $s \in S$ on one such path by Lemma \ref{lem:node_sampling}, which means that the second argument of the outer $\min$ function in Equation \eqref{eq:exact_oracle_function} corresponds to $d(v,u)$.
	
	Regarding the routing function, we observe that $z \in N(v)$ is the next node on \emph{some} shortest path from $v$ to $u$ if and only if $d(v,u) = w(v,z) + d(z,u)$, which is equivalent to $o_z(u) = o_v(u) - w(v,z)$ as $o_v,o_z$ are exact distance oracles. Since $G$ is connected, there is always a neighbor of $v$ on some shortest path to $u$ so the set in Equation \eqref{eq:exact_routing_function} is not empty.	
	By always choosing the next node on \emph{some} shortest path from the current node to $u$ (c.f., Equation \eqref{eq:exact_routing_function}), we will naturally obtain a shortest $v$-$u$-path overall. Note that the obtained routing scheme is oblivious to previous routing decisions.
\end{proof}

\subsection{Approximate Solutions}
\label{sec:upper_bounds_approx}

Note that if we do not want to compromise on the running time the label size in Theorem \ref{thm:exact_algo} is $\tilOm(n^{2/3})$. However, we can compromise on the stretch in order to significantly decrease the label size as we show in the following.

Subroutine \code{create-label} of $v$ computes the much smaller label $\lambda(v):=\big(\text{ID}(s_v),d(s_v,v)\big)$, where $s_v := \argmin_{s \in S} d(s,v)$ corresponds to the identifier and distance of the \textit{closest} node in $S$.

Given some label $\lambda(u):=\big(\text{ID}(s_u) ,d(s_u,v)\big)$ then node $v$ outputs the following function in subroutine \code{create-oracle-function}, which is the shorter of the $h$-hop distance to $u$ and the shortest $v$-$u$-path that transits $s_u$.
\begin{equation}
\label{eq:approx_oracle_function}
o_v(\lambda(u)) = \min\big(d_{h}(v,u), \,d(v,s_u) + d(s_u,u)\big).
\end{equation}

We start by showing that the oracle function $o_v$ from above gives a decent approximation (with proofs similar to those for approximations for the single source shortest paths problem in \cite{Augustine2020}).

\begin{lemma}
	\label{lem:algo-approx-weighted}
	For $u,v \in V$ it is $o_v(u) \leq 3d(v,u)$ w.h.p. (for $o_v$ from Eq.\ \eqref{eq:approx_oracle_function}).
\end{lemma}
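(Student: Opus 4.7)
The plan is to split on whether the hop-distance between $v$ and $u$ is small or large, and use that $s_u$ is the \emph{closest} sampled node to $u$ combined with the triangle inequality.

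First, if $\hop(v,u) \leq h$, then the local exploration around $v$ already sees a shortest $v$-$u$-path, so $d_h(v,u) = d(v,u)$. Plugging this into Equation~\eqref{eq:approx_oracle_function} immediately gives $o_v(u) \leq d(v,u) \leq 3\,d(v,u)$, and no randomness is needed for this case.

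For the case $\hop(v,u) > h$, I would invoke Lemma~\ref{lem:node_sampling} to obtain, w.h.p., a shortest $v$-$u$-path $P$ every $h$-node sub-path of which contains a node of $S$. In particular, the $h$-node sub-path of $P$ starting at $u$ contains some $s \in S$, so $d(u,s)$ is at most the length of that sub-path. Since $s$ lies on the shortest $v$-$u$-path $P$, both $d(u,s) \leq d(v,u)$ and $d(s,v) \leq d(v,u)$ hold. Now use that $s_u$ minimizes $d(\cdot,u)$ over $S$, so $d(s_u,u) \leq d(s,u) \leq d(v,u)$. The triangle inequality then yields
\begin{equation*}
d(v,s_u) + d(s_u,u) \;\leq\; \bigl(d(v,u) + d(u,s_u)\bigr) + d(s_u,u) \;=\; d(v,u) + 2\,d(s_u,u) \;\leq\; 3\,d(v,u).
\end{equation*}
Combining with Equation~\eqref{eq:approx_oracle_function} finishes this case.

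The main (and essentially only) subtle point is making sure that the node $s$ guaranteed by Lemma~\ref{lem:node_sampling} is close enough to $u$; this is why we pick the sub-path of $P$ starting at $u$ rather than an arbitrary one, which gives $d(s,u) \leq d(v,u)$ ``for free'' from $s$ lying on $P$. Everything else is a one-line triangle-inequality argument, and the w.h.p.\ qualifier is inherited directly from Lemma~\ref{lem:node_sampling}.
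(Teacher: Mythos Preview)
Your overall approach is right and matches the paper's, but the case split has a gap for weighted graphs. Splitting on $\hop(v,u)\le h$ versus $\hop(v,u)>h$ is not the same as the paper's split: it can happen that $\hop(v,u)\le h$ (there is \emph{some} short-hop $v$--$u$ path) yet every \emph{shortest} $v$--$u$ path has more than $h$ hops, so $d_h(v,u)>d(v,u)$ and your Case~1 conclusion ``local exploration already sees a shortest path'' fails. Concretely, take a heavy direct edge $\{v,u\}$ together with a long low-weight path from $v$ to $u$ of more than $2h$ hops; then $\hop(v,u)=1$, but $d_h(v,u)$ only picks up the heavy edge while $d(v,u)$ is the weight of the long path. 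Your Case~2 argument would actually handle this situation (there \emph{is} a shortest path of length exceeding $h$ hops on which Lemma~\ref{lem:node_sampling} places a sampled node near $u$), but as written you only invoke Case~2 when $\hop(v,u)>h$, so this scenario falls through the cracks.

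The fix is exactly the split the paper uses: Case~1 is ``there exists a shortest $v$--$u$ path with at most $h$ hops'' (then $d_h(v,u)=d(v,u)$ tautologically), and Case~2 is its complement, ``all shortest $v$--$u$ paths have more than $h$ hops''. Your Case~2 computation---pick a sampled $\hat s$ on the shortest path near $u$, use $d(s_u,u)\le d(\hat s,u)\le d(v,u)$, then the triangle inequality to get $d(v,s_u)+d(s_u,u)\le 3\,d(v,u)$---is then identical to the paper's and goes through unchanged.
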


\begin{proof}
	If there is a shortest $v$-$u$-path with at most $h$ hops, then we actually have $o_v(\lambda(u)) = d_{h}(v,u) = d(v,u)$ (c.f., first argument of Equation \eqref{eq:approx_oracle_function}) analogous to the proof of Theorem \ref{thm:exact_algo}. 	
	Else, all shortest $v$-$u$-paths have more than $h$ hops. Then by Lemma \ref{lem:node_sampling} w.h.p.\ there must be a node $\hat s \in \calB_{u,h}$ such that 
	\[
		d(v,u) = d(v,\hat s) + d(\hat s,u).\tag*{(i)}
	\] 
	Let $s_u := \argmin_{s \in S} d(s,v)$ from the label $\lambda(u)$. Then
	\[
	d(s_u,u) \stackrel{\text{Def.\ of } s_u}{\leq} d(\hat s,u) \stackrel{\text{(i)}}{=} d(v,u) - d(v,\hat s) \leq d(v,u).\tag*{(ii)}
	\]
	Then, for the second argument of Equation \eqref{eq:approx_oracle_function} we get
	\begin{align*}
	o_v(\lambda(u)) & \leq d(v,s_u) + d(s_u,u)\tag*{\textit{c.f., Equation \eqref{eq:approx_oracle_function}}}\\
	& \leq  d(v,u) + d(u,s_u) + d(s_u,u) \tag*{\textit{triangle ineq.}}\\
	& = d(v,u) + 2 d(u,s_u) \leq 3d(v,u).\tag*{\textit{by ineq.} {(ii)}}
	\end{align*}
\end{proof}

We can also express the approximation as additive error depending on largest weight $W'$ on some $v$-$u$-path and the hop distance $h'$ between sampled nodes on that shortest path.

\begin{lemma}
	\label{lem:algo-approx-unweighted}
	For $u,v \in V$ let $W'$ be the largest weight on \emph{some} shortest $v$-$u$-path. Then $o_v(\lambda(u)) \leq d(v,u) +2h' W'$ w.h.p.\ (for $o_v$ from Eq.\ \eqref{eq:approx_oracle_function}) where $h'\in \tilO(x)$ denotes the value from Lemma \ref{lem:node_sampling}.
\end{lemma}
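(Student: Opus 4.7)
The plan is to mirror the structure of the proof of Lemma \ref{lem:algo-approx-weighted}, doing a case split on the hop-length of shortest $v$-$u$-paths, but replacing the multiplicative slack by an additive term bounded in terms of $W'$ and $h'$.

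First, if there exists a shortest $v$-$u$-path with at most $h$ hops (where $h$ is the parameter of Algorithm \ref{alg:base_algo}, chosen at least as large as the $h'$ from Lemma \ref{lem:node_sampling}), then $d_h(v,u) = d(v,u)$, so by the first argument of the $\min$ in Equation \eqref{eq:approx_oracle_function} we get $o_v(\lambda(u)) \le d(v,u) \le d(v,u) + 2h' W'$ and the bound is trivially satisfied. So I focus on the case where every shortest $v$-$u$-path has strictly more than $h \ge h'$ hops.

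In that case, I invoke Lemma \ref{lem:node_sampling} w.h.p.\ to obtain a specific shortest $v$-$u$-path $P$ with the property that every sub-path of $P$ on at least $h'$ nodes contains a sampled node. I pick this $P$ to be the ``some shortest $v$-$u$-path'' in the statement and accordingly set $W'$ to be the largest edge weight along $P$. Since $P$ has strictly more than $h' $ hops, its $h'$-hop suffix ending at $u$ is a sub-path on at least $h'$ nodes, so it contains some sampled node $\hat s \in S$ with $\hop_P(\hat s, u) \le h'$. As $\hat s$ lies on the shortest path $P$, the sub-path of $P$ from $\hat s$ to $u$ is itself a shortest $\hat s$-$u$-path, and therefore
\[
d(\hat s, u) \;\le\; \hop_P(\hat s, u) \cdot W' \;\le\; h' W'.
\]

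By definition of $s_u = \argmin_{s \in S} d(s,v)$ appearing in $\lambda(u)$ (closest sampled node to $u$), we have $d(s_u, u) \le d(\hat s, u) \le h' W'$. The triangle inequality then yields
\begin{align*}
o_v(\lambda(u)) &\le d(v,s_u) + d(s_u,u) \\
&\le d(v,u) + d(u,s_u) + d(s_u,u) \\
&= d(v,u) + 2 d(s_u,u) \\
&\le d(v,u) + 2 h' W',
\end{align*}
which is the claimed bound.

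The only subtle point is making sure the shortest path $P$ supplied by Lemma \ref{lem:node_sampling} is the same one used to define $W'$; the lemma statement permits this because $W'$ is defined with respect to ``some'' shortest $v$-$u$-path, so we are free to choose the one delivered by Lemma \ref{lem:node_sampling}. The high-probability qualifier is inherited directly from Lemma \ref{lem:node_sampling}, which is the only probabilistic ingredient in the argument.
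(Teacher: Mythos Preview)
Your proof is correct and follows essentially the same route as the paper: locate a sampled node $\hat s$ within $h'$ hops of $u$ on a shortest $v$-$u$-path (via Lemma \ref{lem:node_sampling}), bound $d(\hat s,u)\le h'W'$, pass to $s_u$ by minimality, and finish with the triangle inequality. The only difference is cosmetic: you make the short-path case ($d_h(v,u)=d(v,u)$) explicit, whereas the paper folds it in implicitly, and you spell out that the path delivering $W'$ is the one supplied by Lemma \ref{lem:node_sampling}. Note the small slip ``$s_u=\argmin_{s\in S}d(s,v)$'' should read $d(s,u)$; the paper has the same typo, and your parenthetical ``closest sampled node to $u$'' shows you mean the right thing.
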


\begin{proof}
	Fix a shortest $v$-$u$-path $P$. Let $W'$ be the largest weight on $P$. By Lemma \ref{lem:node_sampling} there is a sampled node $\hat s \in S$ on $P$ with $\hop(\hat s,u) \leq h'$ w.h.p.\footnote{Note for the union bound to work, Lemma \ref{lem:node_sampling} is phrased such that for each pair $u,v \in V$ there is just \textit{one} shortest $v$-$u$-path (out of possibly exponentially many) which has a sampled node at last every $h'$ hops. In fact in the proof we can fix one specific such path for each pair ourselves and the union bound still works.} This implies $d(\hat s,u) \leq h'W'$. Let $s_u := \argmin_{s' \in S}  d(s',v)$ (from the label $\lambda(u)$). By definition of $s$ we have $d(s_u,u) \leq d(\hat s,u) \leq h'W'$. Then we get
	\begin{align*}
	o_v(\lambda(u)) & \leq d(v,s_u) + d(s_u,u)\tag*{\textit{c.f., Equation \eqref{eq:approx_oracle_function}}}\\
	& \leq  d(v,u) + d(u,s_u) + d(s_u,u) \tag*{\textit{triangle ineq.}}\\
	& \leq d(v,u) + 2 h'W'.\tag*{\qedhere}
	\end{align*}
\end{proof}

%
%

Similar to Equation \eqref{eq:exact_routing_function}, we define $\rho_v(\lambda(u))$ as a node $z$ that is adjacent to $v$ such that the oracle function $o_z(\lambda(u))$ decreases by \textit{at least} the weight of the corresponding edge, i.e., $o_z(\lambda(u)) \leq o_v(\lambda(u)) - w(v,z)$ (we will show that there is always a neighbor of $v$ that fulfills these criteria and if several do, we choose one arbitrarily). Then subroutine \code{create-routing-function} outputs
\begin{align}
\label{eq:approx_routing_function}
& \rho_v(\lambda(u)) \in \big\{z \in N(v) \mid o_z(\lambda(u)) \leq o_v(\lambda(u)) - w(v,z) \big\}
\end{align}



\begin{lemma}
	\label{lem:stretch-routing}
	For any $u,v \in V$ the routing functions in Equation \eqref{eq:approx_routing_function} induce a $v$-$u$-path $P$ with $w(P) \leq o_v(\lambda(u))$ w.h.p.
\end{lemma}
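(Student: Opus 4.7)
The plan is to prove $w(P_\rho(v,u)) \leq o_v(\lambda(u))$ by strong induction on the non-negative integer $o_v(\lambda(u))$ (integer because edge weights lie in $[W]$). The base case $o_v(\lambda(u)) = 0$ forces $v = u$ since Definition~\ref{def:distance_oracles} guarantees $o_v(\lambda(u)) \geq d(v,u)$, and the induced routing path is the empty walk. For the inductive step with $v \neq u$, it suffices to exhibit a neighbor $z \in N(v)$ with $o_z(\lambda(u)) \leq o_v(\lambda(u)) - w(v, z)$; this simultaneously shows that the set in Equation~\eqref{eq:approx_routing_function} is non-empty (so $\rho_v$ is well-defined) and lets the induction hypothesis, applied to the strictly smaller value $o_z(\lambda(u))$, produce a sub-path $P'$ from $z$ to $u$ with $w(P') \leq o_z(\lambda(u))$. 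Concatenating the edge $(v,z)$ then yields $w(P_\rho(v,u)) \leq w(v,z) + o_z(\lambda(u)) \leq o_v(\lambda(u))$.

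The heart of the argument, and main obstacle, is producing such a $z$. I would split on which term of Equation~\eqref{eq:approx_oracle_function} realizes $o_v(\lambda(u))$. \textbf{Case 1:} if $o_v(\lambda(u)) = d_h(v, u) < \infty$, take $z$ as the first hop of a shortest $h$-hop $v$-$u$ path; then $o_z(\lambda(u)) \leq d_h(z, u) \leq d_{h-1}(z, u) = d_h(v, u) - w(v, z)$, as required. \textbf{Case 2:} if the min is achieved by $d(v, s_u) + d(s_u, u)$ and $v \neq s_u$, take $z$ as the first hop of a shortest $v$-$s_u$ path, so that $d(z, s_u) = d(v, s_u) - w(v, z)$; plugging into Equation~\eqref{eq:approx_oracle_function} gives $o_z(\lambda(u)) \leq d(z, s_u) + d(s_u, u) = o_v(\lambda(u)) - w(v, z)$.

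The subtle remaining sub-case is $v = s_u$ with $v \neq u$, where the second term of the min degenerates to $d(v,u)$ and does not obviously produce a useful neighbor. Here I argue that $u \in \calB_{v, h}$ holds w.h.p., reducing the scenario back to Case 1. Assume for contradiction that $\hop(v, u) \geq h$. Lemma~\ref{lem:node_sampling} then guarantees a shortest $v$-$u$-path $P^*$ such that the sub-path consisting of the first $h$ nodes \emph{after} $v$ contains a sampled node $s^* \in S$; crucially, $s^* \neq v$ since $s^*$ lies strictly past $v$ on $P^*$, so $d(s^*, u) = d(v, u) - d(v, s^*) < d(v, u) = d(s_u, u)$, contradicting $s_u = \argmin_{s \in S} d(s, u)$. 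Hence $u \in \calB_{v, h}$, placing us in Case 1. The w.h.p.\ qualifier in the statement of the lemma is precisely the one inherited from this single appeal to Lemma~\ref{lem:node_sampling}.
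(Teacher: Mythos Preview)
Your argument is essentially identical to the paper's: both prove that every $v \neq u$ has a neighbor $z$ with $o_z(\lambda(u)) \leq o_v(\lambda(u)) - w(v,z)$ via the same case split on Equation~\eqref{eq:approx_oracle_function}, handle the sub-case $v = s_u$ the same way through Lemma~\ref{lem:node_sampling}, and your strong induction on $o_v(\lambda(u))$ is just the paper's telescoping sum unrolled. One small wrinkle (which the paper's exposition shares verbatim): concluding $u \in \calB_{v,h}$, i.e.\ $\hop(v,u) < h$, does not literally place you in Case~1, since that case needs $o_v(\lambda(u)) = d_h(v,u)$, meaning some \emph{shortest} $v$-$u$ path has at most $h$ hops --- the easy fix is to run your contradiction directly on the hypothesis ``the shortest path $P^*$ guaranteed by Lemma~\ref{lem:node_sampling} has more than $h$ hops,'' whence $|P^*|\le h$ gives $d_h(v,u)=d(v,u)$ and Case~1 genuinely applies.
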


\begin{proof}
	Assume a packet is forwarded along a path $P$ containing the nodes $v =: v_1, \dots , v_k := u$ with $o_{{i+1}} \leq o_{i} - w(v_i,v_{i+1})$ (whereas we abbreviate $o_i := o_{v_{i}}(\lambda(u))$ and we have $o_k = o_{u}(\lambda(u)) = 0$) then we can upper bound $w(P)$ with a telescoping sum
	\[
		w(P) = \! \sum_{i=1}^{k-1} \! w(v_i,v_{i+1})
		\leq \! \sum_{i=1}^{k-1} \! \big( o_i - o_{i+1} \big)
		= o_1 - \underbrace{\smash{o_k}}_{=0} = o_{v}(\lambda(u)).
	\]
	It remains to prove the existence of $P$. For that we show that each node $v$ has a neighbor $z$ that fulfills the requirement $o_z(\lambda(u)) \leq o_v(\lambda(u)) - w(v,z)$. We have to unwrap the definition of $o_v(\lambda(u))$ to make this argument.
	
	\textit{Case 1:} if $o_v(\lambda(u)) = d_{h}(v,u)$ (first argument of the $\min$ function in Equation \eqref{eq:approx_oracle_function}), then this means there is a $h$-hop path $Q$ from $v$ to $u$ that fulfills $w(Q) = o_v(\lambda(u))$. 
	Clearly, the next $z \in N(v)$ on $Q$ is one hop closer to $u$ and also sees the sub-path $Q_{z,u} \subseteq Q$ within its ball $\calB_{z,h}$ so by Equation \eqref{eq:approx_oracle_function} $z$'s distance estimate must be at least as good: $o_z(u) \leq w(Q_{z,u}) = o_v(\lambda(u)) - w(v,z)$.
	
	\textit{Case 2:} if $o_v(\lambda(u)) = d(v,s_u) + d(s_u,u)$, then this means our distance estimation via $s_u$ is at least as good as the $h$-hop distance to $u$ (c.f., second argument of $\min$ in Equation \eqref{eq:approx_oracle_function}). Let $Q$ be a path concatenated from two shortest paths $Q_{v,s_u} \cup Q_{s_u,u}$ from $v$ to $s_u$ and from $s_u$ to $u$, respectively. 
	
	Note that $\hop(s_u,u) \leq h$, as otherwise there would be a node $s_u' \in S$, $s_u'\neq s_u$ on $Q_{s_u,u}$ by Lemma \ref{lem:node_sampling}, which would be closer to $u$ thus contradicting the definition $s_u := \argmin_{s \in S} d(s,u)$ in the label $\lambda(u)$. This means that we can assume $v \neq s_u$ since otherwise we can apply case 1.	
	Let $z \in N(v)$ be the next node on $Q_{v,s_u}$ (which contains at least the 2 nodes $v$ and $s_u$, but $z = s_u$ is possible). Then
	\begin{align*}
		& d(v,s_u) + d(s_u,u) = w(v,z) + d(z,s_u) + d(s_u,u)\\
		\Longrightarrow \quad & o_v(\lambda(u)) - w(v,z) = d(z,s_u) + d(s_u,u) \geq o_z(\lambda(u)).\hfill\qedhere
	\end{align*}
\end{proof}


The following Theorem gives a summary of the obtained approximation algorithms for which we utilize the previous lemmas.

\begin{theorem}
	\label{thm:approx_algo}
	Distance oracles and \emph{stateless} routing schemes with label-size $\bigO(\log n)$ can be computed in \hybrid w.h.p.\ and
	\begin{itemize}
		\item stretch 3 in \smash{$\tilO(n^{1/3})$} rounds on weighted graphs,
		\item stretch $1\!+\!\eps$ for $0 \!<\! \eps \!\leq\! 1$ in \smash{$\tilO\big(\frac{n^{1/3}}{\eps}\big)$} rounds on unweighted graphs.
	\end{itemize}
\end{theorem}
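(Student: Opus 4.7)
The plan is to instantiate Algorithm~\ref{alg:base_algo} with the approximate subroutines defined by Equations~\eqref{eq:approx_oracle_function} and \eqref{eq:approx_routing_function} and choose the parameters $x$ (sampling rate) and $h$ (local exploration radius) appropriately for each claim. In both cases, the label $\lambda(v)=(\mathrm{ID}(s_v),d(s_v,v))$ contains only one identifier and one distance, so its size is $\bigO(\log n)$ bits by the assumption that distances fit in $\bigO(\log n)$ bits.

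For the first statement (stretch $3$ on weighted graphs), I would pick $x=n^{1/3}$ and take $h$ to be the value $h'=\Theta(x\log n)$ from Lemma~\ref{lem:node_sampling}. Lemma~\ref{lem:base_algo_runtime} immediately yields $\tilO(n^{1/3}+n/x^{2}+h)=\tilO(n^{1/3})$ rounds. The distance-oracle stretch is at most $3$ by Lemma~\ref{lem:algo-approx-weighted}, and the routing scheme inherits the same stretch via Lemma~\ref{lem:stretch-routing}, which bounds the routed path's weight by $o_v(\lambda(u))$. The only additional cost is one \LOCAL round in which each node learns the oracle functions of its neighbors (so that Equation~\eqref{eq:approx_routing_function} can be evaluated locally), which does not affect the asymptotic round complexity.

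For the second statement (stretch $1+\eps$ on unweighted graphs), the key idea is to \emph{enlarge} the exploration radius $h$ beyond the RSSP-dictated $h'=\Theta(x\log n)$, so that the additive error of $2h'$ from Lemma~\ref{lem:algo-approx-unweighted} (with $W'=1$) becomes at most $\eps\cdot d(v,u)$ whenever the first argument of the $\min$ in Equation~\eqref{eq:approx_oracle_function} is not already exact. Concretely, set $h:=2h'/\eps$. If $\hop(v,u)\leq h$, then on an unweighted graph $d_h(v,u)=d(v,u)$ and the oracle is exact; otherwise $d(v,u)=\hop(v,u)>h=2h'/\eps$, so Lemma~\ref{lem:algo-approx-unweighted} gives $o_v(\lambda(u))\leq d(v,u)+2h'\leq(1+\eps)d(v,u)$. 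Choosing $x=n^{1/3}$, Lemma~\ref{lem:base_algo_runtime} yields $\tilO(n^{1/3}+n/x^{2}+x/\eps)=\tilO(n^{1/3}/\eps)$ rounds. The routing stretch follows again from Lemma~\ref{lem:stretch-routing}.

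The main subtlety I expect is the interaction between the exploration radius $h$ and the sampling radius $h'$ in the unweighted case: these coincide in the base algorithm, so I would need to verify that decoupling them is harmless. Growing $h$ only enlarges the set on which $d_h(v,u)=d(v,u)$, which can only \emph{improve} the oracle, while Lemma~\ref{lem:node_sampling} continues to apply with its own $h'$, so the $2h'W'$ additive bound of Lemma~\ref{lem:algo-approx-unweighted} is unaffected. Once this is observed, the remainder of the argument is a routine parameter balancing, and the routing claim reduces to invoking Lemma~\ref{lem:stretch-routing}.
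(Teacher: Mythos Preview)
Your proposal is correct and follows essentially the same approach as the paper: instantiate Algorithm~\ref{alg:base_algo} with $x=n^{1/3}$, use Lemmas~\ref{lem:algo-approx-weighted} and~\ref{lem:stretch-routing} for the weighted case, and for the unweighted case enlarge the exploration radius to $\Theta(h'/\eps)$ so that the additive error $2h'$ from Lemma~\ref{lem:algo-approx-unweighted} becomes an $\eps$-fraction of any distance not already captured exactly by $d_h$. The only cosmetic difference is that the paper sets $h=h'/\eps$, obtains stretch $1+2\eps$, and then substitutes $\eps'=2\eps$, whereas you absorb the factor $2$ into $h$ up front; your explicit remark on decoupling $h$ from $h'$ is a clean justification the paper leaves implicit.
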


\begin{proof}
	The label $\lambda(v)$ contains only the distance to the closest sampled node in $S$ which is $\bigO(\log n)$ bits. For the first result on weighted graphs we invoke Algorithm \ref{alg:base_algo} with $x = n^{1/3}$ and $h \in \tilT(x)$ as in Lemma \ref{lem:node_sampling}, then the runtime follows from Lemma \ref{lem:base_algo_runtime} and the stretch for the resulting distance oracles follows from Lemma \ref{lem:algo-approx-weighted}. Note that by Lemma \ref{lem:stretch-routing} the same stretch holds for the resulting stateless routing scheme.
	
	For the second result we invoke Algorithm \ref{alg:base_algo} with $x = n^{1/3}$ with a wider local exploration (by a factor $1/\eps$) to distance $h = h'/\eps$ whereas $h' \in \tilT(x)$ is the usual value from Lemma \ref{lem:node_sampling}. Again, the runtime follows from Lemma \ref{lem:base_algo_runtime}.

	The idea is that a wider local search allows us to compute shortest paths with at most $h$ hops exactly which allows us to disregard those for proving the stretch. By Lemma \ref{lem:algo-approx-unweighted} the additive error scales only in $h'$ which is by a factor $1/\eps$ smaller than $h$ and therefore has a relatively small impact on the stretch of paths longer than $h$ hops.
	
	More precisely, if $d_{h}(v,u) = d(v,u)$ then we obtain an exact distance estimation from Equation \eqref{eq:approx_oracle_function}.
	Now let $d_{h}(v,u) > d(v,u)$ meaning that all $v$-$u$-paths with at most $h$ hops are longer than $d(v,u)$ or there exists none such path. Then, due to the minimum edge weight of 1, we have $d(v,u) \geq h$.
	Then by Lemma \ref{lem:algo-approx-unweighted} we have
	\begin{align*}
		o_v(\lambda(u)) & \leq d(v,u)\!+\!2h' W'  = d(v,u)\!+\!2\eps h W' \\
		& \leq d(v,u)\!+\!2\eps d(v,u) = d(v,u)(1\!+\!2\eps)
	\end{align*}
	The final result is obtained with a substitution $\eps':= 2\eps$.
\end{proof}
}

\section{Node Communication Problem}

\label{sec:node_communication_problem}

\lng{In the remaining sections of this paper we will completely focus on computational lower bounds.} This section is dedicated to creating an ``information bottleneck'' in the \hybrid model between two (distant) parts of the local communication graph. We do this for the more general $\hybridpar{\infty}{\gamma}$ model, where we have a global communication bandwidth of $\gamma$ bits per node per round. Besides the advantage of having a more general lower bound, this avoids logarithmic terms and $\bigO$-notation as long as possible (recall that $\hybrid = \hybridparbig{\infty}{\bigO(\log^2 n)}$). 

We start with some intuition. Let $G=(V,E)$ be a graph. Assume that information, formalized as the state of some random variable $X$, is collectively known by nodes $A \subset V$ and must be learned by some set $B \subset V$ disjoint from $A$. This information either has to travel the hop distance $h := hop(A,B)$ from $A$ to $B$ along the unrestricted local network, which takes $h$ rounds. If we want to be faster than that, each bit traveling from $A$ to $B$ has to use a global edge eventually. Thus the number of rounds is at least the ``total amount'' of information (given by the entropy $H(X)$\footnote{The Shannon entropy of random variable $X\!:\! \Omega \!\to\! S$ is defined as $H(X) := - \!\sum_{x \in S} \mathbb{P}(X \!=\! x) \log \big(\mathbb{P}(X \!=\! x) \big)$ \cite{Shannon1948}.}), divided by the overall global communication capacity of at most $n \cdot \gamma$ bits in the $\hybridpar{\infty}{\gamma}$ model.



To prove this formally and in a more general form, we first introduce some definitions. We say that the nodes from some set $A \subseteq V$ \textit{collectively know} the state of a random variable $X$, if its state can be derived from the information that the nodes $A$ have. Or, in terms of information theory, given the state or input $S_A$ of all nodes $A$ (interpreted as a random variable), then the conditional entropy $H(X|S_A)$ also known as the amount of new information of $X$ provided that $S_A$ is already known, is zero.
Similarly, we say that the state of $X$ is \textit{unknown} to $B \subseteq V$, if the initial information of the nodes $B$ does not induce any knowledge on the outcome of $X$. Or formally, that for the state $S_B$ of the nodes $B$ we have that $H(X|S_B) = H(X)$, meaning that all information in $X$ is new even if $S_B$ is known. Another way of expressing this is that $S_B$ and $X$ are stochastically independent.


\begin{definition}[Node Communication Problem]
	\label{def:node_comm_problem}
	Let $G=(V,E)$ be some graph. Let $A,B \subset V$ be disjoint sets of nodes and $h := hop(A,B)$. Furthermore, let $X$ be a random variable whose state is collectively known by the nodes $A$ but unknown to any set of nodes disjoint from $A$. An algorithm $\mathcal A$ solves the \emph{node communication problem} if the nodes in $B$ collectively know the state of $X$ after $\mathcal A$ terminates. We say $\calA$ has success probability $p$ if $\calA$ solves the problem with probability at least $p$ for \emph{any} state $X$ can take.\footnote{In line with our Definition \ref{def:rand_algo} of success probability for graph algorithms.}
\end{definition}

The goal of Lemma \ref{lem:reduce_comm_problems} is to reduce a more basic communication problem, for which we can provide lower bounds using basic information theory (c.f., Appendix \ref{sec:information_theory}) {to} the node communication problem. Analogously to node sets, we define that Alice knows some random variable $X$, which is unknown to Bob as follows. Given that $S_{\text{Alice}}$ and $S_{\text{Bob}}$ are their respective inputs then we have $H(X|S_{\text{Alice}}) = 0$ and $H(X|S_{\text{Bob}}) = H(X)$.

\begin{definition}[Two Party Communication Problem]
	\label{def:party_comm_problem}
	Given two computationally unbounded parties, Alice and Bob, where initially Alice knows the state of some random variable $X$ which is unknown to Bob. A communication protocol $\mathcal P$ is said to solve the problem if after the execution of $\mathcal P$ Bob can derive the state of $X$ from the transcript of all exchanged messages. Performance is measured in the length of the transcript in bits. We say $\mathcal P$ has success probability $p$ if $\mathcal P$ solves the problem with probability at least $p$ for \emph{any} state $X$ can take.
\end{definition}

The reduction from the 2-party communication problem to the node communication problem uses a simulation argument similar to the one in \cite{Kuhn2020}. We show that Alice and Bob can together simulate a \hybridpar{\infty}{\gamma} model algorithm for the node communication problem and use it solve the 2-party communication problem. \shrt{The proof is deferred to Appendix \ref{sec:node_communication_problem_proofs}.}

\begin{lemma}
	\label{lem:reduce_comm_problems}
	Any algorithm $\mathcal A$ that solves the \emph{node communication problem} (Def.\ \ref{def:node_comm_problem}) in the \hybridpar{\infty}{\gamma} model on some local graph $G = (V,E)$ with $n = |V|$ and $A,B \subset V$ in $T < h = hop(A,B)$ rounds with success probability $p$ can be used to obtain a protocol $\mathcal P$ that solves the \textit{two party communication problem} (Def.\ \ref{def:party_comm_problem}) with the same success probability $p$ and transcript length at most $T \cdot n \cdot \gamma$.
\end{lemma}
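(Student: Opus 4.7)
The plan is to have Alice and Bob cooperatively simulate $\mathcal A$ round by round, after first agreeing on a public random string: Alice runs a complete internal simulation of $\mathcal A$ on all $n$ nodes (she can do this because she has $X$ and, together with the publicly known graph $G$ and the shared coins, this determines the full initial state of the network), while Bob only maintains the state of a shrinking ``far frontier'' of nodes, using the small global messages forwarded by Alice. After the $T$-round simulation concludes, Bob reads off the state of $B$ and outputs its collective knowledge, which equals $X$ whenever $\mathcal A$ succeeds.

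Formally, for $t \in \{0,\dots,T\}$ I would define $V_B^{(t)} := \{v \in V : \hop(v,A) > t\}$ and prove by induction on $t$ that after round $t$ of the simulation Bob knows the state of every node in $V_B^{(t)}$. Because $T < h$, we have $B \subseteq V_B^{(T)}$, so at the end of the simulation Bob can recover $X$ from $B$'s collective state. The base case $t=0$ uses Definition~\ref{def:node_comm_problem}: the initial state of $V_B^{(0)} = V \setminus A$ is independent of $X$ and hence reconstructible from the public graph and the shared coins. For the inductive step, fix $v \in V_B^{(t)}$; its state at round $t$ is determined by its state at round $t-1$ (which Bob has, since $V_B^{(t)} \subseteq V_B^{(t-1)}$) together with the local and global messages $v$ receives in round $t$. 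Every local neighbor $u$ of $v$ satisfies $\hop(u,A) \geq \hop(v,A) - 1 \geq t$, hence $u \in V_B^{(t-1)}$, and by induction Bob can reconstruct $u$'s round $t-1$ state and therefore the local message $u \to v$ without help. For the global messages, the protocol has Alice append to the transcript, at the end of each round, the entire set of global messages sent in that round; the \hybridpar{\infty}{\gamma} bandwidth constraint caps this at $n \cdot \gamma$ bits per round, for a total of at most $T \cdot n \cdot \gamma$ bits over the $T$ rounds.

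The main obstacle I anticipate is keeping local messages off the transcript, since local edges in \hybrid carry unbounded traffic; the strict inequality ``$>t$'' in the definition of $V_B^{(t)}$ is tuned precisely for this purpose, because any local neighbor of a node in $V_B^{(t)}$ is forced into the previous frontier $V_B^{(t-1)}$ -- Bob's territory -- rather than into Alice's exclusive region, so no cross-cut local message ever needs to be transmitted. Success probability transfers intact: given the shared public coins and Alice's transcript, the protocol is a deterministic function, and the invariant guarantees that Bob's reconstruction of $B$'s state coincides with Alice's internal simulation, so for every fixed value of $X$ the protocol succeeds with probability at least $p$ over the coins -- the same guarantee that $\mathcal A$ achieves.
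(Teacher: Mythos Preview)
Your argument is correct and follows essentially the same simulation idea as the paper: the shrinking set $V_B^{(t)}=\{v:\hop(v,A)>t\}$ coincides with the paper's $V\setminus V_t$, and the key observation that any local neighbor of a node in $V_B^{(t)}$ already lies in $V_B^{(t-1)}$ is exactly what the paper uses to keep local messages off the transcript. The only (harmless) difference is that you let Alice simulate \emph{all} nodes and forward \emph{all} global messages, yielding a one-way protocol, whereas the paper has both parties simulate disjoint node sets and exchange only cross-boundary global messages; both variants hit the same $T\cdot n\cdot\gamma$ transcript bound.
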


\lng{\begin{proof}
	We will derive a protocol $\mathcal P$ that uses (i.e., simulates) algorithm $\mathcal A$ in order to solve the two-party communication problem. First we make a few assumptions about the initial knowledge of both parties in particular about the graph $G$ from the node communication problem, you can think of this information as hard coded into the instructions of $\mathcal P$. The important observation is that none of these assumptions will give Bob any knowledge about $X$.
			
	Specifically, we assume that Alice is given complete knowledge of the topology $G$ and inputs of all nodes in $G$ (in particular the state of $X$ and the source codes of all nodes specified by $\calA$). Bob is given the same for the subgraph induced by $V \setminus A$, which means that the state of $X$ remains unknown to Bob (c.f., Def.\ \ref{def:node_comm_problem}). To accommodate randomization of $\mathcal A$, both are given the same copy of a string of random bits (determined randomly and independently from $X$) that is sufficiently long to cover all ``coin flips'' used by any node in the execution of $\mathcal A$.
	
	Alice and Bob simulate the following nodes during the simulated execution of algorithm $\mathcal A$. For $i \in [h\!-\!1]$ let $V_i := \{v \in V \mid \hop(v,A) \leq i \}$ be the set of nodes at hop distance at most $i$ from $A$. Note that $A \subseteq V_i$ for all $i$. In round 0 of algorithm $\mathcal A$, Alice simulates all nodes in $A$ and Bob simulates all nodes in $V \setminus A$. However, in subsequent rounds $i > 0$, Alice simulates the larger set $A \cup V_i$ and Bob simulates the smaller set $B \cup V \setminus V_i$.
	
	Figuratively speaking, in round $i$ Bob will relinquish control of all nodes that are at hop distance $i$ from set $A$, to Alice. This means, in each round, every node is simulated \textit{either} by Alice \textit{or} by Bob.	
	We show that each party can simulate their nodes correctly with an induction on $i$. Initially ($i=0$), this is true as each party gets the necessary inputs of the nodes they simulate. Say we are at the beginning of round $i > 0$ and the simulation was correct so far. It suffices to show that both parties obtain all messages that are sent (in the \hybridpar{\infty}{\gamma} model) to the nodes they currently simulate.
	
	The communication taking place during execution of $\mathcal A$ in the \hybridpar{\infty}{\gamma} model is simulated as follows. If two nodes that are currently simulated by the \emph{same} party, say Alice, want to communicate, then this can be taken care as part of the internal simulation by Alice. If a node that is currently simulated (w.l.o.g.) by Bob wants to send a message over the \emph{global} network to some node that Alice simulates, then Bob sends that message directly to Alice as part of $\mathcal P$, and that message becomes part of the transcript.
	
	Now consider the case where a \textit{local} message is exchanged between some node $u$ simulated by Alice and some node $v$ simulated by Bob. Then in the subsequent round Alice will \textit{always} take control of $v$, as part of our simulation regime. Thus Alice can continue simulating $v$ correctly as she has all information to simulate all nodes all the time anyway (Alice is initially given all inputs of all nodes). Therefore it is \textit{not} required to exchange {any} \textit{local} messages across parties for the correct simulation.
	
	After $T$ simulated rounds, Bob, who simulates the set $B$ until the very end (as $T < h$), can derive the state of $X$ from the local information of $B$ with success probability at least $p$ (same as algorithm $\mathcal A$). Hence, using the global messages that were exchanged between Alice and Bob during the simulation of algorithm $\mathcal A$ we obtain a protocol $\mathcal P$ that solves the two party communication problem with probability $p$. Since total global communication is restricted by $n \cdot \gamma$ bits per round in the \hybridpar{\infty}{\gamma} model, Alice sends Bob at most $T \cdot n \cdot \gamma$ bits during the whole simulation.
\end{proof}}

Next we plug in the lower bound for the 2-party communication problem (c.f.\ Lemma \ref{lem:lower_bound_two_party} in Appendix \ref{sec:information_theory}) to derive a lower bound for the node communication problem. Note that this theorem only depends on the hop distance $h$ between $A,B$ and the entropy of $X$ and is otherwise agnostic to the local graph. Note that a lower bound that holds in expectation is also a worst case lower bound.\footnote{A worst case lower bound means there exists one outcome of $X$ where the algorithm takes at least that many rounds. A lower bound that holds in expectation clearly implies the same in the worst case.}

\begin{theorem}
	\label{thm:lower_bound_node_communication}
	Any algorithm that solves the {node communication problem} (Def.\ \ref{def:node_comm_problem}) on some $n$-node graph in the \hybridpar{\infty}{\gamma} model with success probability at least $p$, takes at least \smash{$\min\!\big(\frac{pH(X) -1}{n \cdot \gamma}, h\big)$} rounds in expectation, where $H(X)$ denotes the entropy of $X$.
\end{theorem}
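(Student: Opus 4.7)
The plan is to combine the simulation-based reduction of Lemma \ref{lem:reduce_comm_problems}, which converts any fast \hybridpar{\infty}{\gamma} algorithm for the node communication problem into a 2-party protocol transmitting $X$ from Alice to Bob, with the information-theoretic lower bound of roughly $pH(X) - 1$ bits on the expected transcript length of any 2-party protocol with success probability $p$ (Lemma \ref{lem:lower_bound_two_party}, a consequence of the source coding theorem).

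Let $T$ denote the (random) number of rounds used by a hypothetical algorithm $\mathcal A$ for the node communication problem with success probability at least $p$. I would split into two cases according to the minimum on the right-hand side. If $\E[T] \geq h$, the theorem's bound holds trivially since the claimed minimum is at most $h$. Otherwise $\E[T] < h$, and it remains to show $\E[T] \geq (p H(X) - 1)/(n\gamma)$. In this regime, I would feed $\mathcal A$ (possibly after a truncation at round $h - 1$) through Lemma \ref{lem:reduce_comm_problems} to obtain a 2-party protocol $\mathcal P$ whose expected transcript length is at most $\E[T] \cdot n \cdot \gamma$ and whose success probability is essentially $p$. The 2-party lower bound then yields $\E[T] \cdot n \gamma \geq p H(X) - 1$, which rearranges to the desired inequality.

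The main obstacle is reconciling the random running time $T$ with the strict condition $T < h$ that Lemma \ref{lem:reduce_comm_problems} requires. The cleanest way I see is to introduce an algorithm $\mathcal A'$ that forcibly halts $\mathcal A$ at the end of round $h - 1$ and outputs an arbitrary guess if $\mathcal A$ has not yet terminated, so that $T' := \min(T, h - 1) < h$ deterministically and $\E[T'] \leq \E[T]$. By Markov's inequality, $\mathbb{P}(T \geq h) \leq \E[T]/h$, so $\mathcal A'$ still succeeds with probability at least $p - \E[T]/h$. Plugging $\mathcal A'$ into the reduction and then into the 2-party bound produces a constraint of the form $\E[T] \cdot n \gamma \geq (p - \E[T]/h) H(X) - 1$; a short case split on whether the truncation loss $\E[T]/h$ is a significant fraction of $p$ (in which case $\E[T] \geq h \cdot \mathbb{P}(T \geq h)$ is already comparable to $h$) or negligible (in which case we essentially recover $p' \approx p$) then collapses everything into the clean $\min\bigl((pH(X) - 1)/(n\gamma), h\bigr)$ expression stated in the theorem, with the $-1$ inherited directly from the source coding bound.
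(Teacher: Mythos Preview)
Your plan is essentially the paper's plan: combine the simulation reduction of Lemma~\ref{lem:reduce_comm_problems} with the two-party lower bound of Lemma~\ref{lem:lower_bound_two_party}, and arrange things as a contradiction when $\E[T]$ is below both arguments of the $\min$. You are actually more careful than the paper on one point: the paper simply assumes, for contradiction, that the \emph{expected} running time $T$ satisfies $T<h$ and $T<(pH(X)-1)/(n\gamma)$, and then invokes Lemma~\ref{lem:reduce_comm_problems} as though $T$ were a deterministic bound below $h$, which is what that lemma's proof genuinely needs so that Bob still controls $B$ at termination. Your truncation-plus-Markov step is the natural way to make this rigorous.

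Where your proposal has a real gap is the last sentence. From
\[
\E[T]\cdot n\gamma \;\ge\; \bigl(p - \E[T]/h\bigr)H(X) - 1
\]
you only obtain
\[
\E[T] \;\ge\; \frac{pH(X)-1}{\,n\gamma + H(X)/h\,},
\]
and your suggested case split does not upgrade this to $\min\bigl((pH(X)-1)/(n\gamma),\,h\bigr)$. In the branch ``$\E[T]/h$ is a significant fraction of $p$'' you get $\E[T] \ge \Theta(p)\cdot h$, not $\E[T]\ge h$; in the other branch you lose a factor $\approx p/(p+1)$ in the first argument of the $\min$. So your argument proves the bound up to a constant depending on $p$, which suffices for every downstream $\Omega(\cdot)$ application in the paper, but not the exact expression in the theorem statement. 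The paper's proof gets the clean constants only by applying Lemma~\ref{lem:reduce_comm_problems} directly with the expected round count, which is an imprecision on its side rather than a genuinely sharper argument.
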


\begin{proof}
	We have to show that a randomized, \hybridpar{\infty}{\gamma} algorithm $\mathcal A$ that solves the node communication problem in \textit{less} than $h$ rounds with success probability $p$ takes at least \smash{$\frac{p H(X)-1}{n \cdot \gamma}$} rounds. 
	Presume, for a contradiction, that $\mathcal A$ has an expected running time $T<h$ \textit{and} \smash{$T < \frac{p H(X)-1}{n \cdot \gamma}$}. This implies \smash{$T \cdot n \cdot \gamma < p\cdot H(X)-1$}. 
	
	Invoking Lemma \ref{lem:reduce_comm_problems} gives us a protocol $\mathcal P$ with the same success probability $p$ and with a transcript of length at most $T \cdot n \cdot \gamma$.
	With the inequality above, this means in the protocol $\mathcal P$, Alice sends \textit{less} than $p\cdot H(X)-1$ bits to Bob in expectation. This contradicts the fact that $p \cdot H(X) -1$ is a lower bound for this due to Appendix \ref{sec:information_theory} Lemma \ref{lem:lower_bound_two_party}.
\end{proof}
	
We have to accommodate the fact that in the routing problem or distance oracle problem, the nodes have to give a distance estimation or next routing neighbor only when provided with the label of the target node. Therefore we have to slightly amend Theorem \ref{thm:lower_bound_node_communication}, which will later allow us to argue that even if we assume that nodes have advance knowledge of a selection of sufficiently small labels, the lower bound will not change asymptotically. \shrt{The proof is deferred to Appendix \ref{sec:node_communication_problem_proofs}.}
	
\begin{corollary}
	\label{cor:lower_bound_general_amended}
		If $A$ is allowed to communicate $y$ bits to $B$ for free, then any algorithm that solves the {node communication problem} on some $n$-node graph (Def.\ \ref{def:node_comm_problem}) in the \hybridpar{\infty}{\gamma} model with success probability at least $p$, takes at least \smash{$\min\!\big(\frac{pH(X) -1-y}{n \cdot \gamma}, h\big)$} rounds in expectation (i.e., also in the worst case).
\end{corollary}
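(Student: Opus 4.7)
The plan is to mimic the proof of Theorem \ref{thm:lower_bound_node_communication}, with a minor augmentation to the simulation argument of Lemma \ref{lem:reduce_comm_problems} that accounts for the $y$ bits of free communication from $A$ to $B$. Specifically, I would model the free channel inside the two-party simulation as follows: whenever the algorithm $\calA$ uses the free channel to send a bit from a node in $A$ to a node in $B$, Alice (who always controls $A$) directly forwards that bit to Bob, and the bit is appended to the transcript of the simulated protocol $\mathcal P$. Since at most $y$ such bits are ever transmitted, this contributes at most $y$ extra bits to the transcript.

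Next I would rerun the simulation regime from Lemma \ref{lem:reduce_comm_problems} unchanged: in round $i$, Alice simulates $A \cup V_i$ and Bob simulates $B \cup (V \setminus V_i)$, where $V_i$ is the set of nodes within hop-distance $i$ from $A$. As long as the algorithm runs for fewer than $h = \hop(A,B)$ rounds, Bob keeps simulating $B$ throughout the execution and therefore ends up knowing the state of $X$ with probability at least $p$ (because $\calA$ succeeds with that probability and $B$ collectively knows $X$ upon termination). The global communication in the \hybridpar{\infty}{\gamma} model contributes at most $T \cdot n \cdot \gamma$ bits to the transcript, and the free channel contributes at most $y$ additional bits, so the resulting two-party protocol $\mathcal P$ has expected transcript length bounded by $T \cdot n \cdot \gamma + y$.

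To close the argument, I would invoke the two-party lower bound (Lemma \ref{lem:lower_bound_two_party} in Appendix \ref{sec:information_theory}), which asserts that any protocol solving the two-party communication problem with success probability $p$ has expected transcript length at least $p H(X) - 1$. Assuming for contradiction that $T < h$ and $T < \frac{p H(X) - 1 - y}{n \cdot \gamma}$, we obtain
\[
T \cdot n \cdot \gamma + y < p H(X) - 1,
\]
contradicting the two-party lower bound. Hence $T \geq \min\!\big(\frac{p H(X) - 1 - y}{n \cdot \gamma},\, h\big)$ in expectation, which also implies the same bound in the worst case.

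The proof is essentially a bookkeeping extension of Theorem \ref{thm:lower_bound_node_communication}, so there is no serious obstacle; the only point requiring a bit of care is to confirm that the free channel can be faithfully emulated inside the reduction without breaking the invariant that every node is simulated by exactly one party in each round. This is immediate because Alice permanently controls $A$ (the source side of the free channel), so she can supply the $y$ bits to Bob at any moment as plain transcript content, independently of which nodes Bob currently holds.
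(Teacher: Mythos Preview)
Your proposal is correct and follows essentially the same approach as the paper: add the $y$ free bits to the Alice-to-Bob transcript in the simulation of Lemma~\ref{lem:reduce_comm_problems}, then rerun the contradiction argument of Theorem~\ref{thm:lower_bound_node_communication} with the transcript bound $T\cdot n\cdot \gamma + y$ against the two-party lower bound $pH(X)-1$. The paper phrases the free communication as $y$ bits sent ``in advance'' rather than as an ongoing free channel, but this is an immaterial presentational difference and your more general framing still yields the same inequality.
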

			
\lng{\begin{proof}
		As above, the node communication problem reduces to the communication problem between Alice and Bob, where Alice is now allowed to send $y$ bits to Bob in advance. Note that this still requires Alice to send $p \cdot H(X)-1-y$ remaining bits in expectation, as per Lemma \ref{lem:lower_bound_two_party}. 	
		The same contradiction as in Theorem \ref{thm:lower_bound_node_communication} can be derived as follows. Fewer rounds than stated in this lemma would imply that the transcript of global messages from Alice to Bob, would be shorter than $p \cdot H(X)-1-y$ bits (essentially by substituting $p \cdot H(X)-1$ for $p \cdot H(X)-1-y$ in the previous proof). Thus the transcript would be less than $p \cdot H(X)-1$ bits even when we add the $y$ ``free'' bits to the transcript.
\end{proof}}

\section{Lower Bounds For Unweighted Graphs}
\label{sec:lower_bounds_unweighted}

In this and the following section we aim to reduce from the node communication problem in the \hybridpar{\infty}{\gamma} model given in Definition \ref{def:node_comm_problem}, to the problem of computing routing tables or distance oracles, which works as follows.

We define a graph $\Gamma = (V_\Gamma, E_\Gamma)$ such that, first, the solution of the routing or distance oracle problems informs a subset $B \subset V_\Gamma$ about the exact state of some random variable $X$ that is encoded by the subgraph induced by $A \subset V_\Gamma$. Second, $X$ has a large entropy (we aim for super-linear in $n$). And third, the distance $hop(A,B)$ between both sets is sufficiently large.


\begin{definition}
	\label{def:unweighted_construction}
	Let \smash{$X = (x_{ij})_{i,j\in[k]} \in \{0,1\}^{k^2}$} be a bit sequence of length \smash{$k^2$}. Let $\Gamma = (V_\Gamma, E_\Gamma)$ (shown in Figure \ref{fig:lower_bound_basic}) be an unweighted graph with source nodes $s_1, \dots, s_k \in V_\Gamma$, transit nodes $u_1, \dots, u_k \in V_\Gamma$ and target nodes $t_1, \dots, t_k \in V_\Gamma$. 	
	Each source $s_i$ has a path of length $h$ hops to the transit nodes $u_i$. We have an edge between $u_i$ and $t_j$ if and only if $x_{ij} = 1$.
	Additionally, there are two nodes $v,v'\in V_\Gamma$ connected by a path of $h$ hops. The nodes $v$ and $v'$ have an edge to each source $s_i$ or target $t_i$, respectively (Figure \ref{fig:lower_bound_basic}).
\end{definition}

This construction has the following properties. 

\begin{enumerate}[(1)]
	\item The distance from source $s_i$ to $t_j$ is larger for $x_{ij} = 0$ than for $x_{ij} = 1$ (as shown by the subsequent Lemma \ref{lem:source_target_distance_unweighted}).
	\item For all $i,j \in [k]$, independently set $x_{ij} = 1$ with probability $\frac{1}{2}$, else $x_{ij} = 0$. This maximizes \smash{$H(X) = -k^2 \cdot  \frac{\log(1/2)}{2} = \frac{k^2}{2}$}.
	\item Let $A \!=\!\{u_1, ... \,, u_k,t_1, ... \,, t_k\}$, $B \!=\! \{s_1, ... \,, s_k\}$, i.e., $hop(A,B) \!=\! h$.	
\end{enumerate}

\lng{\begin{figure}[h]
	\centering
	\includegraphics[scale=0.7]{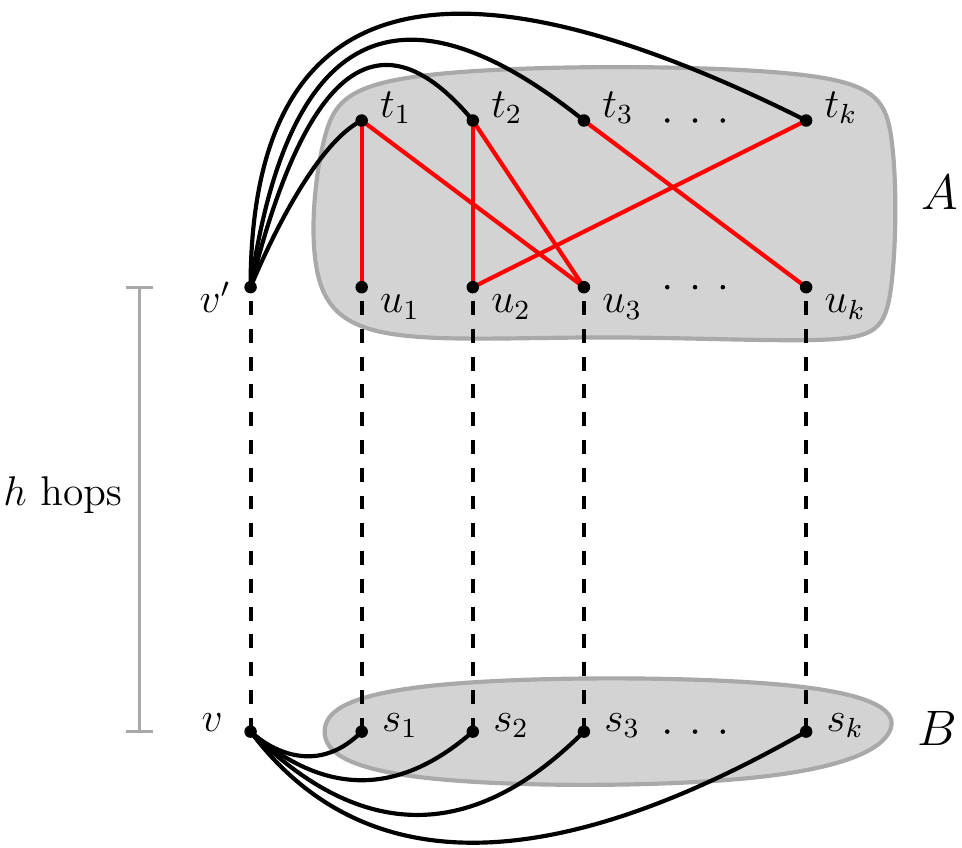}
	\caption{\lng{\boldmath} Graph $\Gamma = (V_\Gamma,E_\Gamma)$. Bit string $X = (x_{ij})$ determines red edges. E.g., $\{u_1, t_2\} \!\notin\! E_\Gamma$ and $\{u_1, t_3\} \!\in\! E_\Gamma$ means $x_{12} = 0$ and $x_{13} = 1$, respectively.}
	\label{fig:lower_bound_basic}
\end{figure}}

\shrt{\begin{figure}[h]
		\centering
		\begin{subfigure}{0.48\textwidth}
			\includegraphics[scale=0.73]{img/lower_bound.pdf}
			\caption{\lng{\boldmath} Graph $\Gamma = (V_\Gamma,E_\Gamma)$. Bit string $X = (x_{ij})$ determines red edges. E.g., $\{u_1, t_2\} \!\notin\! E_\Gamma$ and $\{u_1, t_3\} \!\in\! E_\Gamma$ means $x_{12} = 0$ and $x_{13} = 1$, respectively.}
			\label{fig:lower_bound_basic}
		\end{subfigure}
		\hfill
		\begin{subfigure}{0.48\textwidth}
			\includegraphics[scale=0.73, page = 2]{img/lower_bound.pdf}
			\caption{\lng{\boldmath} Graph $\Gamma$ constructed around $G = (A,E) \in \mathcal G_{k,\ell}$ with $m := |E|$ edges (in red) and from the bit string $X = (x_e)_{e \in E} \in \{0,1\}^m$. An edge $e \in E$ is part of $\Gamma$ iff $x_e = 1$. 
			}
			\label{fig:lower_bound_enhanced}
		\end{subfigure}
		\vspace*{-2mm}
		\caption{Lower bound graphs. Unweighted (left) and weighted (right).}
\end{figure}}
\vspace*{-2mm}

\lng{Property (1) is shown by the following lemma.}

\begin{lemma}
	\label{lem:source_target_distance_unweighted}
	{If $x_{ij}\!=\!1$ then $d(s_i,t_j) = h\!+\!1$ and the shortest $s_i$-$t_j$-path contains $v$, else $d(s_i,t_j) = h\!+\!2$ and it does not contain $v$.}\footnote{In the following, distances without subscript will refer to distances in $\Gamma$, i.e., $d(u,v) := d_\Gamma(u,v)$.}
\end{lemma}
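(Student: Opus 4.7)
The plan is a case analysis on $x_{ij}$. First I catalog the two ``natural'' $s_i$-$t_j$ paths in $\Gamma$. The \emph{direct} path $s_i \leadsto u_i \to t_j$, consisting of the length-$h$ chain from $s_i$ to $u_i$ followed by the $\{u_i,t_j\}$ edge, has $h+1$ hops but exists only when $x_{ij} = 1$. The \emph{bypass} path $s_i \to v \leadsto v' \to t_j$, consisting of the $\{s_i,v\}$ edge, the length-$h$ chain from $v$ to $v'$, and the $\{v',t_j\}$ edge, has $h+2$ hops and always exists.

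In the case $x_{ij} = 1$, both paths are available, so $d(s_i,t_j) \leq h+1$. For the matching lower bound, observe that $s_i$ has exactly two ``exits'' relevant to short walks: its neighbor on the chain toward $u_i$, and the node $v$. A walk leaving through $v$ already burns one edge and still needs to reach $t_j$; since $v$ is not adjacent to $t_j$, at least $h+1$ further edges are required (any walk from $v$ to $t_j$ either traverses the $v$-$v'$ chain and the $\{v',t_j\}$ edge for $h+1$ hops, or detours through some $s_{i'} \leadsto u_{i'} \to t_j$ for even more). A walk leaving through the $s_i$-$u_i$ chain must spend $h$ edges to reach $u_i$ and then at least one to reach $t_j$. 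Hence $d(s_i,t_j) = h+1$, and the shortest realisation of that distance is the direct path, which avoids $v$.

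In the case $x_{ij} = 0$, the direct path is destroyed, so the bypass certifies $d(s_i,t_j) \leq h+2$. For the lower bound I enumerate the remaining possibilities: any $s_i$-$t_j$-walk that does not use the bypass must either (a) traverse the $s_i$-$u_i$ chain to $u_i$ (cost $h$) and then reach $t_j$ via a detour through some $t_{j'}$, $u_{i'}$, costing at least $3$ more edges ($u_i \to t_{j'} \to u_{i'} \to t_j$ when the relevant bits permit, total $\geq h+3$); or (b) leave through $v$ and then either return to another source $s_{i'}$ and head down its chain ($\geq 1 + 1 + h + 1 = h+3$ hops), or continue to $v'$ (exactly the bypass). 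Thus $d(s_i,t_j) = h+2$ and every shortest path uses the $v$-$v'$ bypass, hence contains~$v$.

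The main obstacle is the exhaustive detour enumeration in case (b)/(a): the construction has enough ``alternative'' routes through the bipartite block and through the auxiliary $v$-$v'$ chain that one must verify no combination produces an $h+2$ walk avoiding the bypass when $x_{ij}=0$, and no walk beats $h+1$ when $x_{ij}=1$. This reduces to the observation that the $s_i$-$u_i$ chains and the $v$-$v'$ chain consist of degree-two internal nodes, so one cannot shortcut them; every switch between chains costs at least one extra edge, and each ``hop'' across the bipartite part (via a $u$-$t$ edge) requires a present edge, i.e.\ a bit $x_{i'j'}=1$. A careful but short case distinction then yields both equalities. Note that the $v$-containment clauses in the statement should be read as referring to the bypass route (present in the $x_{ij}=0$ case), which is the alternative used whenever the direct $u_i$-$t_j$ edge is absent.
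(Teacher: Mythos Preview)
Your argument is correct and follows essentially the same outline as the paper: identify the two candidate $s_i$--$t_j$ paths (the direct one of length $h+1$ through $u_i$, and the bypass of length $h+2$ through $v,v'$), then do a case split on $x_{ij}$ and rule out all competing routes. The paper organizes the lower-bound part a bit more compactly by observing that $U:=\{u_1,\dots,u_k,v'\}$ is a vertex cut separating $s_i$ from $t_j$, so every path must first pay $h$ hops to reach $U$ and then at least one more edge, and by noting that every cut vertex other than $u_i$ and $v'$ is already at distance $\geq h+2$ from $s_i$; this collapses your several sub-cases into a single distance-to-cut argument. Your more explicit walk enumeration is equally valid, just slightly longer. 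Your closing remark is also right: the $v$-containment clauses in the lemma as stated are swapped relative to what the construction (and the paper's own proof) actually yield --- the shortest path contains $v$ precisely when $x_{ij}=0$.
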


\begin{proof}
	 Any path from $s_i$ to $t_j$ has to cross the vertex cut $U := \{u_1, \dots ,u_k,v'\}$ (c.f., Figure \ref{fig:lower_bound_basic}). Such a path has to include a path of length $h$ to reach a node of $U$, as well as an additional edge connecting $U$ to $t_j$ and therefore $d(s_i,t_j) \geq h+1$. However, we also have $d(s_i,t_j) \leq h+2$, due to the path along the nodes $s_i, v, \ldots, v',t_j$ (c.f., Figure \ref{fig:lower_bound_basic}) that has length $h+2$.
	
	If $x_{ij} = 1$, i.e., $\{u_i, t_j\} \in E$, then the path along the nodes $s_i, \ldots, u_i,t_j$ has length $h+1$. Note that all nodes in $U \setminus \{u_i,v'\}$ are at distance at least $h+2$ from $s_i$ (c.f., Figure \ref{fig:lower_bound_basic}), so every path via one of the nodes $U \setminus \{u_i,v'\}$ has distance at least $h+3$. In the case $x_{ij} = 0$, i.e., $\{u_i, t_j\} \notin E_\Gamma$, this is also true for the path via $u_i$ and the only path with distance $h+2$ is the one via $v'$.
\end{proof}

The idea to prove the next theorem is that if the nodes in $B$ learn the distance to the nodes in $t_1, \dots, t_k$, then their combined knowledge can be used to infer the state of the random string $X$ that is collectively known by the nodes in $A$.

\begin{theorem}
	\label{thm:lower_bound_unweighted}
	Even on \emph{unweighted graphs}, any randomized algorithm that computes exact ({stateless} or stateful) routing schemes or distance oracles in the \hybridpar{\infty}{\gamma} model with constant success probability takes \smash{$\Omega(n^{1/3}/\gamma^{1/3})$} rounds. This holds for labels of size up to \smash{$c \cdot n^{2/3} \cdot \gamma^{1/3}$} (for a fixed constant $c > 0$).
\end{theorem}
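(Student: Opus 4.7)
The plan is to reduce from the node communication problem on the graph $\Gamma$ of Definition~\ref{def:unweighted_construction} and then invoke Corollary~\ref{cor:lower_bound_general_amended}. We let $X = (x_{ij})_{i,j \in [k]}$ consist of $k^2$ i.i.d.\ uniform random bits, so $H(X) = \Theta(k^2)$, and set $A := \{u_1, \ldots, u_k, t_1, \ldots, t_k\}$ and $B := \{s_1, \ldots, s_k\}$. The only edges of $\Gamma$ whose existence depends on $X$ are the $u_i$--$t_j$ edges, and all of them are incident to $A$; hence $A$ collectively knows $X$ while $X$ is independent of the initial state of every node in $V_\Gamma \setminus A$. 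By construction $\hop(A, B) = h$ and $|V_\Gamma| = \Theta(kh) =: n$.

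The first main step is to argue that once $B$ has access to the target labels $\lambda(t_1), \ldots, \lambda(t_k)$, solving the problem on $\Gamma$ enables $B$ to recover $X$. For \emph{exact distance oracles}, each $s_i$ computes $o_{s_i}(\lambda(t_j)) = d(s_i, t_j)$, which by Lemma~\ref{lem:source_target_distance_unweighted} equals $h+1$ iff $x_{ij} = 1$ and $h+2$ iff $x_{ij} = 0$, so $x_{ij}$ is determined. For \emph{exact routing schemes}, whether stateless or stateful, the first hop $\rho_{s_i}(\lambda(t_j))$ must lie on some shortest $s_i$--$t_j$ path; by Lemma~\ref{lem:source_target_distance_unweighted} this hop is the first node of the $s_i$--$u_i$ path iff $x_{ij} = 1$ and is $v$ iff $x_{ij} = 0$, again pinning down $x_{ij}$.

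The second step treats the target labels as free advance information in Corollary~\ref{cor:lower_bound_general_amended}. If the labeling scheme uses at most $L$ bits per node, transmitting the $k$ labels $\lambda(t_1), \ldots, \lambda(t_k)$ from $A$ to $B$ costs $y := kL$ free bits. Plugging $H(X) = \Theta(k^2)$, $n = \Theta(kh)$, and free budget $kL$ into the corollary gives a lower bound of
\[
\Omega\!\left(\min\!\left(\tfrac{k^2 - kL}{kh\gamma},\; h\right)\right) = \Omega\!\left(\min\!\left(\tfrac{k - L}{h\gamma},\; h\right)\right).
\]
Restricting to $L \le k/2$ keeps the first term $\Omega(k/(h\gamma))$, and balancing $k/(h\gamma) = h$ forces $k = \Theta(h^2 \gamma)$. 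Combined with $n = \Theta(kh)$ this yields $h = \Theta((n/\gamma)^{1/3})$ and $k = \Theta(n^{2/3} \gamma^{1/3})$, so the lower bound becomes $\Omega(n^{1/3}/\gamma^{1/3})$, valid whenever $L \le c \cdot n^{2/3} \gamma^{1/3}$ for a sufficiently small constant $c$.

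The main obstacle I anticipate is the label accounting: the algorithm may pick any (possibly adversarially crafted) labels $\lambda(t_j)$, potentially encoding parts of $X$. Corollary~\ref{cor:lower_bound_general_amended} is tailored to absorb exactly such a leakage, so the argument boils down to verifying that the $\lambda(t_j)$ are the \emph{only} pieces of $A$-side information the sources need after the algorithm terminates in order to extract $X$, which Lemma~\ref{lem:source_target_distance_unweighted} makes precise. A minor auxiliary check is that the constant success probability carries uniformly over the random choice of $X$ through Lemma~\ref{lem:reduce_comm_problems}, which is immediate from Definition~\ref{def:rand_algo} since the graph algorithm must succeed with probability $p$ on every possible input graph, including every realization of $\Gamma$.
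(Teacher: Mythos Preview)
Your proposal is correct and follows essentially the same approach as the paper: both reduce to the node communication problem on the graph $\Gamma$ of Definition~\ref{def:unweighted_construction}, use Lemma~\ref{lem:source_target_distance_unweighted} to argue that the sources recover $X$ from exact distances or from the first routing hop, account for the $k$ target labels as the free budget $y$ in Corollary~\ref{cor:lower_bound_general_amended}, and balance $k=\Theta(n^{2/3}\gamma^{1/3})$ against $h=\Theta(n^{1/3}/\gamma^{1/3})$. Your write-up is in fact slightly more explicit than the paper's about why the first hop of an \emph{exact} (stateless or stateful) routing path must lie on a shortest path and hence determines $x_{ij}$.
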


\begin{proof}
	Consider an algorithm $\mathcal A$ that computes exact routing schemes or distance oracles for $\Gamma$ with constant probability at least $p$.
	Assuming that the nodes in $B$ are given the distance labels for the nodes $\{t_1, \dots ,t_k\}$ in advance then after algorithm $\mathcal A$ has terminated, every node $s_i \in  B$ either knows its distance $d(s_i,t_j)$ for every $t_j$ (distance oracles) or it knows if the next node on the corresponding shortest path is $v$ or not (routing schemes).
	
	Then, by Lemma \ref{lem:source_target_distance_unweighted}, every node $s_i$ can infer for every $t_j$ if $x_{ij} = 0$ or $x_{ij} = 1$, which means that the nodes in $B$ collectively know the state of $X$. Note that this corresponds to the node communication problem of Definition \ref{def:node_comm_problem}, with the caveat that we assumed nodes in $B$ have advance knowledge of the labels $\lambda(t_1), \dots , \lambda(t_k)$.
	The combination of these labels can contain information at most \smash{$y := \sum_{i =1}^k |\lambda(t_i)|$} bits, which we allow $A$ to communicate to $B$ ``for free''. By Corollary \ref{cor:lower_bound_general_amended}, algorithm $\mathcal A$ takes at least \smash{$\min\!\big(\frac{pH(X)-1-y}{n \cdot \gamma}, h\big)$} rounds.

	The total number of nodes of $\Gamma$ is $n \in \Theta(h\cdot k)$, which leaves one degree of freedom for $k$ and $h$. We choose \smash{$k \in \Theta(n^{2/3}\gamma^{1/3})$} and \smash{$h \in \Theta(n^{1/3}/\gamma^{1/3})$}. This implies \smash{$H(X)\in \Theta(n^{4/3}\gamma^{2/3})$} by property (2) (see further above). For labels of size \smash{$c \cdot n^{2/3} \gamma^{1/3}$}, we have \smash{$y =  c \cdot \Theta(n^{4/3}\gamma^{2/3})$}. We simply choose the constant $c > 0$ small enough so that \smash{$c n^{4/3}\gamma^{2/3} \leq  \frac{pH(X)-1}{2}$} (i.e., $y$ will not change the lower bound asymptotically). Plugging all of the above into the lower bound of \smash{$\min\!\big(\frac{pH(X)-1-y}{n \cdot \gamma}, h\big)$} rounds, yields the desired bound of \smash{$\Omega(n^{1/3}/\gamma^{1/3})$} rounds.
\end{proof}

\section{Lower Bounds for Approximations}
\label{sec:lower_bounds_approx}

Our next construction relies on the existence of families of graphs that have high girth and maintain relatively high density.
We modify the basic construction above, essentially by replacing the upper part of $\Gamma$ with a random selection of edges from a graph of that family (and also making $\Gamma$ weighted). Besides high density we require the following.

\begin{definition} 
	\label{def:graph_family}
	$\mathcal G_{k,\ell}$ is a graph family, s.t. for all $G \!=\! (A,E) \!\in\! \mathcal G_{k,\ell}$
	\begin{enumerate}[(i)]
		\item $|A| = 2k$
		\item $G$ has (even) girth at least $\ell$		
		\item $G$ is balanced and bipartite		
	\end{enumerate}	
\end{definition}

\subsection{Weighted Construction}

Removing an edge from $G \in \mathcal G_{k,\ell}$ incurs a large detour of at least $\ell\!-\!1$ hops between the endpoints of that edge, since otherwise there would be a cycle shorter than $\ell$ in $G$. This observation is often used to prove certain bounds for low stretch subgraphs\footnote{One prominent example is the lower bound on the size of low stretch spanners.} and can be exploited to introduce a stretch into our lower bound construction. We construct this formally as follows (however, first consulting Figure \ref{fig:lower_bound_enhanced} will presumably be more helpful to the reader).

\lng{\begin{figure}[h]
	\centering
	\includegraphics[scale=0.7, page = 2]{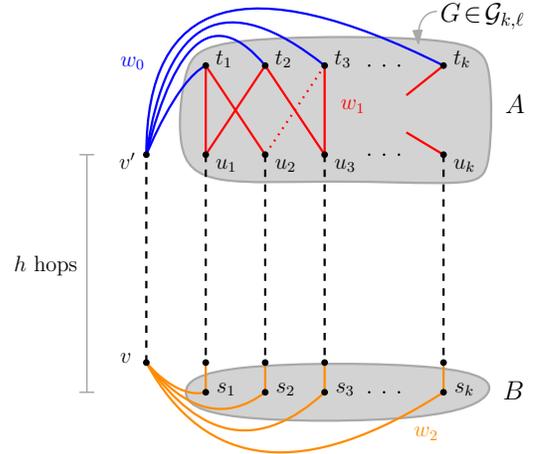}
	\caption{\lng{\boldmath} Graph $\Gamma$ constructed around $G = (A,E) \in \mathcal G_{k,\ell}$ with $m := |E|$ edges (in red) and from the bit string $X = (x_e)_{e \in E} \in \{0,1\}^m$. An edge $e \in E$ is part of $\Gamma$ iff $x_e = 1$. For instance $\{u_2,t_3\} \in E$ but $x_{\{u_2,t_3\}} = 0$, so $\{u_2,t_3\} \notin E_\Gamma$. Edge weights $ \textcolor{blue}{w_0},\textcolor{red}{w_1}, \textcolor{orange}{w_2}$ are indicated with colored edges, all others have weight 1.}
	\label{fig:lower_bound_enhanced}
\end{figure}}

\begin{definition}
	\label{def:weighted_construction}
Let $G = (A,E) \in \mathcal G_{k,\ell}$ with $m := |E|$ edges and let $\{u_1, \dots, u_k\} \cup \{t_1, \dots, t_k\} = A$ be the bipartition of $G$. Graph $\Gamma = (V_\Gamma,E_\Gamma)$ (shown in Figure \ref{fig:lower_bound_enhanced}) has a similar structure as the unweighted construction (Def.\ \ref{def:unweighted_construction}), where the main difference is the way how the nodes $\{u_1, \dots, u_k\} \cup \{t_1, \dots, t_k\}$ are connected by edges in $\Gamma$.

Let $X = (x_{e})_{e\in E} \in \{0,1\}^{m}$ be a bit string of length $m = |E|$, i.e., each bit $x_e$ corresponds to an edge $e$ of $G$. For each $u_i,t_j$ we have $\{u_i,t_j\} \in E_\Gamma$, if and only if $\{u_i,t_j\} \in E$ \textit{and} $x_{\{u_i,t_j\}} = 1$. In a slight change from the previous construction, we make the path from $v$ to $v'$ of hop length $h\!-\!1$. The weights of $\Gamma$ are assigned as follows. Edges between the node $v'$ and some $t_j$ have weight $w_0$. Edges between nodes $u_i,t_j$ have weight $w_1$. Edges incident to some $s_i$ have weight $w_2$.
\end{definition}

We have the following properties.

\begin{enumerate}[(1)]
	\item Let $e \!=\! \{u_i,t_j\} \!\in\! E$. $w_0,w_1,w_1$ can be chosen s.t.\ $d(s_i,t_j)$ is much longer for $x_e = 0$ than for $x_e = 1$ (c.f., Lemma \ref{lem:source_target_distance_weighted}).
	\item For each edge $e \in E$ of $G$, set $x_{e} = 1$ i.i.d.\ with probability $\frac{1}{2}$, else $x_{e} = 0$. This maximizes the entropy $H(X) = \frac{m}{2}$.
	\item For nodes $A$ of $G$ and $B := \{s_1, \dots, s_k\}$ we have $\hop(A,B) = h$.
\end{enumerate}

%

We analyze distances $d(s_i,t_j)$ between nodes $s_i,t_j$ with $e = \{u_i, t_j\} \in E$ for the two cases that $e$ is part of $\Gamma$ ($x_e = 1$), or not ($x_{e} = 0$). Conceptually, we choose weights $w_1 \ll w_0$, such that we can observe a large difference in $d(s_i,t_j)$ depending on $x_e$. \shrt{The proof is deferred to Appendix \ref{sec:lower_bounds_approx_proofs}.} 

\begin{lemma}
	\label{lem:source_target_distance_weighted}
	Consider $\Gamma$ (Def.\ \ref{def:weighted_construction}), constructed from $G = (A,E) \in \mathcal G_{k,\ell}$ and $X$. Let $w_1 < w_0 < (\ell-1) w_1$. Let $e = \{u_i,t_j\} \in E$. Then we have:
	\begin{enumerate}[(i)]
		\item The shortest $s_i$-$t_j$-path contains $v$ if and only if $x_e = 0$.
		\item If $x_{e}=1$ then $d(s_i,t_j) = w_2 + w_1 + h - 1$.\\ If $x_{e}=0$, then $d(s_i,t_j) = w_2 + w_0 + h - 1$.
	\end{enumerate}
	 
\end{lemma}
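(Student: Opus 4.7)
The plan is to enumerate the structural types of $s_i$-$t_j$ paths in $\Gamma$, lower-bound the weight of each, and then use the hypotheses $w_1 < w_0 < (\ell-1)w_1$ to identify the shortest one in each of the two cases of $x_e$. The key structural observation is that $\{u_1, \dots, u_k, v'\}$ is a vertex cut separating the sources (together with $v$ and all internal path vertices) from the targets, so every $s_i$-$t_j$ path enters the target layer through exactly one of these vertices. I would classify paths by this final entry point: \emph{type B} paths enter through $v'$, while \emph{type A} paths enter through some $u_{i'}$.

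For type B, the final edge $v' \to t_j$ costs $w_0$, and the cheapest way to reach $v'$ from $s_i$ is $s_i \to v \to \cdots \to v'$, of total weight $w_2 + (h-1)$ (the first edge is incident to $s_i$ hence weighs $w_2$; the remaining $h-1$ edges of the $v$-$v'$ path have default weight $1$). Hence the shortest type-B path weighs $w_2 + (h-1) + w_0$ and contains $v$. For type A, the final edge $u_{i'} \to t_j$ contributes $w_1$ and requires $\{u_{i'},t_j\}\in E_\Gamma$. I would first argue that only $i' = i$ is cost-effective: reaching any $u_{i'}$ with $i' \neq i$ forces the route through $v$ and then across a second source $s_{i'}$, incurring two extra $w_2$-edges that cannot be recovered. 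After paying $w_2 + (h-1)$ to reach $u_i$, the path completes within the $G$-subgraph of $E_\Gamma$ using weight-$w_1$ edges. If $x_e = 1$, the edge $e$ itself is available and type A costs exactly $w_2 + (h-1) + w_1$. If $x_e = 0$, any alternative $u_i$-$t_j$ path in $E_\Gamma \subseteq E$ combined with $e$ would close a cycle in $G$; by the girth-$\ell$ hypothesis (Def.\ \ref{def:graph_family}(ii)) this cycle has at least $\ell$ edges, so the alternative path uses at least $\ell-1$ edges, yielding type-A cost at least $w_2 + (h-1) + (\ell-1)w_1$.

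The conclusion now drops out from the weight inequalities. When $x_e = 1$, the hypothesis $w_1 < w_0$ gives type A $ < $ type B, so the shortest path avoids $v$ and $d(s_i,t_j) = w_2 + w_1 + h - 1$. When $x_e = 0$, the hypothesis $w_0 < (\ell-1)w_1$ gives type B $ < $ type A, so the shortest path passes through $v$ and $d(s_i,t_j) = w_2 + w_0 + h - 1$. Claims (i) and (ii) follow simultaneously. The main subtlety to handle carefully is the exclusion of \emph{mixed} detours (leaving $s_i$ through $v$ and then re-entering the transit layer via another source, or bouncing through $v'$ and a target back to a transit); each such detour visits an extra source vertex, accumulating at least an additional pair of $w_2$-edges, which strictly dominates any potential saving under the given weight regime.
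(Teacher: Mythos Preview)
Your proof follows essentially the same line as the paper's. The paper takes the cut $U=\{u_1,\dots,u_k,v\}$ and splits into ``paths that contain $v$'' versus ``paths that do not''; you take the cut $\{u_1,\dots,u_k,v'\}$ and split by the last cut-vertex the path uses. These decompositions are equivalent, and both then invoke the girth of $G$ to lower-bound the $G$-only detour from $u_i$ to $t_j$ by $(\ell-1)w_1$, followed by the same comparison using $w_1 < w_0 < (\ell-1)w_1$.

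One correction to your closing remark on mixed detours: it is not true that every such detour picks up an extra pair of $w_2$-edges. The route $s_i \to \cdots \to u_i \to t_{j'} \to v' \to t_j$ (in your taxonomy a type-B path whose approach to $v'$ bypasses $v$ entirely) touches no additional source vertex. It is still dominated, but for a different reason: its $u_i$-$t_j$ suffix costs $w_1+2w_0 > w_0$. Likewise, reaching a transit $u_{i'}$ with $i'\neq i$ does not literally force a trip through $v$ and a second source --- one can get there from $u_i$ along $G$-edges --- but that possibility is already absorbed by your girth bound on the full $u_i$-$t_j$ walk inside the $G$-subgraph. So your bounds on the two types stand; only the stated reason for excluding these side cases needs adjusting. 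The paper's own proof handles these side cases at the same informal level (it asserts the not-via-$v$ path ``has to traverse $G'$'' without explicitly ruling out a bounce through $v'$).
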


\lng{\begin{proof}
	Let $U := \{u_1, \dots ,u_k,v\}$ be the vertex cut that separates any $s_i$ from any $t_j$. The shortest \textit{simple} $s_i$-$t_j$-path that crosses $U$ via $v$ has length $w_2 + w_0 + h - 1$ \textit{independently} from $x_e$ (simple implies that a path can not ``turn around'' and go via $u_i$). 
	
	Consider the shortest $s_i$-$t_j$-path that does \textit{not} contain $v$. In the case $x_e=1$, i.e., $e = \{u_i,t_j\}$ exists in $\Gamma$, this $s_i$-$t_j$-path is forced to cross $U$ via $u_i$ and then goes directly to $t_j$ via $e$, and thus has length $w_2 + w_1 + h - 1$.
	
	Let us analyze the length of the $s_i$-$t_j$-path that does \textit{not} contain $v$ for the case $x_e=0$ (i.e., $e \notin E_\Gamma$). Let $G'$ be the subgraph that corresponds to $G$ after removing each edge $e' \in E$ with $x_{e'}=0$. Then that $s_i$-$t_j$-path has to traverse $G'$ to reach $t_j$.	
	The sub-path from $u_i$ to $t_j$ in $G'$ has to use at least $(\ell\!-\!1)$ edges, because otherwise $e = \{u_i,t_j\}$ would close a loop of less than $\ell$ edges in $G'$ (and thus also in $G$), contradicting the premise that $G$ has girth $\ell$. Thus, for $e \notin E_\Gamma$ any $s_i$-$t_j$-path that does not contain $v$ has length \textit{at least} $w_2+(\ell\!-\!1)w_1+h-1$.
	
	We sum up the cases. If $x_e = 1$, then the $s_i$-$t_j$-path \textit{not} containing $v$ of length $w_2 + w_1 + h - 1$ is shorter than the one via $v$ of length $w_2 + w_0 + h - 1$, since $w_1 < w_0$. If $x_e = 0$, then the $s_i$-$t_j$-path via $v$ of length $w_2 + w_0 + h - 1$ is shorter than the one not containing $v$ of length at least $w_2 + (\ell-1)w_1 + h - 1$ due to $w_0 < (\ell\!-\!1) w_1$.
\end{proof}}

For the reduction from the node communication problem to our concrete routing and distance oracle problems, we start with a technical lemma that analyzes the running time of any algorithm $\mathcal A$ that solves the node communication problem in $\Gamma$ for the dedicated node sets $A,B$ and the random variable $X$ from which $\Gamma$ is constructed.

In particular, we express the lower bound from Theorem \ref{thm:lower_bound_node_communication} as function of $n := |V_\Gamma|$, the density of $G \in \mathcal G_{k,\ell}$ given by a parameter $\delta$ and the global communication capacity $\gamma$. 
The lemma is the result of balancing a trade off between the distance $h = \hop(A,B)$ and the number of nodes $\Theta(k)$ of $G$ (which governs the entropy $H(X)=\Theta(k^{1+\delta})$ when the density of $G$ is fixed).\footnote{Naturally, the number of edges $\Theta\big(k^{1+\delta}\big)$ of $G \in \mathcal G_{k,\ell}$ is also bounded through $\delta < \frac{\ell-1}{2}$, c.f., Lemma \ref{lem:girth_edges_bound} in Appendix \ref{sec:girth_density}.} \shrt{The proof is deferred to Appendix \ref{sec:lower_bounds_approx_proofs}.}

\begin{lemma}
	\label{lem:approx_lower_bound_groundwork}
	Consider $\Gamma$ constructed from random variable $X$ and $G = (A,E) \in \mathcal G_{k,\ell}$ (Def.\ \ref{def:weighted_construction}) with \smash{$|E| = \Theta\big(k^{1+\delta}\big)$} edges (for $\delta > 0$ and $k$ of our choosing).
	Let $\mathcal A$ be an algorithm that solves the node communication problem on $\Gamma$ with $X$, node sets $A,B \subset V_\Gamma$ and $h = hop(A,B)$ in the \hybridpar{\infty}{\gamma} model (all parameters as in Def.\ \ref{def:weighted_construction}). We can choose $k=\Theta\big(\frac{n}{h}\big)$ such that $\mathcal A$ takes \smash{$\Omega\big((\tfrac{n^\delta}{\gamma})^{\frac{1}{2+\delta}}\big)$} rounds. There exists a constant $c>0$ such that this holds even when we allow exchanging \smash{$c \cdot k^{1+\delta}$} bits from $A$ to $B$ for free.
\end{lemma}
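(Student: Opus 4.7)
The plan is to apply Corollary \ref{cor:lower_bound_general_amended} (the amended node communication lower bound) directly to the graph $\Gamma$ from Definition \ref{def:weighted_construction}, and then tune the two free parameters $k$ and $h$ to maximize the resulting bound. Conceptually the proof is just bookkeeping, since the hard information-theoretic work has already been done in Section \ref{sec:node_communication_problem}.

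First I would record the relevant quantities as functions of $k$ and $h$. By property (2) of the construction and the assumption $|E| = \Theta(k^{1+\delta})$, the entropy satisfies $H(X) = |E|/2 = \Theta(k^{1+\delta})$. The total number of nodes of $\Gamma$ is $n = \Theta(hk)$: there are $k$ source nodes $s_i$, each connected to its $u_i$ by a path of $h$ hops; the $2k$ nodes of $G$; and the single $v$-$v'$ path of $h-1$ hops, so all the other contributions are absorbed into $\Theta(hk)$. This is equivalent to $k = \Theta(n/h)$, which is the choice announced in the statement.

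Next I plug these into Corollary \ref{cor:lower_bound_general_amended}, which gives an expected round complexity of at least
\[
\min\!\Big(\tfrac{p\,H(X)-1-y}{n\gamma},\, h\Big)
\]
for any algorithm solving the node communication problem when $A$ is allowed to send $y$ bits to $B$ for free. Setting $y = c \cdot k^{1+\delta}$ for a sufficiently small constant $c>0$, we still have $p\,H(X)-1-y = \Omega(k^{1+\delta})$, because $H(X) = \Theta(k^{1+\delta})$ is linear in the bound on $y$ and $p$ is a constant. Substituting $k = \Theta(n/h)$ the first argument of the $\min$ becomes
\[
\Omega\!\Big(\tfrac{k^{1+\delta}}{n\gamma}\Big) = \Omega\!\Big(\tfrac{(n/h)^{1+\delta}}{n\gamma}\Big) = \Omega\!\Big(\tfrac{n^{\delta}}{h^{1+\delta}\gamma}\Big).
\]

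Finally I balance the two arguments of the minimum: setting $n^{\delta}/(h^{1+\delta}\gamma) = h$ yields $h^{2+\delta} = n^{\delta}/\gamma$, i.e., $h = \Theta\big((n^{\delta}/\gamma)^{1/(2+\delta)}\big)$. With this choice of $h$ (which in turn fixes $k = \Theta(n/h)$) both arguments of the $\min$ equal $\Theta\big((n^{\delta}/\gamma)^{1/(2+\delta)}\big)$, giving the claimed lower bound. The only potential obstacle is verifying that the chosen $h$ is compatible with the construction (e.g.\ $h \geq 1$ and the implied $k$ is a valid size for some $G \in \mathcal{G}_{k,\ell}$), and that lower-order additive constants coming from $y$ and the ``$-1$'' term in the corollary are indeed absorbed; both are handled by choosing the absolute constant $c$ small enough.
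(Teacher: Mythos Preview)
Your proof is correct and follows essentially the same approach as the paper: apply Corollary~\ref{cor:lower_bound_general_amended}, use $H(X)=\Theta(k^{1+\delta})$ and $n=\Theta(hk)$, then balance the two arguments of the $\min$ to obtain $h=\Theta\big((n^\delta/\gamma)^{1/(2+\delta)}\big)$, absorbing the free $y=c\cdot k^{1+\delta}$ bits by taking $c$ small enough. The only cosmetic difference is that the paper solves for $k$ and $h$ simultaneously whereas you substitute $k=\Theta(n/h)$ first and solve for $h$ alone, which is equivalent.
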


\lng{\begin{proof}
	As $\mathcal A$ solves the node communication problem (Def.\ \ref{def:node_comm_problem}) it takes at least \smash{$\min\!\big(\frac{pH(X)-1-y}{n \cdot \gamma}, h\big)$} rounds by Corollary \ref{cor:lower_bound_general_amended}, where $H(X) \in \Theta (k^{1+\delta})$ (see property (2), further above), $p$ is the constant success probability and $y$ describes the ``free'' communication. 
	
	The arguments of \smash{$\min\!\big(\frac{pH(X)-1-y}{n \cdot \gamma}, h\big)$} behave inversely, since increasing the distance $h = hop(A,B)$ leaves only \smash{$k \in \Theta\big(\frac{n}{h}\big)$} nodes for the graph $G$, which decreases $H(X) \in \Theta (k^{1+\delta})$. So in order to maximize the number of rounds given by the min function, we solve the equation \smash{$\frac{pH(X)-1-y}{n \cdot \gamma} = h$} subject to $k \cdot h = \Theta(n)$. Slashing constants and neglecting $y$ for now, this simplifies as follows
	\[
	\Theta \big( \tfrac{k^{1+\delta}}{n\cdot \gamma} \big) = \Theta (h), \quad \text{subject to} \quad k \cdot h = \Theta(n).
	\] 
	The solution is \smash{$k = \Theta\big(n^{\frac{2}{2+\delta}}\cdot \gamma^{\frac{1}{2+\delta}}\big)$} and \smash{$h = \Theta\big(n^{\frac{\delta}{2+\delta}}/ \gamma^{\frac{1}{2+\delta}}\big)$} (which the willing reader may verify by inserting), resulting in a lower bound of \smash{$\Omega(n^{\frac{\delta}{2+\delta}} / \gamma^{\frac{1}{2+\delta}}) = \Omega\big((\tfrac{n^\delta}{\gamma})^{\frac{1}{2+\delta}}\big)$} rounds.
	
	When we factor the free communication of $y = c \cdot k^{1+\delta}$ bits back into the equation \smash{$\frac{pH(X)-1-y}{n \cdot \gamma} = h$}, then there are no asymptotic changes to the outcome of our calculations as long as we choose the constant $c$ such that \smash{$c \cdot k^{1+\delta} \leq \frac{pH(X)-1}{2}$} (for all $n$ bigger than some constant $n_0$).
\end{proof}}

\subsection{Distance Oracles}



The first lower bound with stretch is for the distance oracle problem. 
The idea is as follows. In case there is a direct edge $e = \{u_i, t_j\}$ (i.e., $x_e = 1$), the distance from $s_i$ to $t_j$ is almost $\ell\!-\!1$ times shorter, than if that is not the case. Hence, by learning an approximation of $d(s_i,t_j)$ with a stretch slightly lower than $\ell\!-\!1$, the node $s_i$ can conclude if $e$ exists or not, i.e., if $x_e = 1$ or $x_e = 0$. Hence the nodes $B = \{s_1, \dots, s_k\}$ collectively learn the random variable $X$ and thus solve the node communication problem.\footnote{Note that for the distance oracles lower bound, the path from $v$ to $v'$ could be removed from $\Gamma$, since $s_i$ learns $x_e = x_{\{s_i,t_j\}}$ from the distance estimate to $t_j$ and not from the next routing node. Since it does not hurt either, we keep $\Gamma$ uniform for all our lower bounds.}

This lemma is kept general such that we can later plug in any graph with a given density parameter $\delta$ and girth $\ell$. Note that the girth $\ell$ fundamentally limits the density parameter $\delta$; the correspondence between the two is roughly $\delta \in \bigO\big(\frac{1}{\ell}\big)$ as shown in Appendix \ref{sec:girth_density}. For a more intuitive understanding we suggest plugging in the complete bipartite graph $G_{k,k}$ which has girth $\ell=4$ and $\Theta(k^2)$ edges (i.e., density parameter $\delta=1$). \shrt{The proof is deferred to Appendix \ref{sec:lower_bounds_approx_proofs}.}

\begin{lemma}
	\label{lem:lower_bound_distance_oracle}
	Consider $\Gamma$ constructed from $G = (A,E) \in \mathcal G_{k,\ell}$ with $|E| = \Theta\big(k^{1+\delta}\big)$ edges for some $\delta > 0$. 
	Any algorithm that solves the \emph{distance oracle problem} on $\Gamma$ with stretch $\alpha_\ell = \ell\!-\!1\!-\!\varepsilon$ (for any const.\ $\eps > 0$) and constant success probability in the \hybridpar{\infty}{\gamma} model takes \smash{$\Omega\big((\tfrac{n^\delta}{\gamma})^{\frac{1}{2+\delta}}\big)$} rounds, for labels up to size 
	\smash{$c \cdot n^{\frac{2\delta}{2+\delta}}\cdot \gamma^{\frac{\delta}{2+\delta}}$} (for a fixed const.\ $c > 0$).
\end{lemma}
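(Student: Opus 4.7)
The plan is to reduce from the node communication problem on the graph $\Gamma$ of Definition \ref{def:weighted_construction}, instantiated with $A = \{u_1,\dots,u_k,t_1,\dots,t_k\}$, $B = \{s_1,\dots,s_k\}$, and the random string $X = (x_e)_{e\in E}$. Given any algorithm $\mathcal{A}$ that solves distance oracles on $\Gamma$ with stretch $\alpha_\ell = \ell - 1 - \varepsilon$ and constant success probability, I want to show that after $\mathcal{A}$ terminates, the nodes in $B$ can collectively recover $X$.

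First I would choose the edge weights so that a stretch-$\alpha_\ell$ oracle can discriminate $x_e = 0$ from $x_e = 1$ for every $e = \{u_i,t_j\} \in E$. Take $w_2 = 1$, let $w_1$ be a polynomial in $n$ to be fixed below, and put $w_0 := (\ell - 1 - \tfrac{\varepsilon}{2})\,w_1$. Since $\ell \geq 4$ (bipartite $\Rightarrow$ even girth) and $\varepsilon > 0$ is small, we have $w_1 < w_0 < (\ell - 1)w_1$, so Lemma \ref{lem:source_target_distance_weighted} applies and yields $d(s_i,t_j) = w_2 + w_1 + h - 1$ when $x_e = 1$ and $d(s_i,t_j) = w_2 + w_0 + h - 1$ when $x_e = 0$. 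The key inequality I need is
\[
\alpha_\ell \,(w_2 + w_1 + h - 1) \;<\; w_2 + w_0 + h - 1,
\]
which rearranges to $(\ell - 2 - \varepsilon)(w_2 + h - 1) < \tfrac{\varepsilon}{2}\,w_1$. Because $\ell$ and $\varepsilon$ are constants while $w_2$ and $h$ are polynomial in $n$, taking $w_1$ to be a sufficiently large polynomial in $n$ makes this hold (and all weights stay polynomial in $n$, as required).

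Next, for any $e = \{u_i,t_j\} \in E$, the oracle's stretch guarantee gives $o_{s_i}(\lambda(t_j)) \leq \alpha_\ell (w_2 + w_1 + h - 1) < w_2 + w_0 + h - 1$ when $x_e = 1$, whereas $o_{s_i}(\lambda(t_j)) \geq d(s_i,t_j) = w_2 + w_0 + h - 1$ when $x_e = 0$. So, given $\lambda(t_j)$, $s_i$ decides $x_e$ by thresholding its oracle output against $w_2 + w_0 + h - 1$. Since the combinatorial structure of $G \in \mathcal{G}_{k,\ell}$ is fixed and can be hard-coded into every node (only the bits $x_e$ are random), each $s_i$ determines $x_e$ for all edges $e$ incident to $u_i$, and together $B$ recovers $X$, thereby solving the node communication problem with the same success probability as $\mathcal{A}$.

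Finally I would account for the labels. The oracle at $s_i$ requires $\lambda(t_j)$ as input, which is not part of the initial state of $B$. I would therefore supply $\lambda(t_1),\dots,\lambda(t_k)$ to $B$ as ``free'' communication from $A$. If each label has size at most $c \cdot n^{2\delta/(2+\delta)}\gamma^{\delta/(2+\delta)}$, the total free budget is $k \cdot c \cdot n^{2\delta/(2+\delta)}\gamma^{\delta/(2+\delta)}$; with $k = \Theta\bigl(n^{2/(2+\delta)}\gamma^{1/(2+\delta)}\bigr)$ as chosen in Lemma \ref{lem:approx_lower_bound_groundwork}, this is $\Theta(c \cdot k^{1+\delta})$. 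Picking the constant $c$ small enough so that it fits inside the slack of Lemma \ref{lem:approx_lower_bound_groundwork}, that lemma directly yields the desired lower bound of $\Omega\bigl((n^\delta/\gamma)^{1/(2+\delta)}\bigr)$ rounds.

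The main obstacle is the simultaneous satisfaction of the weight constraints from Lemma \ref{lem:source_target_distance_weighted} and the stretch-separation inequality above; the calculation makes clear that this forces $w_0$ to sit slightly below $(\ell-1)w_1$ (so that the gap $w_0 - \alpha_\ell w_1$ is positive) and $w_1$ to be polynomially larger than $w_2 + h$. Everything else is routine bookkeeping to check that the label-size budget does not exceed the entropy slack of $H(X) = \Theta(k^{1+\delta})$.
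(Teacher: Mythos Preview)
Your proof is correct and follows essentially the same approach as the paper: set $w_2=1$, $w_0=(\ell-1-\tfrac{\varepsilon}{2})w_1$, verify the separation inequality $\alpha_\ell\cdot d_1 < d_0$ by making $w_1$ a sufficiently large polynomial in $n$, conclude that each $s_i$ can threshold its oracle output to recover $x_e$, and then invoke Lemma~\ref{lem:approx_lower_bound_groundwork} with the $k$ target labels charged against the free-communication budget. The paper's presentation introduces a scaling parameter $t=w_1$ and expands the inequality slightly differently, but the argument and all choices of parameters are the same.
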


\lng{\begin{proof}[Proof of Lemma \ref{lem:lower_bound_distance_oracle}]
		Set $w_2 = 1$ (this weight is only needed later for stateful routing scheme lower bounds). To make the idea described above work, we have to make the unweighted edges of $\Gamma$ (the $h$-hop $s_i$-$u_i$-paths) insignificant for the approximation ratio by scaling the weights $w_1,w_0$ by a large factor $t$.
		For that purpose we introduce the parameter $t>0$, which is specified later. 	
		We choose $w_1 = t$ and $w_0 =(\ell\!-\!1\!-\!\frac{\varepsilon}{2})\cdot t$. In particular, this means $w_1 < w_0 < (\ell-1) w_1$ (the precondition of Lemma \ref{lem:source_target_distance_weighted}).
		
		Let $e = \{u_i,t_j\} \in E$. Let $d_1 = d(s_i,t_j)$ for the case $x_e = 1$ and $d_0 = d(s_i,t_j)$ for the case $x_e = 0$. By Lemma \ref{lem:source_target_distance_weighted}, we know that $d_1 = t \!+\! h$ and \smash{$d_0 = (\ell\!-\!1\!-\!\frac{\varepsilon}{2})\cdot t + h$}.		
		Let $\mathcal A$ be an approximation algorithm for the distance oracle problem with stretch $\alpha_\ell = \ell\!-\!1\!-\!\varepsilon$. 
		
		For the cases $x_e = 1$ and $x_e = 0$, respectively, let $\tilde d_1$ and $\tilde d_0$ be distance approximations of $d(s_i,t_j)$ with stretch $\alpha_\ell$ that $s_i$ determines with its local table and the label of $t_j$ (which we computed with $\mathcal A$). 
		Note that our claims about $\tilde d_0, \tilde d_1$ will only depend on $x_e$ and are independent from $x_{e'}$ of other edges $e'\in E \setminus \{e\}$ (even though the \textit{exact} value of $\tilde d_0, \tilde d_1$ might depend on the $x_{e'}$).
		
		The goal is to show 
		$\tilde d_1 < d \leq \tilde d_0$ for some constant $d > 0$, which enables $s_i$ to distinguish $x_e=1$ from $x_e=0$ from its approximation of $d(s_i,t_j)$.
		We know that $\tilde d_1 \leq \alpha_\ell \cdot d_1 = (\ell\!-\!1\!-\!\varepsilon)(t \!+\! h)$. We also have $\tilde d_0 \geq d_0$ since our approximations are supposed to be one-sided.
		Then 
		\begin{align*}
		\tilde d_1 & \leq (\ell\!-\!1\!-\!\varepsilon)(t \!+\! h)\\ 		
		& = (\ell\!-\!1\!-\!\tfrac{\varepsilon}{2})\cdot t + h + (\ell\!-\!2\!-\!\varepsilon) \cdot h -\tfrac{\varepsilon}{2}\cdot t \tag*{\small\text{\textit{expand}}}\\		
		& < (\ell\!-\!1\!-\!\tfrac{\varepsilon}{2})\cdot t + h \tag*{\small\text{\textit{for large enough} $t$}}\\
		& = d_0 \leq \tilde d_0.
		\end{align*}
		
		The strict inequality is obtained by choosing $t > \frac{2(\ell-2-\varepsilon)}{\varepsilon} \cdot h \in \Theta(\frac{\ell h}{\varepsilon})$. Note that edge weights remain polynomial in $n$ with this choice of $t$ since $\ell, h \in O(n)$ and $\varepsilon$ is constant. 	
		So we get $\tilde d_1 < d_0 \leq \tilde d_0$, which implies the following. Let \smash{$\tilde d(s_i,t_j)$} be the distance estimate that $s_i$ actually outputs. Then it is $x_e=0$ if \smash{$\tilde d(s_i,t_j) < d_0$}, else it is $x_e=1$. Hence the nodes $B := \{s_1, \dots, s_k\}$ collectively learn $X$ and thus solve the node communication problem, which takes \smash{$\Omega\big((\tfrac{n^\delta}{\gamma})^{\frac{1}{2+\delta}}\big)$} rounds by Lemma \ref{lem:approx_lower_bound_groundwork}.
		
		Two things remain to be mentioned. First, we can assume that each $s_i$ also has advance knowledge of $d_0$, as this does not carry any information about $X$ to the nodes $B = \{s_1, \dots, s_k\}$ and therefore does not make the node communication problem easier. 
		
		Second, the nodes $s_i$ can only produce the distance estimations \smash{$\tilde d(s_i,t_j)$} when they are also provided with the labels $\lambda(t_1), \dots, \lambda(t_k)$.
		Here we assume that the nodes $s_i$ get these labels in advance as part of the contingent of ``free'' communication that we budgeted for in Lemma \ref{lem:approx_lower_bound_groundwork} and which does not make the node communication problem asymptotically easier. 
		
		In particular let $c_1>0$ be the constant from Lemma \ref{lem:approx_lower_bound_groundwork}, such that we are allowed \smash{$y = c_1 k^{1+\delta}$} bits of free communication in total. This leaves \smash{$c_1 k^{\delta}$} bits for each of the $k$ labels $\lambda(t_1), \dots, \lambda(t_k)$. In the proof of Lemma \ref{lem:approx_lower_bound_groundwork} we chose \smash{$k = c_2 \cdot n^{\frac{2}{2+\delta}}\cdot \gamma^{\frac{1}{2+\delta}}$} (for some constant $c_2 > 0$), resulting in a label size of at most \smash{$c_1 c_2 \cdot n^{\frac{2\delta}{2+\delta}}\cdot \gamma^{\frac{\delta}{2+\delta}}$}.
\end{proof}}


It remains to insert graphs $G \in \calG_{k,\ell}$ into Lemma \ref{lem:lower_bound_distance_oracle}. We aim for graphs $G = (V,E) \in \calG_{k,\ell}$ with $|E| \in \Theta\big(k^{1+\delta}\big)$ that maximize both girth $\ell$ and density parameter $\delta$. As outlined in Appendix \ref{sec:girth_density}, these are opposing objectives, and for even girth $\ell \geq 4$ we know that $\delta \in \bigO\big(\frac{2}{\ell-2}\big)$ (from applying Lemma \ref{lem:girth_edges_bound} on uneven girth $\ell-1$). Bipartite graphs of girth $\ell$ that reach $\delta \in \Theta\big(\frac{2}{\ell-2}\big)$ can be constructed for small girth $\ell$ (summarized in Lemma \ref{lem:low_even_girth_graph_density}) from which we obtain Theorem \ref{thm:lower_bound_distance_oracle_small_stretch}. But for higher girth we have to settle for $\delta$ below this threshold (see Lemma \ref{lem:high_even_girth_graph_density}), this is reflected in Theorem \ref{thm:lower_bound_distance_oracle_large_stretch}.

\begin{theorem}
	\label{thm:lower_bound_distance_oracle_small_stretch}
	Any algorithm that solves the \emph{distance oracle problem} in the \hybridpar{\infty}{\gamma} model with constant success probability with
	\begin{itemize}
		\item stretch $3\!-\!\eps$ takes \smash{$\Omega\big((\tfrac{n}{\gamma})^{\frac{1}{3}}\big)$} rounds for label size \smash{$\leq c \cdot (n^2 \gamma)^{\frac{1}{3}}$}
		\item stretch $5\!-\!\eps$ takes \smash{$\Omega\big(\tfrac{n^{1/5}}{\gamma^{2/5}}\big)$} 
		rounds for label size \smash{$\leq c \cdot (n^2 \gamma)^{\frac{1}{5}}$}
		\item stretch $7\!-\!\eps$ takes
		\smash{$\Omega\big(\tfrac{n^{1/7}}{\gamma^{3/7}}\big)$}
		rounds for label size \smash{$\leq c \cdot (n^2 \gamma)^{\frac{1}{7}}$}
		\item stretch $11\!-\!\eps$ takes
		\smash{$\Omega\big(\tfrac{n^{1/11}}{\gamma^{5/11}}\big)$}
		rounds for label size \smash{$\leq c \cdot (n^2 \gamma)^{\frac{1}{11}}$}
	\end{itemize}
	for any const.\ $\eps > 0$ and a fixed const.\ $c > 0$.
\end{theorem}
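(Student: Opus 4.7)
The plan is to obtain the theorem as a direct corollary of Lemma~\ref{lem:lower_bound_distance_oracle} by plugging in, for each stretch bound, an appropriate extremal graph $G \in \mathcal{G}_{k,\ell}$ whose girth matches the stretch and whose density $\delta$ is as large as possible. Specifically, a stretch bound of the form $\ell-1-\eps$ in Lemma~\ref{lem:lower_bound_distance_oracle} requires an even girth of $\ell$, so the four cases of the theorem correspond to girths $\ell \in \{4,6,8,12\}$. For these four girths, extremal bipartite graphs attaining the Moore-type density bound $|E| = \Theta(k^{1+\delta})$ with $\delta = \frac{2}{\ell-2}$ are known to exist (the complete bipartite graph for $\ell=4$; the incidence graphs of projective planes, generalized quadrangles and generalized hexagons for $\ell = 6,8,12$ respectively), and these are exactly the constructions summarized in the low-girth density lemma (e.g.\ Lemma~\ref{lem:low_girth_graph_density}) cited earlier in the paper.

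First I would state that for each of the four cases $\ell \in \{4,6,8,12\}$, the corresponding values of $\delta$ are $1, \tfrac{1}{2}, \tfrac{1}{3}, \tfrac{1}{5}$. I would then simply substitute these into the conclusion of Lemma~\ref{lem:lower_bound_distance_oracle}, which gives a round lower bound of $\Omega\bigl((n^\delta/\gamma)^{1/(2+\delta)}\bigr)$ for labels up to size $c \cdot n^{2\delta/(2+\delta)}\gamma^{\delta/(2+\delta)}$. A short arithmetic check then verifies each bullet:
\begin{itemize}
\item $\ell=4, \delta=1$: exponent $\tfrac{1}{2+1}=\tfrac{1}{3}$, giving $(n/\gamma)^{1/3}$ rounds and labels of size $n^{2/3}\gamma^{1/3} = (n^2\gamma)^{1/3}$.
\item $\ell=6, \delta=\tfrac{1}{2}$: exponent $\tfrac{1}{5/2}=\tfrac{2}{5}$, giving $n^{1/5}/\gamma^{2/5}$ rounds and labels of size $(n^2\gamma)^{1/5}$.
\item $\ell=8, \delta=\tfrac{1}{3}$: exponent $\tfrac{3}{7}$, giving $n^{1/7}/\gamma^{3/7}$ rounds and labels of size $(n^2\gamma)^{1/7}$.
\item $\ell=12, \delta=\tfrac{1}{5}$: exponent $\tfrac{5}{11}$, giving $n^{1/11}/\gamma^{5/11}$ rounds and labels of size $(n^2\gamma)^{1/11}$.
\end{itemize}

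The heart of the argument has already been done in Lemma~\ref{lem:lower_bound_distance_oracle}, so the only ingredient that needs justification is the existence of the extremal bipartite graph families. The main (non-routine) point is therefore invoking the known incidence-graph constructions from finite geometry: these are exactly the girths at which the Moore-type bound is known to be tight (and where Erd\H{o}s' girth conjecture has been verified), which is why the theorem is stated specifically for stretches $3, 5, 7, 11$ minus $\eps$ and not for other intermediate values. I do not expect any real obstacle beyond carefully citing the right extremal family in each case and verifying that the bipartite, balanced, even-girth conditions of Definition~\ref{def:graph_family} are met; all four classical families satisfy these. Once the four substitutions are carried out, the theorem follows.
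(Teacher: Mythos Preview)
Your proposal is correct and matches the paper's proof essentially verbatim: the paper also simply invokes the existence of balanced bipartite graphs of girth $\ell\in\{4,6,8,12\}$ with $\Theta(k^{1+2/(\ell-2)})$ edges (its Lemma~\ref{lem:low_even_girth_graph_density}) and plugs $\delta(\ell)=\tfrac{2}{\ell-2}$ into Lemma~\ref{lem:lower_bound_distance_oracle}. The only cosmetic difference is that the paper cites its own packaged lemma for the bipartite even-girth families rather than naming the incidence-graph constructions directly.
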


\begin{proof}
	By Lemma \ref{lem:low_even_girth_graph_density} there are bipartite, balanced graphs with girth $\ell \in \{4,6,8,12\}$ and \smash{$\Theta(n^{1+\frac{2}{\ell-2}})$} edges, thus $\delta(\ell) = \frac{2}{\ell-2}$. In particular, we have \smash{$\delta(4) = 1, \delta(6) = \frac{1}{2}, \delta(8) = \frac{1}{3}, \delta(12) = \frac{1}{5}$}, which yield the desired results when plugged into Lemma \ref{lem:lower_bound_distance_oracle}.
\end{proof}

Applying Lemma \ref{lem:lower_bound_distance_oracle} on the densest known graphs with larger girth (see Lemma \ref{lem:high_even_girth_graph_density}), we obtain the subsequent theorem. The parametrization is complex due to a case distinction in Lemma \ref{lem:high_even_girth_graph_density}, the upshot is that for any constant stretch and sufficiently small $\gamma$ we get lower bounds polynomial in $n$ for label up to sizes that are also polynomial.

\begin{theorem}
	\label{thm:lower_bound_distance_oracle_large_stretch}
	Any algorithm that solves the \emph{distance oracle problem} in the \hybridpar{\infty}{\gamma} model with constant success probability with
	\begin{itemize}
		\item stretch $\ell \!-\! 1 \!-\!\eps$ for $\ell \geq 14$ with $\ell \equiv 2 \!\!\mod 4$ takes
		\smash{$\Omega\Big(n^{\tfrac{2}{3\ell - 8}}/\gamma^{\tfrac{3\ell -10}{6\ell - 6}}\Big)$} rounds for label size \smash{$\leq c \cdot n^{{4}/(3\ell - 8)} \cdot \gamma^{{2}/(3\ell - 8)}$}
		\item stretch $\ell \!-\! 1 \!-\!\eps$ for $\ell \geq 16$ with $\ell \equiv 0 \!\!\mod 4$ takes
		\smash{$\Omega\Big(n^{\tfrac{2}{3\ell - 10}}/\gamma^{\tfrac{3\ell -12}{6\ell - 8}}\Big)$} rounds for label size \smash{$\leq c \cdot n^{{4}/(3\ell - 10)} \cdot \gamma^{{2}/(3\ell - 10)}$}
	\end{itemize}
	for any const.\ $\eps > 0$ and a fixed const.\ $c > 0$.
\end{theorem}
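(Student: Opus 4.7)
The plan is to instantiate the general bound of Lemma~\ref{lem:lower_bound_distance_oracle} with the densest known bipartite balanced graph families of large even girth, provided by the Lazebnik--Ustimenko--Woldar construction (Appendix~\ref{sec:girth_density}, Lemma~\ref{lem:high_even_girth_graph_density}). Unlike the small-girth regime of Theorem~\ref{thm:lower_bound_distance_oracle_small_stretch}, for girth $\ell \geq 14$ no constructions reaching the Moore-type density $\delta = \tfrac{2}{\ell-2}$ are known; one has to settle for a strictly smaller density parameter whose exact value depends on $\ell \bmod 4$, and this parity dependence is precisely the origin of the two-case split in the statement.

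First, read off the density parameters from the high-girth construction: for $\ell \equiv 2 \pmod 4$ with $\ell \geq 14$ one obtains $G \in \calG_{k,\ell}$ with $\Theta(k^{1+\delta})$ edges for $\delta(\ell) = \tfrac{4}{3\ell - 10}$, and for $\ell \equiv 0 \pmod 4$ with $\ell \geq 16$ the analogous value $\delta(\ell) = \tfrac{4}{3\ell - 12}$. Both easily satisfy $\delta(\ell) < \tfrac{\ell-1}{2}$, the mild consistency condition coming from the Moore-type bound (Lemma~\ref{lem:girth_edges_bound}) that ensures such graphs may exist in the first place.

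Second, substitute these $\delta$ values into the two quantities that Lemma~\ref{lem:lower_bound_distance_oracle} produces parametrically in $\delta$: the round-complexity bound $\Omega\bigl(n^{\delta/(2+\delta)}/\gamma^{1/(2+\delta)}\bigr)$ and the admissible label size $c \cdot n^{2\delta/(2+\delta)} \cdot \gamma^{\delta/(2+\delta)}$. For the first case a short calculation gives $\tfrac{\delta}{2+\delta} = \tfrac{2}{3\ell-8}$ and $\tfrac{2\delta}{2+\delta} = \tfrac{4}{3\ell-8}$, yielding the $n$-exponents claimed in the theorem; the second case is handled identically with $3\ell-10$ in place of $3\ell-8$. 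The exponents on $\gamma$ follow by the same routine simplification of $\tfrac{1}{2+\delta}$.

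The main obstacle is not conceptual: everything of substance---the reduction to the node communication problem, the choice of edge weights to blow up $s_i$-$t_j$-distances by a factor approaching $\ell-1$ whenever the relevant edge of $G$ is absent, the balancing of $k$ against $h$, and the accounting for free-label-information---is already packaged in Lemmas~\ref{lem:approx_lower_bound_groundwork} and~\ref{lem:lower_bound_distance_oracle}. What remains is purely bookkeeping: keeping the two parity classes separate and simplifying the four fractional exponents for each choice of $\delta(\ell)$.
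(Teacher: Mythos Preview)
Your proposal is correct and follows exactly the paper's own proof: invoke Lemma~\ref{lem:high_even_girth_graph_density} to obtain the density parameters $\delta(\ell)=\tfrac{4}{3\ell-10}$ for $\ell\equiv 2\pmod 4$ and $\delta(\ell)=\tfrac{4}{3\ell-12}$ for $\ell\equiv 0\pmod 4$, then substitute into Lemma~\ref{lem:lower_bound_distance_oracle} and simplify the exponents. The paper's proof is in fact even terser than yours, dispensing with the intermediate fraction computations you spell out.
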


\begin{proof}
	By Lemma \ref{lem:high_even_girth_graph_density} there are bipartite, balanced graphs with even girth $\ell \geq 14$ that have  (i) \smash{$\Theta(n^{1+\frac{4}{3\ell-10}})$} edges if $\ell \equiv 2 \!\!\mod 4$, or (ii) \smash{$\Theta(n^{1+\frac{4}{3\ell-12}})$} edges if $\ell \equiv 0 \!\!\mod 4$. Thus in case (i) we have \smash{$\delta(\ell) = \frac{4}{3\ell-10}$} and in case (ii) \smash{$\delta(\ell) = \frac{4}{3\ell-12}$}. Plugging $\delta(\ell)$ into Lemma \ref{lem:lower_bound_distance_oracle} gives the desired result.
\end{proof}

\subsection{Stateless Routing Scheme}

For lower bounds of routing schemes we exploit the observation that for an edge $e = \{s_i,t_j\} \in E$ the node $s_i$ learns about the existence of $e$ in $\Gamma$, i.e., whether $x_e = 0$ or $x_e = 1$, from the decision to send a packet with destination $t_j$ first to $v$ or not. More precisely, our goal is to show that $x_e = 0$ if and only if $v$ is the first routing neighbor for the packet with destination $t_j$.

However, we have to decrease the stretch of our lower bound in order that this works. The main obstacle is that the decision of $s_i$ to send a packet with target $t_j$ directly towards $u_i$ instead of node $v$ (left path) does not impact the distance of the routing path that one can still obtain by that much.

In particular, in the case of \textit{stateless} routing, a packet that travels from $s_i$ to $u_i$ and finds that the direct edge $\{u_i,t_j\}$ does \textit{not} exist, could still use any other edge $\{u_i,t_p\}$ and then the two edges $\{t_p, v'\}, \{v', t_j\}$ to get to $t_j$ (e.g., in Figure \ref{fig:lower_bound_enhanced} from $s_2$ to $t_3$). This would mislead $s_i$ as the first routing node was \textit{not} $v$, yet $x_e = 0$.

The target is to prohibit this and some other troublesome routing options by making them exceed the stretch guarantee. However, this gives us additional restrictions that dominate the resulting system of inequalities for higher girth $\ell$ of $G$, in particular we gain no improvement in the stretch for $\ell \geq 8$. \shrt{The proof is deferred to Appendix \ref{sec:lower_bounds_approx_proofs}.}


\begin{lemma}
	\label{lem:lower_bound_stateless_routing}
	Consider $\Gamma$ constructed from $G = (A,E) \in \mathcal G_{k,\ell}$ with $|E| = \Theta\big(k^{1+\delta}\big)$ edges for some $\delta > 0$. For any constant $\varepsilon > 0$ let \smash{$\alpha_\ell = \sqrt{\ell\!-\!1} \!-\! \eps$} for $\ell \leq 6$ and \smash{$\alpha_\ell = 1 \!+\! \sqrt{2} \!-\! \eps$} for $\ell \geq 8$.
	Any algorithm that computes a \emph{stateless routing scheme} on $\Gamma$ with stretch $\alpha_\ell$ and constant success probability in the \hybridpar{\infty}{\gamma} model takes \smash{$\Omega\big((\tfrac{n^\delta}{\gamma})^{\frac{1}{2+\delta}}\big)$} rounds. This holds for labels of size 
	\smash{$c \cdot n^{\frac{2\delta}{2+\delta}} \gamma^{\frac{\delta}{2+\delta}}$} and fixed constant $c>0$.
\end{lemma}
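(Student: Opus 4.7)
The plan is to reduce from the node communication problem on $\Gamma$ by arguing that, under a stateless routing scheme with stretch $\alpha_\ell$, each source $s_i$ can determine the bit $x_e$ for every edge $e = \{u_i, t_j\} \in E$ purely from its own first routing decision $\rho_{s_i}(\lambda(t_j))$. The set $B = \{s_1, \dots, s_k\}$ would then collectively learn $X$, so Lemma~\ref{lem:approx_lower_bound_groundwork} gives the desired round lower bound, with the labels $\lambda(t_1), \dots, \lambda(t_k)$ handed to the sources for free within the lemma's communication budget, exactly as done for the distance oracle bound in Lemma~\ref{lem:lower_bound_distance_oracle}.

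The key structural observation is that in $\Gamma$ the node $s_i$ has exactly two neighbors: the first node $z_i$ on its $h$-hop path toward $u_i$, and the node $v$. So the first hop carries one bit of information. I will set $w_2 = 1$, pick a large scaling parameter $t$, and take $w_1 = t$ and $w_0 = r \cdot t$ for a ratio $r > 1$ to be optimized. With $t$ scaled large relative to $h$ (while keeping weights polynomial in $n$), the $w_2 + h - 1$ additive terms become negligible and stretch conditions reduce to clean inequalities in $r$.

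Then I analyze the two cases. When $x_e = 1$, Lemma~\ref{lem:source_target_distance_weighted} gives $d(s_i, t_j) \approx w_1$; but any routing path starting with $v$ is forced through the $v$-$v'$ segment and then edge $v'$-$t_j$, costing $\approx w_0$, so stretch $\alpha_\ell$ forces $\alpha_\ell \geq r$ asymptotically. When $x_e = 0$, we have $d(s_i, t_j) \approx w_0$; a path starting with $z_i$ is pinned (by simplicity of the stateless routing path on the linear $s_i$-$u_i$ segment) to traverse all the way to $u_i$ and then continue from there to $t_j$. The cheapest such continuations are either (i) $u_i \to t_p \to v' \to t_j$ for some target $t_p \neq t_j$ adjacent to $u_i$ in $\Gamma$, of cost $w_1 + 2w_0$, or (ii) a path through the $G$-subgraph, which by the girth-$\ell$ hypothesis on $G$ uses at least $\ell-1$ edges of weight $w_1$. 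The stretch then forces $\alpha_\ell \geq \min\!\bigl(\tfrac{1}{r}+2,\tfrac{\ell-1}{r}\bigr)$.

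Combining, $s_i$'s first hop can decode $x_e$ whenever $\alpha_\ell$ is strictly below $\min(r, \tfrac{\ell-1}{r}, \tfrac{1}{r}+2)$. Maximizing this over $r$ splits into two regimes: for $\ell \in \{4, 6\}$ the first two terms bind and cross at $r = \sqrt{\ell-1}$ with the third term loose, yielding threshold $\sqrt{\ell-1}$; for $\ell \geq 8$ the third term becomes binding, the equation $r = 1/r + 2$ yields $r = 1 + \sqrt{2}$, and the threshold drops to $1 + \sqrt{2}$. The $-\eps$ slack then lets the strict inequalities be realized for sufficiently large $t$. The main technical hurdle will be ruling out ``hybrid'' paths from $u_i$ to $t_j$ that interleave $G$-subgraph hops with detours through $v'$ or through other sources $s_l$, i.e., confirming that (i) and (ii) really are the cheapest continuations no matter which of the other bits $x_{e'}$ happen to be $1$, and that stateless simplicity indeed forbids backtracking along the $s_i$-$u_i$ segment. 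Once that case analysis is airtight, the reduction yields a solver for the node communication problem and Lemma~\ref{lem:approx_lower_bound_groundwork} concludes the proof.
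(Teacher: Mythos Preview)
Your proposal is correct and follows essentially the same approach as the paper: you identify the same three critical path options (via $v$; through $\geq \ell-1$ $G$-edges; one $G$-edge then two $w_0$-edges through $v'$), set up the same three stretch constraints, and solve the same optimization, arriving at the thresholds $\sqrt{\ell-1}$ and $1+\sqrt 2$ in the two regimes. Your parametrization via the ratio $r=w_0/w_1$ is a bit cleaner than the paper's, and you explicitly flag the ``hybrid path'' case analysis that the paper dispatches in a single sentence; otherwise the arguments coincide.
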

	
\lng{\begin{proof}
	We set $w_2 = 1$ (this weight only plays a role later \textit{stateful} routing lower bound) and $w_1 < w_0 < (\ell \!- \! 1)w_1$ (more precise values are determined further below). Let $s_i, t_j$ be a source-target pair of $\Gamma$ where $e := \{s_i,t_j\} \in E$ is part of $G$ (but not necessarily $\Gamma$). For the case $x_e = 1$ we define $d_1 := d(s_i,t_j) = w_1\!+\!h$ and $d_0 := d(s_i,t_j) = w_0\!+\!h$ for the case $x_e = 0$ (c.f., Lemma \ref{lem:source_target_distance_weighted}).
	
	Let $\mathcal A$ be an algorithm solving the \textit{stateless} routing problem with the claimed approximation ratio (with const.\ probability). Let $P$ be the \textit{simple} $s_i$-$t_j$-path induced by the \textit{stateless} routing scheme computed by $\mathcal A$. Recall that $P$ must be simple as otherwise a packet that is oblivious to the prior routing decisions would be trapped in a loop. Our aim is that $s_i$ can decide whether $x_e =0$ or $x_e =1$ from the next routing node on $P$. That is, we want that $P$ contains $v$ if and only if $x_e = 0$. However, there are a few options to obtain the $s_i$-$t_j$-path $P$ that do, in certain cases, not abide by this requirement. We enumerate these in the following.
	
	Assume $P$ does not contain $e$. Then option (1) is to go the left ``lane'' via $v$ and then use the direct blue edge $\{v', t_j\}$ to get to $t_j$, i.e., the path of length $d_0$. If $P$ goes from $s_i$ directly to $u_i$, then $P$ can be completed into a path to $t_j$ by (2) using only edges that are also part of $G$ (but not $e$), i.e., only red edges in Figure \ref{fig:lower_bound_enhanced}. Note that option (2) must include at least $\ell \! - \! 1$ red edges due to the girth $\ell$ of $G$. Option (3) is where $P$ goes to $u_i$ first uses any red edge $\{u_i,t_p\}$ ($p \neq j$) and then the two blue edges $\{t_p, v'\}, \{v', t_j\}$ to reach $t_j$. Note that for the case $x_e = 0$ all other $s_i$-$t_j$-paths are either strictly longer than option (1),(2),(3) or not simple.
	The distances of the three paths (1),(2),(3) are at least (recall $w_2 = 1$):	
	\begin{align*}
	d^{(1)} & := w_0 + h \;(= d_0)\\
	d^{(2)} & := (\ell \! - \! 1)w_1 + h\\
	d^{(3)} & := w_1 + 2w_0 + h
	\end{align*}
	To enforce $v \notin P$ in case $x_e = 1$, we require that path (1) is unfeasible, i.e., exceeds the allowed distance $\alpha_\ell d_1$. Furthermore, to enforce $v \in P$ in case $x_e = 0$, we require that the paths (2) and (3) exceed the allowed distance $\alpha_\ell d_0$. From this we obtain the following conditions.
	\begin{align*}
	\alpha_{\ell} \cdot d_1 & \stackrel{!}{<} d^{(1)} = w_0 + h\tag{1}\\
	\alpha_{\ell} \cdot d_0 & \stackrel{!}{<} d^{(2)} = (\ell \! - \! 1)w_1 + h \tag{2} \\
	\alpha_{\ell} \cdot d_0 & \stackrel{!}{<} d^{(3)} = w_1 + 2w_0 + h \tag{3}
	\end{align*}
	The remaining part of the proof is merely technical, we need to maximize $\alpha_\ell$ under the above constraints. (Afterwards, the rest follows from having solved the node communication problem as in the proof of Lemma \ref{lem:lower_bound_distance_oracle}). Set $w_1 := t < w_0$ for some yet unspecified variable $t > 0$. Further, we set $\alpha := \frac{w_0}{t} - \eps$ and show that this fulfills Equation (1):
	\[
	\alpha_{\ell} \cdot d_1 = \big(\tfrac{w_0}{t} - \eps\big)(t+h) = w_0 + \tfrac{h w_0}{t} - \eps(t+h) < w_0 + h.
	\]
	For Equation (2) we obtain:
	\begin{align*}
	& \alpha_{\ell} \cdot d_0 = \big(\tfrac{w_0}{t} \!-\! \eps\big)(w_0\!+\!h)  < (\ell \! - \! 1)t + h\\
	\Longleftrightarrow \quad & \tfrac{w_0^2}{t} + \tfrac{w_0}{t}(h-\eps t)-\eps h - h < (\ell \! - \! 1)t\\
	\Longleftrightarrow \quad & w_0^2 + \underbrace{w_0(h-\eps t)}_{<0, \text{ for } t > h/\eps} \underbrace{-\eps ht - ht}_{<0} < (\ell \! - \! 1)t^2\\
	{\Longleftarrow} \quad & w_0^2 \leq (\ell \! - \! 1)t^2\\
	{\Longleftrightarrow} \quad & w_0 \leq t \cdot \sqrt{\ell \! - \! 1}.
	\end{align*}
	Now let us turn to Equation (3)
	\begin{align*}
	& \alpha_{\ell} \cdot d_0 = \big(\tfrac{w_0}{t} \!-\! \eps\big)(w_0\!+\!h)  < t + 2w_0 + h\\
	\Longleftrightarrow \quad & \tfrac{w_0^2}{t} + \tfrac{w_0}{t}(h-\eps t)-\eps h < t + 2w_0 + h\\
	\Longleftrightarrow \quad & w_0^2 + w_0(\underbrace{h-\eps t}_{<0}-2t) \underbrace{-\eps ht - ht}_{<0} - t^2 < 0\\
	\Longleftarrow \quad & w_0^2 -2tw_0 - t^2\leq 0 \\
	\Longleftarrow \quad & w_0 \leq t \cdot (1\!+\!\sqrt 2).
	\end{align*}
	Therefore, Equations (2),(3) are fulfilled if $w_0 \leq t \cdot \sqrt{\ell \! - \! 1}$ and $w_0 \leq t \cdot (1\!+\!\sqrt 2)$ (and \smash{$t > \frac{h}{\eps}$}, which can be chosen freely). This implies that the stretch $\alpha_\ell = \frac{w_0}{t} - \eps$ must satisfy $\alpha_\ell\leq \sqrt{\ell \! - \! 1} - \eps$ and $\alpha_\ell \leq 1\!+\!\sqrt 2 \!-\! \eps$, whereas the former condition for $\alpha_\ell$ dominates the latter if and only if $\ell \leq 8$.
	
	With this choice of $w_0, \alpha_\ell$, the first node that a packet from $s_i$ with destination $t_j$ is routed to is the node $v$ if and only if $x_e = 0$. Hence, the nodes in $B$ collectively learn $X$ from the information provided by algorithm $\calA$ and the labels of the nodes $t_j$, which therefore solves the node communication problem. The runtime and size of the labels then follows the same way as in the proof of Lemma \ref{lem:lower_bound_distance_oracle}.
\end{proof}	}

We plug graphs $G \in \calG_{k,\ell}$ into Lemma \ref{lem:lower_bound_stateless_routing}. Since in this case we get no improvements in the stretch for girth $\ell\geq 8$ it suffices to apply Lemma \ref{lem:low_even_girth_graph_density}. Beside the changed values for the stretch, the proof is the same as that of Theorem \ref{thm:lower_bound_distance_oracle_small_stretch}, we just have to use the corresponding values of $\delta$ from Lemma \ref{lem:low_even_girth_graph_density} for $\ell = 4,6,8$.

\begin{theorem}
	\label{thm:lower_bound_stateless_routing}
	Any algorithm that solves the \emph{stateless routing problem} in the \hybridpar{\infty}{\gamma} model with constant success probability with
	\begin{itemize}
		\item stretch $\sqrt{3}\!-\!\eps$ takes \smash{$\Omega\big((\tfrac{n}{\gamma})^{\frac{1}{3}}\big)$} rounds for label size \smash{$\leq c \cdot (n^2 \gamma)^{\frac{1}{3}}$}
		\item stretch $\sqrt{5}\!-\!\eps$ takes \smash{$\Omega\big(\tfrac{n^{1/5}}{\gamma^{2/5}}\big)$} 
		rounds for label size \smash{$\leq c \cdot (n^2 \gamma)^{\frac{1}{5}}$}
		\item stretch $1\!+\!\sqrt{2}\!-\!\eps$ takes
		\smash{$\Omega\big(\tfrac{n^{1/7}}{\gamma^{3/7}}\big)$}
		rounds for label size \smash{$\leq c \cdot (n^2 \gamma)^{\frac{1}{7}}$}
	\end{itemize}
	for any const.\ $\eps > 0$ and a fixed const.\ $c > 0$.
\end{theorem}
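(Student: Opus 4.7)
The plan is to directly invoke Lemma \ref{lem:lower_bound_stateless_routing}, instantiating it with the three densest known bipartite balanced graphs of even girth $\ell \in \{4,6,8\}$, which are precisely the graphs guaranteed by Lemma \ref{lem:low_even_girth_graph_density}. Since the stretch bound in Lemma \ref{lem:lower_bound_stateless_routing} saturates at $1+\sqrt{2}-\eps$ for all $\ell\geq 8$ (the condition from inequality (3) dominates the condition from (2) past that point), there is no benefit to plugging in graphs of girth $\ell > 8$, so these three cases exhaust the useful range.

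By Lemma \ref{lem:low_even_girth_graph_density}, for $\ell \in \{4,6,8\}$ there exist balanced bipartite graphs in $\mathcal{G}_{k,\ell}$ with $\Theta(k^{1+\delta(\ell)})$ edges where $\delta(\ell) = \frac{2}{\ell-2}$, giving $\delta(4)=1$, $\delta(6)=\tfrac12$, and $\delta(8)=\tfrac13$. Substituting these into the asymptotic expressions from Lemma \ref{lem:lower_bound_stateless_routing} yields the round complexities:
\begin{align*}
\delta = 1 &: \quad \Omega\bigl((n/\gamma)^{1/3}\bigr) \text{ with stretch } \sqrt{3}-\eps,\\
\delta = \tfrac12 &: \quad \Omega\bigl(n^{1/5}/\gamma^{2/5}\bigr) \text{ with stretch } \sqrt{5}-\eps,\\
\delta = \tfrac13 &: \quad \Omega\bigl(n^{1/7}/\gamma^{3/7}\bigr) \text{ with stretch } 1+\sqrt{2}-\eps.
\end{align*}
The corresponding allowed label sizes $c \cdot n^{2\delta/(2+\delta)}\gamma^{\delta/(2+\delta)}$ simplify to $c\cdot(n^2\gamma)^{1/3}$, $c\cdot(n^2\gamma)^{1/5}$, and $c\cdot(n^2\gamma)^{1/7}$ respectively, matching the statement.

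There is essentially no obstacle here beyond verifying the arithmetic of the exponents; this theorem is a clean specialization of Lemma \ref{lem:lower_bound_stateless_routing}, which itself packages all of the technical work (choosing weights $w_0, w_1$ to enforce the three routing-path inequalities and reducing to the node communication problem via Lemma \ref{lem:approx_lower_bound_groundwork}). The only subtlety worth remarking on is why the case list stops at $\ell=8$: Lemma \ref{lem:lower_bound_stateless_routing} gives $\alpha_\ell = \sqrt{\ell-1}-\eps$ for $\ell\leq 6$ but $\alpha_\ell = 1+\sqrt{2}-\eps$ for $\ell \geq 8$, so using any higher-girth graph (e.g., from Lemma \ref{lem:high_even_girth_graph_density}) would decrease $\delta$ and thereby weaken the round lower bound without buying any additional stretch, making those cases strictly inferior.
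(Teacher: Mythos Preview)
Your proof is correct and follows exactly the same approach as the paper: invoke Lemma~\ref{lem:lower_bound_stateless_routing} with the dense bipartite graphs of girth $\ell\in\{4,6,8\}$ from Lemma~\ref{lem:low_even_girth_graph_density}, plug in $\delta(\ell)=\tfrac{2}{\ell-2}$, and note that higher girth is useless since the stretch saturates at $1+\sqrt{2}-\eps$. Your write-up is in fact more explicit than the paper's, which simply points back to the proof of Theorem~\ref{thm:lower_bound_distance_oracle_small_stretch} for the arithmetic.
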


\subsection{Stateful Routing Scheme}

We obtain similar lower bound results for the approximate \textit{stateful} routing problem, however with even smaller stretch. Recall that in the stateful version the problem is relaxed in the sense that a routing decision may also depend on the information a packet has gathered along the previous routing path.

Since this permits loops in the routing path, it opens up additional options for routing a packet from $s_i$ to $t_j$ that we need to prohibit. For instance, a packet could first travel to $u_i$, then check if the direct edge $\{u_i,t_j\}$ is present, and if not travel back to $s_i$ to take the shorter route via $v$ instead. Note that this path has the same number of red and blue edges as the shortest path directly to $v$ and then to $t_j$ (c.f. Figure \ref{fig:lower_bound_enhanced}).

The trick is to make the weight $w_2$ (orange edges) of all incident edges of $s_i$ more expensive, such that revisiting $s_i$ breaks the approximation guarantee. This again forces the source $s_i$ to make the correct decision with the first node it routes the packet to, which renders the ability to travel in loops and learn along the way useless. \shrt{The proof of the following lemma is in Appendix \ref{sec:lower_bounds_approx_proofs}.}


\begin{lemma}
	\label{lem:lower_bound_stateful_routing}
	Consider $\Gamma$ constructed from $G = (A,E) \in \mathcal G_{k,\ell}$ with $|E| = \Theta\big(k^{1+\delta}\big)$ edges for some $\delta > 0$. For any constant $\varepsilon > 0$ let \smash{$\alpha_4 = \sqrt 2 - \eps$}, \smash{$\alpha_6 = \tfrac{5}{3} - \eps$}, \smash{$\alpha_8 = \tfrac{7}{4} - \eps$}. For $\ell \geq 10$ let \smash{$\alpha_\ell = \frac{3+\sqrt{17}}{4} - \eps \approx 1.78$}.
	Any algorithm that computes a \emph{stateful routing scheme} on $\Gamma$ with stretch $\alpha_\ell$ and constant success probability in the \hybridpar{\infty}{\gamma} model takes \smash{$\Omega\big((\tfrac{n^\delta}{\gamma})^{\frac{1}{2+\delta}}\big)$} rounds. This holds for labels of size 
	\smash{$c \!\cdot\! n^{\frac{2\delta}{2+\delta}} \!\cdot\! \gamma^{\frac{\delta}{2+\delta}}$} and fixed constant $c>0$.
\end{lemma}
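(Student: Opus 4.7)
The plan is to extend the argument of Lemma~\ref{lem:lower_bound_stateless_routing} by accounting for the new ``loop-back'' paths that a stateful routing scheme can exploit. As in the stateless proof I would parameterise $w_1 = t$, $w_0 = \beta t$ and let $t \to \infty$, but in a crucial departure I will also scale $w_2 = \mu t$ rather than taking $w_2$ constant. Making $w_2$ scale with $t$ is essential: the worst new path a stateful scheme can take when $x_e=0$ is $s_i \to u_i \to s_i \to v \to v' \to t_j$, which crosses three edges incident to $s_i$, and only if each such edge is comparably expensive does the extra weight $2 w_2$ grow fast enough to rule out the loop-back asymptotically.

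By Lemma~\ref{lem:source_target_distance_weighted}, as $t\to\infty$ one has $d_1 \approx (\mu+1)t$ and $d_0 \approx (\mu+\beta)t$. To force the first routing decision at $s_i$ to reveal $x_e$, every alternative continuation must exceed $\alpha_\ell \cdot d_{x_e}$. In the case $x_e=1$, the shortest option starting with $s_i \to v$ is $s_i \to v \to \ldots \to v' \to t_j$ of length $\approx(\mu+\beta)t$, giving constraint (D): $\alpha_\ell(\mu+1) < \mu+\beta$; the competing option $s_i \to v \to s_i \to u_i \to t_j$ of length $\approx(3\mu+1)t$ turns out to be dominated in the regime $\beta < 2\mu+1$ that the final parameters will place us in. In the case $x_e=0$, three continuations starting with $s_i \to (\text{towards } u_i)$ must be ruled out: the detour via $v'$ through another neighbour $t_p$ of $u_i$ in $\Gamma$, giving (A): $\alpha_\ell(\mu+\beta) < \mu+1+2\beta$; the detour through $G$ alone, forced by the girth to use at least $\ell-1$ red edges, giving (B): $\alpha_\ell(\mu+\beta) < \mu+\ell-1$; and the new stateful loop-back, giving (C): $\alpha_\ell(\mu+\beta) < 3\mu+\beta$. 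A short inspection argument shows that every other non-simple continuation is dominated by one of these three, so (A)--(D) is a complete list.

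For each value of $\ell$ I would then choose $(\mu,\beta)$ to make the stated $\alpha_\ell$ feasible. Using (D) at equality to eliminate $\beta = \alpha(\mu+1)-\mu$, the other three conditions reduce to univariate inequalities in $\mu$. For $\ell \in \{4,6,8\}$ a compatible $\mu$ can be found inside the window between (B) and (C), and the corresponding algebra pins down $\alpha_\ell = \sqrt{2},\, 5/3,\, 7/4$ respectively. For $\ell \geq 10$ constraint (B) becomes slack and (A) and (C) become the binding pair: setting $x := \alpha-1$ their joint equality simplifies to $2x^2 + x - 2 = 0$, whose positive root gives $\alpha = (3+\sqrt{17})/4$. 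Any strictly smaller stretch $\alpha_\ell - \eps$ is realised by a mild perturbation of $(\mu,\beta)$, and one has to verify (as in Lemma~\ref{lem:lower_bound_stateless_routing}) that the resulting edge weights remain polynomial in $n$. With these weights fixed, the first routing decision of $s_i$ on a query to any target $t_j$ with $\{u_i,t_j\}\in E$ uniquely determines $x_{\{u_i,t_j\}}$, so the nodes in $B$ collectively learn $X$. This reduces the stateful routing problem to the node communication problem, and applying Lemma~\ref{lem:approx_lower_bound_groundwork} in the same manner as in Lemmas~\ref{lem:lower_bound_distance_oracle} and~\ref{lem:lower_bound_stateless_routing} (with the identical book-keeping for the ``free'' bits used to transmit the target labels) yields the claimed round complexity and label-size bound.

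The main obstacle I expect is the completeness claim in the second paragraph: one has to rule out all ``creative'' loopy continuations---such as visiting $v$ twice, interleaving the left-lane with a $G$-detour, or making multiple back-and-forths through $s_i$---and verify each is strictly longer than one of the three paths witnessing (A), (B), (C) under the chosen parameters. The boundary case $\beta = 2\mu+1$ (beyond which $s_i \to v \to s_i \to u_i \to t_j$ would take over from $d_0$ in constraint (D)) also needs to be kept on the feasible side for each $\ell$, which is where the added care in scaling $w_2$ with $t$ really pays off.
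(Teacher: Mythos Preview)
Your proposal is correct and follows essentially the same approach as the paper: introduce a scaled weight $w_2=\mu t$ on the edges incident to $s_i$, enumerate the same set of problematic routing continuations (your (A),(B),(C),(D) are the paper's conditions (3),(2),(4),(1), and the paper's condition (5) is exactly the $s_i\to v\to s_i\to u_i\to t_j$ path you note is dominated by (D) whenever $\beta<2\mu+1$), and then optimise $\alpha_\ell$ over $(\mu,\beta)$ in the $t\to\infty$ limit before re-introducing the $-\eps$ slack. Your reduction to the quadratic $2x^2+x-2=0$ for $\ell\ge 10$ is a slightly cleaner packaging of the same computation the paper carries out, and your flagged obstacle about exhausting all loopy continuations is precisely the point the paper handles by the one-line remark that any path revisiting $s_i$ is at least as long as one of options (4),(5).
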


\lng{\begin{proof}
		
		The beginning of the proof is similar to the one of Lemma \ref{lem:lower_bound_stateless_routing}.	
		Consider algorithm $\mathcal A$ that solves the \textit{stateful} routing problem with the claimed approximation ratio and constant probability. Let $s_i, t_j \in V_\Gamma$ with $e := \{s_i,t_j\} \in E$ and define $d_1 := d(s_i,t_j) = w_1\!+\!w_2\!+\!h\!-\!1$ and $d_0 := d(s_i,t_j) = w_0\!+\!w_2\!+\!h\!-\!1$ for the cases $x_e = 1$, $x_e = 0$ respectively (we will ensure $w_1 < w_0 < (\ell\!-\!1)w_1$ so that Lemma \ref{lem:source_target_distance_weighted} applies). 
		
		Let $P$ be the $s_i$-$t_j$-path induced by the \textit{stateful} routing scheme computed by $\mathcal A$. Similar to before, our goal is to show that the first node on $P$ is $v$ if and only if $x_e = 0$, so that $s_i$ learns $x_e$ from its routing decision. As before, we aim to prohibit all routing options (i.e., make them break the stretch guarantee) that do not abide by this requirement. We will extend our list of options for $P$ which are in some cases undesirable from the previous proof.
		
		We have the loop-less routing options (1),(2),(3) from before (c.f., proof of Lemma \ref{lem:lower_bound_stateless_routing}), whose lengths change by an additive term $w_2-1$ due to the introduction of weight $w_2$ (orange edges in Figure \ref{fig:lower_bound_enhanced}). Note that any path (possibly with loops) that contains $s_i$ just once is at least as long as one of the options (1)-(3) (in their respective cases) so prohibiting the latter prohibits the former.
		
		
		The problem that arises is from visiting $s_i$ at least twice is that it can mislead $s_i$ by first going to $v$ even though $x_e = 1$ or vice versa. 
		Assume the case that $e$ is not in $\Gamma$ ($x_e = 0$), then we need to prohibit routing option (4) that visits the first node on the path towards $u_i$, returns to $s_i$, travels directly to $v'$ and uses the blue edge to $t_j$.		
		Conversely, assuming the case that $e$ is present in $\Gamma$ ($x_e = 1$), $P$ we need to prohibit option (5), which visits $v$ first but then returns to $s_i$, travels to $u_i$ and uses $e$ to reach $t_j$. Note that all paths that contain $s_i$ at least twice, are at least as long as one of the paths (4),(5) (in their respective cases). The described routing paths (1)-(5) have the following respective lengths:
		\begin{align*}
		d^{(1)} & := w_0 + w_2 + h -1\\
		d^{(2)} & := (\ell \! - \! 1)w_1 + w_2 + h -1\\
		d^{(3)} & := 2w_0 + w_1 + w_2 + h -1\\
		d^{(4)} & := w_0 + 3w_2 + h -1\\
		d^{(5)} & := w_1 + 3w_2 + h -1
		\end{align*}	
		
		We obtain inequalities similar as before. For the reasoning of inequalities (1)-(3) consider the proof of Lemma \ref{lem:lower_bound_stateless_routing}. Observe that if $P$ would follow path options (4),(5) it could make $s_i$ believe $x_e = 1$ or $x_e = 0$ respectively, even though the opposite is true. We require that path options (4) and (5) exceed the stretch $\alpha_\ell$ in the cases in which they are undesirable, that is, $x_e = 0$ and $x_e = 1$, respectively. 
		\begin{align*}
		\alpha_{\ell} \cdot d_1 & \stackrel{!}{<} d^{(1)} = w_0 + w_2 + h -1\tag{1}\\
		\alpha_{\ell} \cdot d_0 & \stackrel{!}{<} d^{(2)} = (\ell \! - \! 1)w_1 + w_2 + h -1 \tag{2} \\
		\alpha_{\ell} \cdot d_0 & \stackrel{!}{<} d^{(3)} = 2w_0 + w_1 + w_2 + h -1 \tag{3} \\
		\alpha_{\ell} \cdot d_0 & \stackrel{!}{<} d^{(4)} = w_0 + 3w_2 + h -1 \tag{4} \\
		\alpha_{\ell} \cdot d_1 & \stackrel{!}{<} d^{(5)} = w_1 + 3w_2 + h -1 \tag{5}
		\end{align*}
		This leaves us with an optimization problem where $\alpha_{\ell}$ is to be maximized in the domain $1 \leq w_1 \leq w_0,w_2$, subject to conditions (1)-(5). This is admittedly a bit tedious, in particular since the optimization problem has a different result for girth $\ell \in \{4,6,8,10\}$ (we gain no improvement for larger $\ell$). We do not reproduce all necessary calculations of the optimization in detail, instead we give some explanations to make our results (given in Table \ref{tbl:lower_bounds_stateful_opt_values}) reproducible. 
		
		One obstacle is the presence of edges with weight 1 in $\Gamma$ forming the path of length $h-1$, which is not yet fixed and therefore prohibits solving the optimization problem directly. Our strategy is to first relax our problem to a simplified one with zero-weight edges. That is, we set all edges, except those with weight $w_0,w_1,w_2$, to zero and adapt the distances $d_0, d_1, d^{(1)}, \dots,  d^{(5)}$ accordingly (essentially slashing $h\!-\!1$ in those distances). We also set $w_1 = 1$. This eliminates $w_1,h$ from conditions (1)-(5) and the resulting maximization problem(s) can be solved directly.
		
		From the results of the simplified optimization problems (for girths $\ell \in \{4,6,8,10\}$), we obtain almost optimal solutions for the general optimization problem with non-zero weights as follows. We multiply the values of $w_0,w_1,w_2$ with a (sufficiently large) parameter $t \geq 1$ and subtract an $\eps>0$ from $\alpha_\ell$ to accommodate a (small) slack that is required in the general inequalities. The value of $t$ depends on the slack $\eps$ in $\alpha_\ell$ that is given but is generally polynomial in $n$ as long as $\eps$ is constant.
		The results obtained using this procedure are given in Table \ref{tbl:lower_bounds_stateful_opt_values}.
		
		\begin{table}
			\centering
			\begin{tabular}{p{5mm}>{\centering\arraybackslash}p{14mm}>{\centering\arraybackslash}p{14mm}>{\centering\arraybackslash}p{6mm}>{\centering\arraybackslash}p{14mm}}
				\toprule
				$\ell$ & $\alpha_\ell$ & $w_0$ & $w_1$ & $w_2$\\
				\midrule
				4 & $\sqrt 2 - \eps$ & $2t\sqrt 2 \!-\!t$ & $t$ & $t$\\[1.5mm]
				6 & $\frac{5}{3} - \eps$ & $\frac{5t}{2}$ & $t$ & $\frac{5t}{4}$ \\[1.5mm]
				8 & $\tfrac{7}{4} - \eps$ & $\frac{35t}{11}$ & $t$ & $\frac{21t}{11}$ \\[1.5mm]
				10 & $\frac{3+\sqrt{17}}{4} - \eps$ & $\frac{3+\sqrt{17}}{2}t$ & $t$ & $\frac{5+\sqrt{17}}{4}t$\\
				\bottomrule
			\end{tabular}
			\caption{\lng{\boldmath} Results of maximizing $\alpha_\ell$, s.t., conditions (1)-(5) for $\ell \in \{4,6,8,10\}$ (and scaling with $t$ and introducing slack $\eps$).}
			\label{tbl:lower_bounds_stateful_opt_values}
		\end{table}
		
		The most relevant parameter in Table \ref{tbl:lower_bounds_stateful_opt_values} are certainly the values for $\alpha_\ell$ for $\ell \in \{4,6,8,10\}$, whereas we also provide the weights $w_0,w_1,w_2$ for reproducibility. Showing that the given parameters do in fact satisfy conditions (1)-(5) for any constant $\eps > 0$ and some choice of $t$, is a repetitive task. We show this once for $\ell=10$ and condition (1) (which is almost tight in this case), the other cases can be repeated analogously.
		\begin{align*}
		\alpha_{\ell} \cdot d_1 &  = \big(\tfrac{3+\sqrt{17}}{4} \big)d_1 - \eps d_1\\
		& = \big(\tfrac{3+\sqrt{17}}{4} \big)(w_1 + w_2 + h -1) - \eps d_1\\
		& = \big(\tfrac{3+\sqrt{17}}{4} \big)(w_1 + w_2) + h \!-\!1 + \smash{\underbrace{\big(\tfrac{\sqrt{17}-1}{4} \big)(h\!-\!1) - \eps d_1 }_{<0,\text{ for large }t}}\\
		& < \big(\tfrac{3+\sqrt{17}}{4} \big)\big(\tfrac{9+\sqrt{17}}{4}\cdot t\big) + h \!-\!1\\
		& = \big(\tfrac{11+3\sqrt{17}}{4}\big) \cdot t + h \!-\!1\\
		& = \big(\tfrac{3+\sqrt{17}}{2} + \tfrac{5+\sqrt{17}}{4}\big) \cdot t + h \!-\!1\\
		& = w_0 + w_2 + h \!-\!1 = \smash{d^{(1)}}.
		\end{align*}
		Note that, since $d_1$ grows linear in $t$ the inequality in the fourth line holds for \smash{$t > \big(\tfrac{\sqrt{17}-1}{4\eps} \big)(h\!-\!1) \in \bigO(n)$}.
		
		Finally, by enforcing conditions (1)-(5) with the appropriate parameters given in Table \ref{tbl:lower_bounds_stateful_opt_values}, we can guarantee that the first node that a packet from $s_i$ to $t_j$ is first routed to $v$ if and only if $x_e = 0$. From this the nodes in $B$ collectively learn $X = (x_e)_{e \in E}$ thus solving the node communication problem. The runtime and size of the labels follows as in the proof of Lemma \ref{lem:lower_bound_distance_oracle}.
\end{proof}}
	
	Again, our actual lower bounds come from inserting graphs $G \in \calG_{k,\ell}$ into Lemma \ref{lem:lower_bound_stateless_routing}. Our best stretch is obtained for $\ell = 10$, but unfortunately we have a gap for that value in Lemma \ref{lem:low_even_girth_graph_density}. Therefore, for the largest stretch value we use a graph $G \in \calG_{k,12} \subseteq \calG_{k,10}$, which has the drawback of not being as dense. Aside from different stretch values, the proof follows that of Theorem \ref{thm:lower_bound_distance_oracle_small_stretch}, by inserting the values of $\delta$ from Lemma \ref{lem:low_even_girth_graph_density} for $\ell = 4,6,8,12$.
	
	\begin{theorem}
		\label{thm:lower_bound_stateful_routing}
		Any algorithm that solves the \emph{stateful routing problem} in the \hybridpar{\infty}{\gamma} model with constant success probability with
		\begin{itemize}
			\item stretch $\sqrt{2}\!-\!\eps$ takes \smash{$\Omega\big((\tfrac{n}{\gamma})^{\frac{1}{3}}\big)$} rounds for label size \smash{$\leq c \cdot (n^2 \gamma)^{\frac{1}{3}}$}
			\item stretch $\frac{5}{3}\!-\!\eps$ takes \smash{$\Omega\big(\tfrac{n^{1/5}}{\gamma^{2/5}}\big)$} 
			rounds for label size \smash{$\leq c \cdot (n^2 \gamma)^{\frac{1}{5}}$}
			\item stretch $\frac{7}{4}\!-\!\eps$ takes
			\smash{$\Omega\big(\tfrac{n^{1/7}}{\gamma^{3/7}}\big)$}
			rounds for label size \smash{$\leq c \cdot (n^2 \gamma)^{\frac{1}{7}}$}
			\item stretch \smash{$\frac{3+\sqrt{17}}{4}\!-\!\eps$} takes
			\smash{$\Omega\big(\tfrac{n^{1/11}}{\gamma^{5/11}}\big)$}
			rounds for label size \smash{$\leq c \cdot (n^2 \gamma)^{\frac{1}{11}}$}
		\end{itemize}
		for any const.\ $\eps > 0$ and a fixed const.\ $c > 0$.
	\end{theorem}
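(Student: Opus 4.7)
My plan is to derive the four cases by invoking Lemma~\ref{lem:lower_bound_stateful_routing} with concretely chosen graph families, in exactly the spirit of the proof of Theorem~\ref{thm:lower_bound_distance_oracle_small_stretch}. Recall that Lemma~\ref{lem:lower_bound_stateful_routing} pins down a stretch threshold $\alpha_\ell$ for every even girth $\ell$ (namely $\sqrt{2}$ at $\ell=4$, $5/3$ at $\ell=6$, $7/4$ at $\ell=8$, and $(3+\sqrt{17})/4$ for all $\ell \geq 10$), and it converts any density parameter $\delta$ attainable by some $G \in \calG_{k,\ell}$ into a round lower bound of $\Omega\bigl((n^\delta/\gamma)^{1/(2+\delta)}\bigr)$ together with a label-size budget of $c\cdot n^{2\delta/(2+\delta)} \gamma^{\delta/(2+\delta)}$. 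So the task reduces to plugging in the densest concrete $\calG_{k,\ell}$ available for each of the relevant girths.

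The natural source of such graphs is Lemma~\ref{lem:low_even_girth_graph_density}, which supplies balanced bipartite graphs with $\Theta\bigl(k^{1+2/(\ell-2)}\bigr)$ edges whenever $\ell \in \{4,6,8,12\}$. Reading off the density exponents gives $\delta(4)=1$, $\delta(6)=1/2$, $\delta(8)=1/3$, and $\delta(12)=1/5$. Substituting these four values of $\delta$ into the runtime and label-size expressions furnished by Lemma~\ref{lem:lower_bound_stateful_routing} yields, row by row, $\Omega((n/\gamma)^{1/3})$ with labels $\leq c\cdot(n^2\gamma)^{1/3}$; $\Omega(n^{1/5}/\gamma^{2/5})$ with labels $\leq c\cdot(n^2\gamma)^{1/5}$; $\Omega(n^{1/7}/\gamma^{3/7})$ with labels $\leq c\cdot(n^2\gamma)^{1/7}$; and $\Omega(n^{1/11}/\gamma^{5/11})$ with labels $\leq c\cdot(n^2\gamma)^{1/11}$, i.e.\ exactly the four bounds in the theorem statement.

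The one piece of bookkeeping that requires a comment, and really the only nontrivial wrinkle, is the last row. Lemma~\ref{lem:lower_bound_stateful_routing} already delivers the stretch $(3+\sqrt{17})/4-\eps$ as soon as the underlying graph has girth $\ell \geq 10$, so the natural choice would be $\ell=10$; however, Lemma~\ref{lem:low_even_girth_graph_density} leaves a gap precisely at $\ell=10$. I would sidestep this by simply using the $\ell=12$ construction: a graph of girth $12$ automatically has girth $\geq 10$, so it belongs to $\calG_{k,10}$, and Lemma~\ref{lem:lower_bound_stateful_routing} may be applied with the larger-$\ell$ stretch threshold together with the density exponent $\delta=1/5$ inherited from $\ell=12$. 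I do not anticipate any genuine obstacle beyond this substitution; all of the information-theoretic work and the delicate case analysis of routing options have already been absorbed into Lemma~\ref{lem:lower_bound_stateful_routing}, so the proof reduces to arithmetic and the small $\ell=10$ vs.\ $\ell=12$ subtlety just noted.
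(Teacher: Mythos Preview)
Your proposal is correct and follows essentially the same approach as the paper: invoke Lemma~\ref{lem:lower_bound_stateful_routing} with the graph families from Lemma~\ref{lem:low_even_girth_graph_density} for $\ell\in\{4,6,8,12\}$, reading off $\delta(4)=1$, $\delta(6)=\tfrac12$, $\delta(8)=\tfrac13$, $\delta(12)=\tfrac15$, and for the last row use the $\ell=12$ graph (via $\calG_{k,12}\subseteq\calG_{k,10}$) to access the $\ell\geq 10$ stretch threshold despite the gap at $\ell=10$. This is exactly the paper's argument, including the same handling of the $\ell=10$ vs.\ $\ell=12$ wrinkle.
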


	\shrt{\newpage}

\appendix

\section{Some Basic Probabilistic Concepts}
\label{apx:generalnotations}

\begin{lemma}[Chernoff Bound]
	\label{lem:chernoffbound}
	We use the following forms of Chernoff bounds in our proofs:
	$$\mathbb{P}\big(X > (1 \!+\! \delta) \mu_H\big) \leq \exp\Big(\!-\!\frac{\delta\mu_H}{3}\Big),$$
	with $X = \sum_{i=1}^n X_i$ for i.i.d.\ random variables $X_i \in \{0,1\}$ and $\mathbb{E}(X) \leq \mu_H$ and $\delta \geq 1$. Similarly, for $\mathbb{E}(X) \geq \mu_L$ and $0 \leq \delta \leq 1$ we have
	$$\mathbb{P}\big(X < (1 \!-\! \delta) \mu_L\big) \leq \exp\Big(\!-\!\frac{\delta^2\mu_L}{2}\Big).$$
\end{lemma}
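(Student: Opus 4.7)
The plan is to prove both tails via the standard moment-generating-function (MGF) argument combined with Markov's inequality, then specialize the resulting expression to the two claimed regimes of $\delta$.

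First I would set $p_i := \mathbb{P}(X_i = 1)$ and $\mu := \mathbb{E}(X) = \sum_i p_i$. Using $1+x \leq e^x$, each Bernoulli factor satisfies $\mathbb{E}(e^{t X_i}) = 1 + p_i(e^t-1) \leq \exp(p_i(e^t-1))$ for every $t \in \mathbb{R}$. By independence, $\mathbb{E}(e^{tX}) \leq \exp(\mu(e^t-1))$. This is the only probabilistic input; everything else is optimizing $t$ and using the hypotheses $\mu \leq \mu_H$ (upper tail) and $\mu \geq \mu_L$ (lower tail).

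For the upper tail, I apply Markov's inequality to $e^{tX}$ with $t > 0$ and obtain, after choosing the standard minimizer $t = \ln(1+\delta)$,
\[
\mathbb{P}(X \geq (1+\delta)\mu) \leq \left(\frac{e^{\delta}}{(1+\delta)^{1+\delta}}\right)^{\mu}.
\]
Because the right-hand side is monotone decreasing in $\mu$ (for fixed $\delta > 0$, take the log and differentiate), and because $\{X \geq (1+\delta)\mu_H\} \subseteq \{X \geq (1+\delta)\mu\}$ when $\mu \leq \mu_H$, the same inequality holds with $\mu$ replaced by $\mu_H$. It then suffices to verify the elementary inequality
\[
\frac{e^{\delta}}{(1+\delta)^{1+\delta}} \leq e^{-\delta/3} \qquad \text{for all } \delta \geq 1,
\]
or equivalently $\tfrac{4\delta}{3} \leq (1+\delta)\ln(1+\delta)$ on $[1,\infty)$; this follows by checking equality-type behavior at $\delta = 1$ (where $2\ln 2 \approx 1.386 \geq 4/3$) and differentiating to confirm the right-hand side grows faster.

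For the lower tail, I repeat the MGF step with $t < 0$: Markov on $e^{tX}$ yields $\mathbb{P}(X \leq (1-\delta)\mu) \leq e^{-t(1-\delta)\mu} \exp(\mu(e^t-1))$, and the optimal choice $t = \ln(1-\delta)$ produces $\bigl(e^{-\delta}/(1-\delta)^{1-\delta}\bigr)^{\mu}$. As in the upper-tail case, the bound is monotone in $\mu$, so replacing $\mu$ by $\mu_L \leq \mu$ is legal. The remaining step is the standard inequality $-\delta - (1-\delta)\ln(1-\delta) \leq -\delta^2/2$ for $0 \leq \delta \leq 1$, which follows from the series $(1-\delta)\ln(1-\delta) = -\delta + \sum_{k\geq 2}\delta^k/(k(k-1))$ and comparing term-by-term with $\delta^2/2$.

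The only real obstacle is the two scalar inequalities above that convert the canonical Chernoff form into the clean bounds $e^{-\delta \mu_H/3}$ and $e^{-\delta^2\mu_L/2}$. Both are standard and can be dispatched by elementary calculus; everything else is mechanical MGF manipulation plus the monotonicity argument that lets us substitute $\mu_H$ and $\mu_L$ for the true expectation.
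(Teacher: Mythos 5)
The paper itself never proves this lemma (it is quoted as a standard fact, with only a remark citing a $k$-wise-independence variant), so the only question is whether your MGF derivation is sound. Most of it is, but the step where you replace $\mu$ by $\mu_H$ in the upper tail does not follow from the two reasons you give. The event inclusion gives $\mathbb{P}\bigl(X \geq (1+\delta)\mu_H\bigr) \leq \bigl(e^{\delta}/(1+\delta)^{1+\delta}\bigr)^{\mu}$, while the monotonicity you invoke says $\bigl(e^{\delta}/(1+\delta)^{1+\delta}\bigr)^{\mu_H} \leq \bigl(e^{\delta}/(1+\delta)^{1+\delta}\bigr)^{\mu}$ because $\mu \leq \mu_H$; the quantity you want to end up with is thus the \emph{smaller} of the two, so the chain of inequalities points the wrong way and you cannot conclude the bound with exponent $\mu_H$ from these two facts. (Note the asymmetry with the lower tail, where your argument is fine: there $\mu \geq \mu_L$ and the bound is decreasing in the exponent, so its value at $\mu$ is at most its value at $\mu_L$.)

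The repair is immediate and stays inside your proof: apply Markov directly at the threshold $(1+\delta)\mu_H$, i.e.\ $\mathbb{P}\bigl(X \geq (1+\delta)\mu_H\bigr) \leq e^{-t(1+\delta)\mu_H}\,\mathbb{E}\bigl(e^{tX}\bigr) \leq e^{-t(1+\delta)\mu_H}\exp\bigl(\mu_H(e^{t}-1)\bigr)$ for $t>0$, where the last inequality uses $\mu \leq \mu_H$ together with $e^{t}-1 \geq 0$; choosing $t=\ln(1+\delta)$ then yields $\bigl(e^{\delta}/(1+\delta)^{1+\delta}\bigr)^{\mu_H}$ directly. The rest of your argument is correct as written: the scalar inequality $(1+\delta)\ln(1+\delta) \geq 4\delta/3$ on $[1,\infty)$ holds (check $\delta=1$ and compare derivatives), the series comparison for $-\delta-(1-\delta)\ln(1-\delta) \leq -\delta^2/2$ is right, the strict-inequality events in the statement are dominated by the non-strict ones you bound, and the endpoint $\delta=1$ of the lower tail is handled by continuity or by $\mathbb{P}(X<0)=0$. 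With that one-line fix your proof is the standard textbook argument and fully establishes the lemma.
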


\begin{remark}
	\label{rem:chernoffbound}
	Note that the first inequality even holds if we have $k$-wise independence among the random variables $X_i$ for $k \geq \lceil \mu_H \delta \rceil$ (c.f., \cite{Schmidt1995} Theorem 2, note that a substitution $\mu_H := (1 \!+\! \eps)\E(X)$ generalizes the result for any $\mu_H \geq \E(X)$).
\end{remark}

\begin{lemma}[Union Bound]
	\label{lem:unionbound}
	Let $E_1, \ldots ,E_k$ be events, each taking place w.h.p. If $k \leq p(n)$ for a polynomial $p$, then $E \coloneqq \bigcap_{i=1}^{k} E_i$ also takes place w.h.p.
\end{lemma}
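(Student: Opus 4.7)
The plan is a direct application of Boole's inequality combined with the flexibility in the high-probability constant. Recall from Definition \ref{def:rand_algo} that an event $E_i$ takes place w.h.p.\ means that for \emph{every} constant $c > 0$ we can guarantee $\mathbb{P}(\neg E_i) \leq 1/n^c$ (the constant hidden in the algorithm may depend on $c$). The crux is that we must show the intersection still satisfies this "for every constant $c$" quantifier, so we cannot simply fix one constant upfront.

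First I would unpack the polynomial bound: since $k \leq p(n)$ for some polynomial $p$, there exists a constant $d > 0$ and an $n_0$ such that $k \leq n^d$ for all $n \geq n_0$. Next, to show $\bigcap_{i=1}^k E_i$ holds w.h.p., I must prove that for an arbitrary constant $c > 0$, the failure probability is at most $1/n^c$. Given such a $c$, I apply the w.h.p.\ guarantee on each $E_i$ with the \emph{larger} constant $c' := c + d$, yielding $\mathbb{P}(\neg E_i) \leq 1/n^{c+d}$ for each $i$.

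Then by Boole's inequality,
\begin{equation*}
\mathbb{P}\Big(\neg \bigcap_{i=1}^k E_i\Big) = \mathbb{P}\Big(\bigcup_{i=1}^k \neg E_i\Big) \leq \sum_{i=1}^k \mathbb{P}(\neg E_i) \leq k \cdot \frac{1}{n^{c+d}} \leq n^d \cdot \frac{1}{n^{c+d}} = \frac{1}{n^c},
\end{equation*}
which is exactly the definition of $E$ occurring w.h.p. Since $c$ was arbitrary, we are done.

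The only subtle point (and thus the "main obstacle", though it is minor) is the quantifier handling: one must apply the per-event high-probability guarantee with a constant inflated by $d$ rather than with the target constant $c$ itself, otherwise the union bound would lose the polynomial factor. Everything else is a one-line application of Boole's inequality, so no induction or more delicate probabilistic tool is needed.
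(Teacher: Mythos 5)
Your proposal is correct and follows essentially the same route as the paper: Boole's inequality applied after inflating the per-event high-probability constant by the degree bound $d$ of the polynomial, so that the factor $k \leq n^d$ is absorbed. The only difference is presentational — the paper fixes an unspecified constant $c$ first and chooses $c \geq c' + d$ at the end, whereas you inflate the constant upfront — which is immaterial.
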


\begin{proof}
	Let $d \coloneqq \deg(p)\!+\!1$. Then there is an $n_0 \geq 0$ such that $p(n) \leq n^d$ for all $n \geq n_0$. Let $n_1, \ldots , n_k \in \mathbb{N}$ such that for all $i \in \{1, \ldots, k\}$ we have $\mathbb{P}(\overline{E_i}) \leq \tfrac{1}{n^c}$ for some (yet unspecified) $c > 0$.
	With Boole's Inequality (union bound):
	\begin{align*}
	\mathbb{P}\big(\overline{E}\big) \!= \mathbb{P}\Big(\bigcup_{i=1}^{k} \overline{E_i} \Big) \leq \sum_{i=1}^{k} \mathbb{P}(\overline{E_i}) \leq \sum_{i=1}^{k} \!\frac{1}{n^c} \leq \frac{p(n)}{n^{c}} \leq \frac{1}{n^{c-d}}
	\end{align*}
	for all $n \geq n_0' \coloneqq \max(n_0, \ldots ,n_k)$. Let $c' > 0$ be arbitrary. We choose $c \geq c' \!\!+\! d$. Then $\mathbb{P}\big(\overline{E}\big) \leq \frac{1}{n^{c'}}$ for all $n \geq n_0'$.
\end{proof}

\begin{remark}
	If a constant number of events is involved we use the above lemma without explicitly mentioning it. It is possible to use the lemma in a nested fashion as long as the number of (nested) applications is polynomial in $n$.
\end{remark}

Our main application for the above two lemmas is the proof of Lemma \ref{lem:node_sampling}.

\begin{proof}[Proof of Lemma \ref{lem:node_sampling}]
	Let $u,v \in V$ with $hop(u,v) \!\geq\! \xi x \ln n$. Fix a shortest $u$-$v$-path $P_{u,v}$ and let $Q$ be a sub-path of $P_{u,v}$ with at least $\xi x \ln n$ nodes. Let $X_{u,v}$ be the random number of marked nodes on $Q$. Then we have \smash{$\mathbb{E}(X_{u,v}) \geq \frac{|Q|}{x} \geq {\xi \ln n}$}. Let $c > 0$ be arbitrary. We use a Chernoff bound (c.f.\ Lemma \ref{lem:chernoffbound}):
	\[\mathbb{P}\Big( X_{u,v} < \frac{\xi \ln n}{2}\Big) \leq \exp\Big(\!\!-\! \frac{\xi \ln n}{8}\Big) \stackrel{\xi \geq 8c}{\leq} \frac{1}{n^c}.\]
	Thus we have $X_{u,v} \geq 1$ w.h.p.\ for constant \smash{$\xi \geq \max(8c, 2 / \ln n)$}. Therefore the claim holds w.h.p.\ for the pair $u,v$. We claim that w.h.p.\ the event $X_{u,v} \geq 1$ occurs for all pairs $u,v \in V$ and for all sub-paths $Q$ of $P_{u,v}$ longer than $\xi x \ln n$ hops, for at least one shortest path $P_{u,v}$ from $u$ to $v$. There are at most \smash{$n^2$} many pairs $u,v \in V$. Moreover we can select at most $n$ sub-paths $Q$ of $P$ that do not fully contain any other selected sub-path. Hence the claim follows with the union bound given in Lemma~\ref{lem:unionbound}.
\end{proof}

\section{Concepts from Information Theory}
\label{sec:information_theory}

The Entropy $H(X) \!:=\! - \!\sum_{x \in S} \mathbb{P}(X \!=\! x) \log \big(\mathbb{P}(X \!=\! x) \big)$ gives a lower bound for expected number of bits required for encoding the state of a random variable. This is entailed by Shannon's \cite{Shannon1948} source coding theorem.

\begin{lemma}[c.f., \cite{Shannon1948}]
	\label{lem:source_coding_theorem}
	Given a random variable $X$ with outcomes from some set $S$ and an uniquely decodable code $f : S \to \{0,1\}^*$ with expected code length $\mathbb{E}(|f(X)|)$. Then $\mathbb{E}(|f(X)|) \geq H(X)$.
\end{lemma}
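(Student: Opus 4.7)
The plan is to prove Shannon's source coding theorem via the classical two-step argument: first establish Kraft's inequality (in the generalized McMillan form for uniquely decodable codes), then derive the entropy lower bound from the non-negativity of KL divergence (Gibbs' inequality). Let $S$ be countable, write $p_x := \mathbb{P}(X=x)$ and $\ell_x := |f(x)|$, and assume without loss of generality $p_x > 0$ for all $x \in S$.

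\textbf{Step 1 (McMillan/Kraft).} I would first show $K := \sum_{x \in S} 2^{-\ell_x} \leq 1$. For a prefix code this is immediate by interpreting each codeword as a dyadic subinterval of $[0,1)$ and using disjointness. For a general uniquely decodable code the standard trick is to consider, for any integer $N \geq 1$,
\[
K^N \;=\; \Bigl(\sum_{x \in S} 2^{-\ell_x}\Bigr)^N \;=\; \sum_{(x_1,\dots,x_N) \in S^N} 2^{-(\ell_{x_1} + \cdots + \ell_{x_N})} \;=\; \sum_{m=1}^{N \ell_{\max}} a_m \, 2^{-m},
\]
where $a_m$ counts the $N$-tuples of source symbols whose concatenated codewords have total length exactly $m$. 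By unique decodability, distinct tuples produce distinct concatenations, so $a_m \leq 2^m$ (the number of binary strings of length $m$), hence $K^N \leq N \ell_{\max}$. Taking $N$-th roots and letting $N \to \infty$ yields $K \leq 1$. (If $S$ is infinite one truncates to a finite subset first and passes to the limit.)

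\textbf{Step 2 (Gibbs inequality).} Define the auxiliary probability distribution $q_x := 2^{-\ell_x}/K$ on $S$. By the non-negativity of relative entropy (immediate from $\ln t \leq t - 1$ applied to $t = q_x/p_x$),
\[
0 \;\leq\; \sum_{x \in S} p_x \log \frac{p_x}{q_x} \;=\; -H(X) - \sum_{x \in S} p_x \log q_x \;=\; -H(X) + \mathbb{E}\bigl(|f(X)|\bigr) + \log K.
\]
Rearranging gives $\mathbb{E}(|f(X)|) \geq H(X) - \log K$, and since $K \leq 1$ implies $\log K \leq 0$, we conclude $\mathbb{E}(|f(X)|) \geq H(X)$, as desired.

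\textbf{Main obstacle.} The only nontrivial step is Step 1: Kraft's inequality is usually proved for prefix codes, but the lemma's hypothesis is the strictly weaker \emph{uniquely decodable} property. The counting-via-$K^N$ argument above (McMillan's theorem) is the only delicate part; the remainder is purely algebraic manipulation of the entropy expression. All other details, including the handling of $p_x = 0$ terms under the convention $0 \log 0 = 0$ and the extension to countably infinite $S$, are routine.
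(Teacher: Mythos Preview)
Your proof is correct and is precisely the standard McMillan--Kraft plus Gibbs-inequality argument. The paper, however, does not prove this lemma at all: it is stated in the appendix as a classical fact with a citation to Shannon~\cite{Shannon1948} and then invoked as a black box to derive Corollary~\ref{cor:lower_bound_two_party}. So there is nothing to compare against; you have supplied a full (and textbook) proof where the paper supplies only a reference.
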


In particular, in a two party communication setting (see Definition \ref{def:party_comm_problem}) this implies that $H(X)$ constitutes a lower bound for the worst case number of bits that have to be transmitted from one party that knows the state of $X$ to some party that needs to learn it.

\begin{corollary}
	\label{cor:lower_bound_two_party}
	Bob must receive at least $H(X)$ bits from Alice in expectation, as part of any protocol solving the two party communication problem (Def.\ \ref{def:party_comm_problem}).
\end{corollary}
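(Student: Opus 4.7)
The plan is to derive the corollary as a direct application of Shannon's source coding theorem (Lemma~\ref{lem:source_coding_theorem}), by exhibiting the sequence of Alice-to-Bob messages as a uniquely decodable code for $X$ whose expected length equals the expected number of bits Alice sends to Bob. First I would reduce to a deterministic setting by conditioning on Bob's private input and random bits (both independent of $X$ by the definition of the two-party problem, since $H(X \mid S_{\text{Bob}}) = H(X)$); the bound in the conditional expectation then transfers to the unconditional expectation by averaging.

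Second I would eliminate the Bob-to-Alice direction of the transcript. Once Bob's randomness is fixed, his outgoing messages become a deterministic function of the Alice-to-Bob messages received so far, so Alice can locally simulate them without any actual communication taking place. Thus the only bits that must be transmitted from Alice to Bob are those of the Alice-to-Bob portion of the transcript; call this bit string $f(X)$. Its expected length $\mathbb{E}[|f(X)|]$ is precisely the quantity the corollary speaks about.

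Third, I would argue that $f$ is a uniquely decodable code on the support of $X$. Correctness of $\mathcal{P}$ means that Bob's decoding rule $D$ satisfies $D(f(x)) = x$ for every realizable $x$, which makes $f$ injective; moreover, the round structure of the protocol tells both parties when each individual message ends, so the concatenated string $f(X)$ is self-delimiting with no extra overhead. Lemma~\ref{lem:source_coding_theorem} then yields $\mathbb{E}[|f(X)|] \geq H(X)$, which is exactly the claim. The main subtlety I expect to face is a clean treatment of the framing of interactive messages: I must verify that the Kraft-style prefix condition required by the source coding theorem really does hold without any hidden additive overhead, which is standard for such encoding arguments but deserves to be stated explicitly before invoking Lemma~\ref{lem:source_coding_theorem}.
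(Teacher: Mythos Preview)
Your proposal is correct and follows essentially the same approach as the paper: both argue that the Alice-to-Bob portion of the transcript constitutes a uniquely decodable code for $X$ and then invoke Lemma~\ref{lem:source_coding_theorem}. Your treatment is in fact more careful than the paper's, which simply observes that different outcomes of $X$ must yield different transcripts and immediately appeals to the source coding theorem, glossing over the conditioning on Bob's side information and the self-delimiting/prefix issue that you explicitly flag.
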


\begin{proof}
	Assume, for a contradiction, that we have a protocol $\mathcal P$ in which sending \textit{less} than $H(X)$ bits from Alice to Bob always suffices to solve the two party communication problem. 	
	Clearly, for any two possible outcomes $x_1, x_2 \in S$ of $X$, the transcript of the communication occurring between Alice and Bob must be different as otherwise Bob would not be able to distinguish $x_1$ from $x_2$.	
	But then we could use the transcript of $\mathcal P$ for any given outcome of $x \in S$ of $X$ as uniquely decodable code for $x$ of expected length less than $H(X)$, a contradiction to Lemma \ref{lem:source_coding_theorem}.
\end{proof}

Using information theoretic concepts, the above statement generalizes for a protocol that has a probability of at least $p$ that Bob can successfully decode the state of $X$ after it is terminates.

\begin{lemma}
	\label{lem:lower_bound_two_party}
	Bob must receive at least $p\cdot H(X)-1$ bits from Alice in expectation, as part of any protocol that solves the two party communication problem (c.f., Def.\ \ref{def:party_comm_problem}) with probability at least $p$.
\end{lemma}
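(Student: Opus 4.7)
The plan is to extend the zero-error argument of Corollary~\ref{cor:lower_bound_two_party} to the error-tolerant regime by combining Shannon source coding with Fano's inequality. Let $T$ denote the random (prefix-free) concatenation of messages Bob receives from Alice, and let $Y$ denote Bob's final decoded guess, which is a function of $T$ together with Bob's private coins (sampled independently of $X$). Standard source coding gives
\[
\E[|T|] \;\geq\; H(T) \;\geq\; I(X;T) \;=\; H(X) - H(X \mid T),
\]
so the task reduces to upper-bounding $H(X \mid T)$ by roughly $(1-p)H(X)+1$.

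Because Bob's randomness is independent of $X$, the variables form a Markov chain $X \to T \to Y$, and the data-processing inequality yields $H(X \mid T) \leq H(X \mid Y)$. By hypothesis $\mathbb{P}[Y = X] \geq p$ for every realization of $X$, so the error probability satisfies $p_e := \mathbb{P}[Y \neq X] \leq 1 - p$. Fano's inequality then gives
\[
H(X \mid Y) \;\leq\; H_2(p_e) + p_e \log \bigl(|\mathrm{supp}(X)| - 1\bigr) \;\leq\; 1 + (1-p)\log|\mathrm{supp}(X)|,
\]
where $H_2$ denotes the binary entropy function (so $H_2(p_e)\leq 1$).

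Chaining these inequalities gives $\E[|T|] \geq H(X) - 1 - (1-p)\log|\mathrm{supp}(X)|$. In every application of the lemma in the sequel, $X$ is an i.i.d.\ uniform bit-string, so $\log|\mathrm{supp}(X)| = H(X)$ and the bound simplifies to the claimed $p\,H(X) - 1$.

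The main delicate point is the data-processing step, which requires every information pathway from $X$ to $Y$ to flow through $T$; this is automatic for the one-way protocols in which the lemma is invoked, but in a fully interactive setting it relies on Bob's (and any shared) randomness being drawn independently of $X$. A secondary subtlety is that Fano's inequality genuinely produces $\log|\mathrm{supp}(X)|$ rather than $H(X)$, and the two coincide exactly for uniform $X$ --- precisely the regime used in all later sections.
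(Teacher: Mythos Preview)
Your argument via Fano's inequality is correct for uniform $X$ and lands on the same bound, but the paper proceeds differently. Instead of bounding $H(X\mid Y)$ by Fano, the paper introduces the auxiliary indicator $Z$ of the event $\{Y=X\}$, uses the chain rule to get $I(X;Y)\ge I(X;Y,Z)-H(Z)$, and then lower-bounds $I(X;Y,Z)$ directly: conditioned on $Z=1$ the posterior of $X$ given $(Y,Z)$ is a point mass, so the $z=1$ terms of the mutual-information sum contribute exactly $\sum_i p_i\,\mathbb P(X=x_i)\log\frac{1}{\mathbb P(X=x_i)}\ge p\,H(X)$, while the $z=0$ terms are nonnegative KL divergences and may be dropped. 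This yields $I(X;Y)\ge pH(X)-1$ for \emph{arbitrary} $X$, whereas your Fano route genuinely needs $\log|\mathrm{supp}(X)|=H(X)$ and hence uniformity (which you correctly flag). So the paper's argument proves the lemma as stated; yours proves it only under the extra hypothesis that $X$ is uniform, which happens to cover every later application.

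One small correction on the data-processing step: in the interactive setting, independence of Bob's randomness $R_B$ from $X$ alone does \emph{not} make $X\to T\to Y$ a Markov chain (for instance, if Bob first sends $R_B$ and Alice replies with $X\oplus R_B$, then $I(X;T)=0$ but $I(X;Y)=1$). The right fix is to condition on $R_B$ throughout: $\E[|T|]\ge H(T\mid R_B)\ge I(X;T\mid R_B)\ge I(X;Y)$, using that $Y$ is a function of $(T,R_B)$ and $R_B\perp X$. The paper is equally informal on exactly this step, so this is not a defect unique to your write-up.
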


\begin{proof}	
	We assume that the random variable $X$ has a finite number of outcomes (which is sufficient for our purposes), i.e., $X \in \{x_1, \dots ,x_k\}$ for some $k \in \mathbb N$. Assuming the outcome $X=x_i$, let $y_i$ be the output that Bob makes after the randomized communication protocol terminates. Then 
		\begin{align*}
			y_i =
			\begin{cases}
				x_i, & \text{with probability } p_i\\
				x_j \text{ and } j \neq i, & \text{with probability } (1-p_i),
			\end{cases}
		\end{align*}
	where $p \leq p_i \leq 1$. That means we have another random variable $Y$ dependent on $X$, which describes Bob's guess about the state of $X$. It remains to prove that the information about $X$ that is still contained in $Y$, is large. This is known as the \textit{transinformation} $I(X;Y)$ and since Bob ``learns'' the state of $Y$, at least $I(X;Y)$ must have been transmitted from Alice to Bob. In particular, we want to show $I(X;Y) \geq p \cdot H(X) - 1$.
	The transinformation is defined as 
	\[
		I(X;Y) = \sum_{i,j \in [k]} \mathbb{P}(X\!=\!x_i,Y\!=\!x_j) \cdot \log \frac{\mathbb{P}(X\!=\!x_i,Y\!=\!x_j)}{\mathbb{P}(X\!=\!x_i)\mathbb{P}(Y\!=\!x_j)}
	\]
	Analyzing this directly is tricky since the output distribution of $Y$ for the second case, where $Y \neq X$, is not specified (and can not be made such, without loosing the generality of the claim). So we have to take a detour by defining a third random variable $Z$ that tells us if the protocol was successful.
	\begin{align*}
		Z =
		\begin{cases}
		1, & \text{if } y_i = x_i\\
		0, & \text{else}.
		\end{cases}
	\end{align*}
	To simplify the analysis of the transinformation we assume that Bob gets to know $Z$ ``for free'' and since $H(Z) \leq 1$ the additional information about $X$ from learning $Y$ is not significantly reduced. Formally, we first show $I(X;Y) \geq I(X;Y,Z) - H(Z)$ which allows us to analyze $I(X;Y,Z)$ instead. 
	
	The conditional entropy $H(A|B)$ describes the amount of ``new'' information in some random variable $A$ given that we already know random variable $B$. 	
	In the following steps we will use the fact that $H(Z|X,Y) = 0$ since $Z$ is functionally dependent on $X$ and $Y$ and we will the chain rule of entropy $H(A,B) = H(A|B) \!+\! H(B)$. We plug this into the alternative characterization of transinformation
	\begin{align*}
	I(X;Y,Z) & = H(X) - H(X| Y,Z) \tag*{\small \text{\textit{def.\ of}} $I(X;Y,Z)$}\\
	& = H(X) - H(X,Z|Y) + H(Z|Y) \tag*{\small \text{\textit{chain rule}}}\\
	& = H(X) - H(Z|X,Y) - H(X|Y) + H(Z|Y) \tag*{\small \text{\textit{chain rule}}}\\
	& = H(X) - H(X|Y) + H(Z|Y) \tag*{\small $H(Z|X,Y) = 0$}\\
	& = I(X;Y) + H(Z|Y) \tag*{\small \text{\textit{def.\ of}} $I(X;Y)$}\\
	& \leq I(X;Y) + H(Z).
	\end{align*}
	This implies $I(X;Y) \geq I(X;Y,Z) - H(Z)$, and it remains to show that $I(X;Y,Z)$ is large. 	
	The random variable $Z$ helps in the following way.
	For any $x_i$ we have
	\[
		\mathbb P(Y \!\!=\!\! x_i,Z \!\!=\!\! 1) = p_i \cdot \mathbb P(X \!\!=\!\! x_i) = \mathbb P(X \!\!=\!\! x_i, Y \!\!=\!\! x_i,Z \!\!=\!\! 1),
	\] 
	since $Z=1$ means that $Y=x_i$ is only possible if $X=x_i$. We obtain
	\begin{align*}
		I(X;Y,Z) & = \hspace*{-6mm}\sum_{i,j \in [k], z \in \{0,1\}} \hspace*{-6mm} \mathbb P(X \!\!=\!\! x_i, Y \!\!=\!\! x_j,Z \!\!=\!\! z) \cdot \log \frac{\mathbb P(X \!\!=\!\! x_i, Y \!\!=\!\! x_j,Z \!\!=\!\! z)}{\mathbb P(X \!\!=\!\! x_i) \cdot \mathbb P(Y \!\!=\!\! x_j,Z \!\!=\!\! z)}\\
		& \geq \hspace*{-0mm}\sum_{i \in [k]} \hspace*{-0mm} \mathbb P(X \!\!=\!\! x_i, Y \!\!=\!\! x_i,Z \!\!=\!\! 1) \cdot \log \frac{\mathbb P(X \!\!=\!\! x_i, Y \!\!=\!\! x_i,Z \!\!=\!\! 1)}{\mathbb P(X \!\!=\!\! x_i) \cdot \mathbb P(Y \!\!=\!\! x_i,Z \!\!=\!\! 1)}\\
		& = \hspace*{-0mm}\sum_{i \in [k]} \hspace*{-0mm}  p_i \cdot \mathbb P(X \!\!=\!\! x_i) \cdot \log \frac{1}{\mathbb P(X \!\!=\!\! x_i)}\\
		& \geq \: p \cdot \hspace*{-0mm}\sum_{i \in [k]} \hspace*{-0mm}  \mathbb P(X \!\!=\!\! x_i) \cdot \log \frac{1}{\mathbb P(X \!\!=\!\! x_i)} = p \cdot H(X)
	\end{align*}	
	Finally, we have 
	$I(X;Y) \geq I(X;Y,Z) - H(Z) \geq p \cdot H(X) - 1.$
\end{proof}

\section{Density of Bounded Girth Graphs}
\label{sec:girth_density}

We reproduce a few known and conjectured results from extremal graph theory, in particular that the number of edges in cycle-free graphs can be bounded from above and below. We are going to formulate these results in the context and granularity that we require in this article (neglecting constants, in particular). 
First, there is a long standing conjecture from Erd\H{o}s and Simonovits \cite{Erdoes1982}.\footnote{\cite{Erdoes1982} states in Conjecture 5 that there are graphs without cycles of a fixed length with the claimed density, and conjectures that the same holds for excluding smaller cycles as well (below Theorem 2 of \cite{Erdoes1982}).}

\begin{conjecture}[by \cite{Erdoes1982}]
	\label{con:erdoes_girth}
	For any $k \!\in\! \mathbb{N}$, there is an $n$-node graph with girth $\geq 2k\!+\!1$ and $\Theta(n^{1+\frac{1}{k}})$ edges.
\end{conjecture}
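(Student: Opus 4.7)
The statement is the Erd\H{o}s--Simonovits girth conjecture, which remains open for general $k$, so my ``plan'' is really to sketch the two halves of the problem and indicate where the obstruction lies, rather than to claim a full proof.

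The upper-bound direction (which pins the exponent at $1+1/k$ and motivates the conjectural lower bound) is the classical Moore-type bound and would serve as the warm-up. If $G$ has girth at least $2k+1$, then the $k$-neighborhood of any vertex is a tree; counting vertices in that tree forces $1 + d(d-1) + \cdots + d(d-1)^{k-1} \leq n$ where $d$ is the minimum degree. Rearranging gives $d \in O(n^{1/k})$ and hence $|E(G)| \in O(n^{1+1/k})$. This half is unconditional and takes only a few lines.

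The conjectural half is to produce matching constructions. For $k \in \{1,2,3,5\}$ I would invoke explicit algebraic objects: $k=1$ is trivial (complete graph); $k=2,3,5$ come from incidence graphs of generalized polygons (projective planes, quadrangles, hexagons) realized over finite fields. In each case the girth is verified from the incidence axioms and the edge count follows from standard parameter formulas for the geometry. For intermediate values of $k$ one would fall back to probabilistic deletion -- sample $G(n,p)$ at $p \approx n^{-1+1/k}$ and remove one edge from each short cycle -- but this only yields density $\Theta(n^{1+1/(2k-1)})$, off by roughly a factor of two in the denominator of the exponent.

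The hard part -- and the reason this is labelled a conjecture -- is precisely the remaining values of $k$. By the Feit--Higman theorem, generalized polygons of the required diameter do not exist outside the list $k \in \{1,2,3,5\}$ (plus some degenerate cases), so the clean algebraic route is permanently closed for, e.g., $k=4$. Any proof for general $k$ would require a genuinely new family of graphs that are simultaneously almost-regular, free of cycles of length $\leq 2k$, and dense enough to reach $n^{1+1/k}$ edges, and no such construction is known. Accordingly my ``plan'' terminates here: the most one can honestly do is state the conjecture and use it as a hypothesis, or restrict applications to those $k$ for which an explicit construction exists, which is exactly what Lemmas~\ref{lem:low_even_girth_graph_density} and~\ref{lem:high_even_girth_graph_density} appear to do in the sequel.
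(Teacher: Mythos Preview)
Your assessment is correct: the statement is the Erd\H{o}s--Simonovits girth conjecture, and the paper does not prove it either --- it is stated explicitly as an open conjecture (attributed to \cite{Erdoes1982}) and no proof is given. The paper, just as you anticipate in your final paragraph, invokes it only as motivation and then falls back on the known constructions (Lemmas~\ref{lem:low_girth_graph_density} and~\ref{lem:high_girth_graph_density}) for the actual lower-bound arguments. So there is nothing to compare: both you and the paper treat the statement as an unproven hypothesis, and your sketch of the known partial results and the Feit--Higman obstruction is accurate context that the paper itself omits.
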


It is known that a graph with average degree $d$ and girth $2k+1$ has $n \in \Omega(d^k)$ nodes due to \cite{Alon2001}. This translates into the following lemma:

\begin{lemma}[c.f., \cite{Alon2001}]
	\label{lem:girth_edges_bound}
	Any $n$-node graph with girth at least $2k+1, k \in \mathbb N$ has at most \smash{$O(n^{1+\frac{1}{k}})$} edges.
\end{lemma}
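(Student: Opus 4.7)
\textbf{Proof plan for Lemma \ref{lem:girth_edges_bound}.} The plan is to use the classical Moore-bound argument: girth $\geq 2k+1$ forces a breadth-first-search ball of radius $k$ to be a genuine tree, which in turn forces a lot of vertices once the degree is high. The only subtlety is that a Moore-type argument works cleanly with \emph{minimum} degree, while what we really have control over is the average degree $\bar d = 2m/n$. So the first step is a standard pruning reduction.

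\textbf{Step 1 (reduction to minimum-degree subgraph).} Given $G$ with $n$ vertices, $m$ edges, and girth $\geq 2k+1$, I would pass to a nonempty induced subgraph $G'\subseteq G$ of minimum degree at least $\bar d/2 = m/n$. This is obtained by the usual greedy pruning: repeatedly delete any vertex of degree $<m/n$; each deletion removes fewer than $m/n$ edges, so the edge-to-vertex ratio never drops, and the process terminates with some nonempty $G'$ whose minimum degree $\delta'$ is at least $m/n$. Crucially, $G'$ still has girth $\geq 2k+1$ because taking subgraphs can only \emph{increase} girth.

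\textbf{Step 2 (Moore bound on $G'$).} Pick any vertex $v$ in $G'$ and consider the BFS layers $L_0=\{v\}, L_1, \dots, L_k$. I claim every $u\in L_i$ (for $i<k$) contributes at least $\delta'-1$ \emph{new} vertices to $L_{i+1}$. If not, either $u$ has two neighbors in $L_i\cup L_{i-1}\cup\cdots$ already, or two vertices in $L_{i+1}$ share more than one parent in $L_i$; each scenario closes a cycle of length at most $2i+2\leq 2k$, contradicting the girth hypothesis. Consequently
\[
n \;\geq\; |L_0|+|L_1|+\cdots+|L_k| \;\geq\; 1+\delta'\sum_{i=0}^{k-1}(\delta'-1)^{i} \;\geq\; (\delta'-1)^{k}.
\]

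\textbf{Step 3 (combine).} Taking $k$-th roots gives $\delta' \leq n^{1/k}+1$, and since $\delta'\geq m/n$ from Step 1 we get $m \leq n(n^{1/k}+1) = O(n^{1+1/k})$, which is exactly the claim.

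\textbf{Where the difficulty sits.} None of the individual steps is hard; the only thing one must be careful about is the off-by-one in the Moore sum and the fact that the girth bound of $2k+1$ (odd) is what makes BFS-tree-of-depth-$k$ genuinely acyclic — an even girth of $2k$ would only buy us a tree of depth $k-1$ and a weaker bound. The pruning step is standard but worth stating explicitly, since the naive application of the Moore bound to $G$ itself would need a minimum-degree hypothesis we do not have. A sharper version (with the $O(\cdot)$ constant improved) is obtained by the more delicate Alon--Hoory--Linial argument directly on average degree, but for the $O(n^{1+1/k})$ bound stated here the pruning-plus-Moore route above is entirely sufficient.
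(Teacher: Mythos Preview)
Your proof is correct. The paper, however, does not actually prove this lemma: it simply records it as a known consequence of the Alon--Hoory--Linial Moore bound for irregular graphs \cite{Alon2001}, namely that a graph of girth $2k+1$ and \emph{average} degree $d$ must have $n\in\Omega(d^k)$ vertices, from which $m=\tfrac{1}{2}dn\in O(n^{1+1/k})$ is immediate. Your argument takes the more elementary route of first pruning to a subgraph of large \emph{minimum} degree and then applying the classical (regular) Moore bound; you yourself note this distinction in your final paragraph. Both arrive at the same asymptotic bound; the Alon--Hoory--Linial result the paper cites is sharper in the constant and avoids the pruning step, while your version is fully self-contained and does not require invoking an external reference.
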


Conjecture \ref{con:erdoes_girth} is known to be true for some parameters of $k$ due to \cite{Singleton1966} and \cite{Benson1966}.

\begin{lemma}[c.f., \cite{Singleton1966}, \cite{Benson1966}]
	\label{lem:low_girth_graph_density}
	For $k=2,3,5$ there are $n$-node graphs with girth $2k\!+\!1$ and \smash{$\Theta(n^{1+\frac{1}{k}})$} edges.
\end{lemma}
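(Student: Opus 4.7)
\medskip

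The plan is to exhibit, for each of $k \in \{2,3,5\}$, an explicit family of graphs coming from finite geometry, verify the girth and edge count, and then show that these families are dense enough in $n$ to cover every value of $n$ (up to losing only constant factors hidden by the $\Theta$).

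The construction I would use is a \emph{polarity graph} over the three ``classical'' generalized polygons that exist in Moore-like density. For $k=2$, take the Desarguesian projective plane $PG(2,q)$ for a prime power $q$, and form the Erd\H{o}s--R\'enyi polarity graph: the vertex set is the set of $q^2+q+1$ points, and two distinct points $p,p'$ are adjacent iff $p$ lies on the polar line of $p'$ under a fixed non-degenerate polarity. This graph has $n = q^2+q+1$ vertices, each of degree $q$ or $q+1$ (so $\Theta(n^{3/2})$ edges), and its girth is $5$: two distinct points share exactly one line, which rules out $4$-cycles, while a triangle would correspond to three pairwise incident point--line flags, impossible in a projective plane. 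For $k=3$ and $k=5$ I would carry out the analogous polarity construction on the symplectic generalized quadrangle $W(q)$ (for $q$ an odd prime power) and on the split Cayley hexagon $H(q)$ (for $q$ a power of $3$), which are known to admit polarities. These yield graphs on $\Theta(q^3)$ and $\Theta(q^5)$ vertices with degrees $\Theta(q)$ and $\Theta(q)$ and girths $7$ and $11$ respectively, hence $\Theta(n^{1+1/k})$ edges.

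The girth verification in each case reduces to the defining axiom of the generalized $m$-gon (no ordinary $n$-gons for $n<m$, plus a unique shortest path of length at most $m$ between any two flags), combined with the standard observation that passing to the polarity graph shortens cycles by at most a factor of two but creates no new short cycles other than ones forced by absolute points of the polarity; these absolute points are few enough to remove without affecting the asymptotic edge count. Counting edges is routine once the regularity (or near-regularity) of the graph is established.

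The remaining step is to pass from the discrete set of available parameters $q$ (prime powers, with additional parity/characteristic restrictions for $k=3,5$) to every $n$. For this I would invoke the fact that consecutive prime powers of a fixed residue class are within a constant factor of each other (by Bertrand's postulate for $k=2$, and by known bounds on gaps between prime powers of odd order and of powers of $3$ for $k=3,5$): given arbitrary $n$, choose the largest admissible $q$ with the construction having at most $n$ vertices, and pad with isolated vertices to reach exactly $n$. Padding preserves the girth and changes the edge count by at most a constant factor, so the $\Theta(n^{1+1/k})$ bound survives.

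The main obstacle is not any single step but the amount of finite-geometry background needed for $k=3$ and $k=5$: one must quote (or reprove) the existence of polarities for $W(q)$ and $H(q)$, and that absolute points are a lower-order term. For the purposes of this paper the cleanest route is to cite Singleton~\cite{Singleton1966} for $k=2$ and Benson~\cite{Benson1966} for $k=3,5$ (as is already done in the excerpt) and limit the proof to verifying the parameters $n$ and edge count claimed, together with the padding argument that makes the statement hold for every $n$ rather than only for the discrete parameter values arising from the algebraic construction.
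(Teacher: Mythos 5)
The paper does not actually prove this statement: Lemma~\ref{lem:low_girth_graph_density} is imported directly from \cite{Singleton1966} and \cite{Benson1966}, and Appendix~\ref{sec:girth_density} says explicitly that it only reproduces known extremal facts at the granularity the paper needs. Your proposal therefore attempts more than the paper does, which is legitimate, but the construction you sketch has genuine gaps. For $k=2$: the Erd\H{o}s--R\'enyi polarity graph of $PG(2,q)$ is $C_4$-free, but it is \emph{not} triangle-free. A triangle in a polarity graph is a \emph{self-polar triangle} (three points, each incident with the polar lines of the other two), and such triangles do exist (for odd $q$, any basis diagonalizing the underlying quadratic form gives one), so your argument that a triangle would force ``three pairwise incident point--line flags'' misdescribes what a triangle in the polarity graph is, and the claimed girth $5$ fails (for odd $q$ the girth is $3$). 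This particular case is repairable --- by $C_4$-freeness every edge lies in at most one triangle, so the triangles are edge-disjoint and deleting one edge from each leaves girth at least $5$ with $\Theta(n^{3/2})$ edges --- but that step is missing. For $k=3,5$ the problems are more serious: $W(q)$ admits a polarity only for $q=2^{2e+1}$ and the split Cayley hexagon only for $q=3^{2e+1}$, not for the parameter ranges you state; and, more importantly, nothing in your sketch excludes short cycles in these polarity graphs. Your heuristic that polarization ``shortens cycles by at most a factor of two'' is precisely the difficulty rather than the solution: from incidence girth $8$ (resp.\ $12$) it only yields polarity-graph girth on the order of $4$ (resp.\ $6$), and ordinary pentagons and other short polygons compatible with the polarity are not forbidden by the generalized-polygon axioms, so the claimed girths $7$ and $11$ are unsupported.

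The clean route --- and the one the citations actually cover --- is to take the bipartite point--line \emph{incidence} graphs of the projective plane, the generalized quadrangle and the generalized hexagon of order $q$: these are $(q+1)$-regular on $\Theta(q^2)$, $\Theta(q^3)$, $\Theta(q^5)$ vertices with girth $6$, $8$, $12$ respectively, hence girth at least $2k+1$ and $\Theta(n^{1+1/k})$ edges. This is exactly what \cite{Singleton1966} and \cite{Benson1966} construct, it suffices for the paper (Definition~\ref{def:graph_family} and the later Lemmas~\ref{lem:make_girth_even} and \ref{lem:low_even_girth_graph_density} only need girth \emph{at least} a given value, and ultimately want bipartite graphs of even girth anyway), and it avoids polarities entirely. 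The one useful addition in your write-up beyond the citation is the densification step (pick the largest admissible prime power $q$ and pad to reach every $n$, with Bertrand-type bounds giving constant-factor control); that part is sound and worth keeping as a remark, and with incidence graphs every prime power is admissible, so it becomes immediate.
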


There are more general lower bounds for graphs for arbitrary girth by \cite{Lazebnik1997} which the survey \cite{Verstraete2016} summarizes as follows:

\begin{lemma}[c.f., \cite{Lazebnik1997}, \cite{Verstraete2016}]
	\label{lem:high_girth_graph_density}
	For any $k\geq 2$ there is a $n$-node graph with girth $2k\!+\!1$ and \smash{$\Theta(n^{1+\frac{2}{3k-2}})$} edges if $k$ is even, and \smash{$\Theta(n^{1+\frac{2}{3k-3}})$} if $k$ is odd.
\end{lemma}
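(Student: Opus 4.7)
The plan is to derive Theorem \ref{thm:lower_bound_stateful_routing} by instantiating the general lower bound of Lemma \ref{lem:lower_bound_stateful_routing} with concrete graph families supplied by Lemma \ref{lem:low_even_girth_graph_density}. Lemma \ref{lem:lower_bound_stateful_routing} already ties a stretch threshold $\alpha_\ell$ to the girth $\ell$ of a graph $G \in \calG_{k,\ell}$ and outputs $\Omega\!\big((n^{\delta}/\gamma)^{1/(2+\delta)}\big)$ rounds and a permissible label size of $c \cdot n^{2\delta/(2+\delta)}\gamma^{\delta/(2+\delta)}$, where $\delta$ is the density exponent, i.e., $|E(G)|\in\Theta(k^{1+\delta})$. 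Thus for each of the four stretch values in the theorem it suffices to pick the smallest girth $\ell$ that unlocks that stretch in Lemma \ref{lem:lower_bound_stateful_routing}, feed in the densest available bipartite graph of that girth, and verify that the resulting exponents collapse to those stated.

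For the first three bullets I would take $\ell \in \{4,6,8\}$. Lemma \ref{lem:low_even_girth_graph_density} provides balanced bipartite graphs on $2k$ nodes with $\Theta(k^{1+2/(\ell-2)})$ edges, hence $\delta(4)=1$, $\delta(6)=\tfrac{1}{2}$, $\delta(8)=\tfrac{1}{3}$. Lemma \ref{lem:lower_bound_stateful_routing} attaches to these the stretches $\sqrt{2}-\eps$, $\tfrac{5}{3}-\eps$, $\tfrac{7}{4}-\eps$. Substituting $\delta$ into $(n^\delta/\gamma)^{1/(2+\delta)}$ gives $(n/\gamma)^{1/3}$, $n^{1/5}/\gamma^{2/5}$, and $n^{1/7}/\gamma^{3/7}$; substituting into $n^{2\delta/(2+\delta)}\gamma^{\delta/(2+\delta)}$ gives $(n^2\gamma)^{1/3}$, $(n^2\gamma)^{1/5}$, $(n^2\gamma)^{1/7}$, matching each of the first three bullets verbatim.

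For the fourth bullet — stretch $(3+\sqrt{17})/4 - \eps$ — Lemma \ref{lem:lower_bound_stateful_routing} only guarantees the stretch when $\ell \geq 10$. The density-maximizing choice would be $\ell = 10$, but Lemma \ref{lem:low_even_girth_graph_density} has no entry for girth 10. I would therefore use a graph $G \in \calG_{k,12}\subseteq \calG_{k,10}$, still fulfilling the $\ell \geq 10$ precondition of Lemma \ref{lem:lower_bound_stateful_routing}. This yields $\delta(12) = \tfrac{1}{5}$, and another direct substitution produces $\Omega(n^{1/11}/\gamma^{5/11})$ rounds with label size $\leq c\,(n^2\gamma)^{1/11}$, as claimed. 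The constant $c$ in each line is simply the $c$ furnished by Lemma \ref{lem:lower_bound_stateful_routing}, which holds uniformly over the four $\delta$ choices.

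The main obstacle is not conceptual, since all the real work — the simulation reduction, the trade-off between $k$ and $h$, the five routing inequalities maximizing $\alpha_\ell$ in the presence of stateful loops — is already packaged in Lemmas \ref{lem:approx_lower_bound_groundwork} and \ref{lem:lower_bound_stateful_routing}. The only nuisance is the $\ell = 10$ gap in Lemma \ref{lem:low_even_girth_graph_density}: one loses a factor in the density exponent by climbing from the hypothetical $\delta(10) \leq \tfrac{1}{4}$ (cf.\ Lemma \ref{lem:girth_edges_bound} applied to girth $9$) down to the actually available $\delta(12) = \tfrac{1}{5}$. Should Conjecture \ref{con:erdoes_girth} be resolved for $k=5$, the very same template would automatically sharpen the last bullet without any change to the proof structure.
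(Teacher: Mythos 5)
There is a fundamental mismatch here: the statement you were asked to prove is Lemma~\ref{lem:high_girth_graph_density}, an extremal-graph-theory existence claim — for every $k \geq 2$ there exist $n$-node graphs of girth $2k\!+\!1$ with \smash{$\Theta(n^{1+\frac{2}{3k-2}})$} edges ($k$ even) or \smash{$\Theta(n^{1+\frac{2}{3k-3}})$} edges ($k$ odd). Your proposal never touches this. Instead you prove Theorem~\ref{thm:lower_bound_stateful_routing} (the stateful routing lower bound) by plugging the density exponents from Lemma~\ref{lem:low_even_girth_graph_density} into Lemma~\ref{lem:lower_bound_stateful_routing}. That derivation is essentially the paper's own route to Theorem~\ref{thm:lower_bound_stateful_routing}, including the $\ell = 12 \subseteq \calG_{k,10}$ workaround for the missing girth-$10$ entry — but it is a proof of the wrong statement, and it nowhere establishes the existence of dense graphs of large odd girth.

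Concretely, Lemma~\ref{lem:high_girth_graph_density} cannot be obtained from the routing/oracle machinery at all: the lower-bound lemmas \emph{consume} graph families of prescribed girth and density as input, so invoking them is circular with respect to this lemma. A proof of the statement would have to either cite the explicit algebraic constructions of Lazebnik, Ustimenko and Woldar (the graphs $CD(k,q)$, whose size–girth trade-off is exactly the exponent $\frac{2}{3k-2}$ or $\frac{2}{3k-3}$ depending on the parity of $k$, as summarized in Verstra\"ete's survey) or reproduce such a construction and verify its girth and edge count. The paper itself does not prove this lemma; it states it as a citation, which is also what your write-up would need to do — but your proposal instead argues about stretch values, label sizes, and round complexities, none of which bear on the existence of the graphs in question.
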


Above we mention only uneven girth, whereas in this paper we are mostly interested in (balanced) bipartite graphs which naturally have even girth. Note that given a graph with girth $2k+1$, one easily obtains a balanced, bipartite graph of even girth $2k+2$ with the same asymptotic order and size by constructing the bipartite double cover.

\begin{lemma}
	\label{lem:make_girth_even}
	Let $G\!=\!(V,E)$ be a $n$-node graph with girth $2k\!+\!1$, then there is a balanced, bipartite graph $G'\!=\!(V',E')$ with girth $2k\!+\!2$, $|V'| \!=\! 2|V|$ and $|E'| \!=\! 2|E|$.
\end{lemma}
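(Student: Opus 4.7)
The plan is to take $G'$ to be the \emph{bipartite double cover} of $G$: set $V' := V \times \{0,1\}$, and for every edge $\{u,v\} \in E$ add the two edges $\{(u,0),(v,1)\}$ and $\{(u,1),(v,0)\}$ to $E'$. Then $|V'| = 2|V|$ and $|E'| = 2|E|$ directly by construction, and $G'$ is bipartite with the two sides $V\times\{0\}$ and $V\times\{1\}$, each of cardinality $|V|$, so $G'$ is balanced bipartite as required.

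The only nontrivial part is the girth lower bound. Let $C'$ be any cycle of length $\ell'$ in $G'$, say $(v_0,b_0),\dots,(v_{\ell'},b_{\ell'})=(v_0,b_0)$. The edge rule forces $b_{i+1} = 1 - b_i$, so the second coordinate alternates and $\ell'$ is necessarily even. The projection $W = v_0, v_1, \dots, v_{\ell'}$ is a closed walk of length $\ell'$ in $G$. If $W$ is a simple cycle in $G$, then $\ell' \geq 2k+1$ and by parity $\ell' \geq 2k+2$. Otherwise, pick $0 \leq i < j < \ell'$ with $v_i = v_j$. Since $C'$ is a cycle, $(v_i,b_i) \neq (v_j,b_j)$, which forces $b_i \neq b_j$ and hence $j - i$ odd. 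But then both the sub-walk $v_i,\dots,v_j$ of length $j-i$ and its complement $v_j,\dots,v_{\ell'}=v_0,\dots,v_i$ of length $\ell'-(j-i)$ are odd closed walks in $G$, so each contains an odd cycle. By the girth hypothesis, each has length at least $2k+1$, and summing gives $\ell' \geq 2(2k+1) = 4k+2 \geq 2k+2$.

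The construction and the vertex/edge counts are immediate, so the only real content is the girth argument. The key step I expect to be the main (mild) obstacle is the observation that a repeated vertex in the projected walk $W$ can only occur at positions of opposite parity of the $b$-coordinate --- otherwise $C'$ would repeat a vertex of $V'$ and fail to be a simple cycle. Once this is pinned down, the parity of the resulting two sub-walks is forced to be odd, and the odd-girth hypothesis $2k+1$ on $G$ yields the desired $2k+2$ lower bound on the girth of $G'$ (in fact the construction gives girth at least $\min(4k+2,\,g_{\mathrm{even}}(G))$, but the weaker bound $\geq 2k+2$ is all that is needed for the intended use via Definition~\ref{def:graph_family}).
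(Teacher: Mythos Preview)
Your proof is correct and uses exactly the same construction (the bipartite double cover) as the paper. Your girth argument is in fact more careful than the paper's, which simply asserts that a cycle in $G'$ ``forms a corresponding cycle in $G$'' without addressing the closed-walk-versus-cycle distinction that your case analysis handles explicitly (and which even yields the sharper bound $4k+2$ in the non-simple case).
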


\begin{proof}
	Let \smash{$V' := \bigcup_{v \in V} \{v_1, v_2\}$}, i.e., for each node $v \in V$ we create two copies. Further, let \smash{$E' = \bigcup_{\{u,v\} \in E} \{\{u_1,v_2\},\{v_1,u_2\}\}$}, i.e., for each edge $\{u,v\}$ in $E$ we create two ``crossing'' edges between the node copies $u_1,v_2$ and $u_2,v_1$. Any cycle of $G'$ must form a corresponding cycle in $G$, by taking the original edge $\{u,v\}$ for each edge $\{u_1,v_2\}$ in that cycle. Thus $G'$ can not have a cycle shorter than $2k+1$. Further, by construction, we have a (balanced) bipartition of $G'$ given by the nodes with index $1$ and $2$, respectively. Since $G'$ is bipartite, it can not contain an odd cycle, hence the girth is at least $2k+2$.
\end{proof}

Combining Lemma \ref{lem:make_girth_even} with Lemma \ref{lem:low_girth_graph_density} and the $n$-node clique which has girth 3 and $\Theta(n^2)$ edges, we obtain the following lemma.

\begin{lemma}
	\label{lem:low_even_girth_graph_density}
	For $\ell = 4,6,8,12$ there are balanced, bipartite $n$-node graphs with girth $\ell$ and \smash{$\Theta(n^{1+\frac{2}{\ell-2}})$} edges.
\end{lemma}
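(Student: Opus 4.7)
The plan is to reduce this to the odd-girth results already stated, using the bipartite double cover construction of Lemma \ref{lem:make_girth_even} as a black box. Concretely, for each target even girth $\ell \in \{4,6,8,12\}$, I will start from an $n$-node graph $G$ of odd girth $\ell-1$ with $\Theta(n^{1+2/(\ell-2)})$ edges, then apply Lemma \ref{lem:make_girth_even} to produce a balanced bipartite graph $G'$ of girth at least $\ell$ with the same asymptotic density, and finally observe that because we took the double cover, the girth of $G'$ is exactly $\ell$ (not larger).

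For the input graph $G$ I intend to do a small case split. For $\ell = 4$ the required object of odd girth $3$ with $\Theta(n^2)$ edges is simply the $n$-clique $K_n$. For $\ell = 6, 8, 12$ I apply Lemma \ref{lem:low_girth_graph_density} with $k = 2, 3, 5$ respectively; this supplies an $n$-node graph of girth $2k+1 = \ell - 1$ with $\Theta(n^{1+1/k})$ edges. A direct substitution gives $k = \ell/2 - 1$ and hence $1/k = 2/(\ell - 2)$, so the edge count is $\Theta(n^{1+2/(\ell-2)})$, matching the target for each of the four values of $\ell$.

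Applying Lemma \ref{lem:make_girth_even} to $G$ yields a balanced bipartite $G'$ on $n' = 2n$ vertices with $2|E(G)| = \Theta(n^{1+2/(\ell-2)}) = \Theta({n'}^{1+2/(\ell-2)})$ edges, since doubling the vertex count only changes a constant factor when the exponent is fixed. By construction the graph is bipartite, hence even-girth, and Lemma \ref{lem:make_girth_even} guarantees its girth is at least $\ell - 1 + 1 = \ell$. Finally, I need to confirm the girth is not \emph{larger} than $\ell$: any odd cycle of length $\ell - 1$ in $G$ lifts to an even cycle of length $2(\ell - 1)$ in $G'$, but in fact any cycle of $G$ traversed together with the same edges in the opposite ``side'' of the cover gives a closed walk of length $\ell$, which one can check is a genuine cycle because the two sides of the bipartition alternate exactly once per edge step.

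The only non-routine piece is this last girth-matching step; the rest is just plugging in Lemmas \ref{lem:low_girth_graph_density} and \ref{lem:make_girth_even}. If working out the exact girth of the double cover turns out to be tangled, I would instead argue it suffices for the paper's later use (Theorems \ref{thm:lower_bound_distance_oracle_small_stretch}, \ref{thm:lower_bound_stateless_routing}, \ref{thm:lower_bound_stateful_routing}) to have girth \emph{at least} $\ell$ with the claimed density, since the lower-bound proofs only exploit girth lower bounds. Under that weaker reading the lemma reduces entirely to composing the two cited lemmas with the trivial $K_n$ base case.
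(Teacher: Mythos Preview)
Your approach is essentially identical to the paper's: it too just says ``Combining Lemma~\ref{lem:make_girth_even} with Lemma~\ref{lem:low_girth_graph_density} and the $n$-node clique which has girth $3$ and $\Theta(n^2)$ edges, we obtain the following lemma.'' So the reduction via the bipartite double cover, with $K_n$ as the base case for $\ell=4$ and the Benson--Singleton graphs for $\ell=6,8,12$, is exactly right.

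The one paragraph that does not hold up is your argument for \emph{exact} girth $\ell$. An odd cycle of length $\ell-1$ in $G$ lifts in the double cover to a single cycle of length $2(\ell-1)$, not to anything of length $\ell$; the sentence about ``traversed together with the same edges in the opposite side'' giving a closed walk of length $\ell$ is not correct (try it on $K_3$: its double cover is $C_6$, of girth $6$, not $4$). Your own fallback is the right move: Definition~\ref{def:graph_family} only requires girth \emph{at least} $\ell$, and that is all Lemma~\ref{lem:make_girth_even} actually proves, so for the downstream theorems nothing more is needed. If you do want the exact statement, the clean way is extremal rather than constructive: a bipartite graph with girth $\ge \ell+2$ would, by Lemma~\ref{lem:girth_edges_bound} applied with odd girth $\ell+1$, have at most $O(n^{1+2/\ell})$ edges, which is asymptotically fewer than the $\Theta(n^{1+2/(\ell-2)})$ you already have; hence the girth must be exactly~$\ell$.
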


Note that Lemma \ref{lem:low_even_girth_graph_density} this is tight, since for any even $\ell\geq 4$ we obtain the upper bounds \smash{$\Omega(n^{1+\frac{2}{\ell-2}})$} in Lemma \ref{lem:girth_edges_bound} by plugging in the smaller uneven girth $\ell-1$.
For all other even girths we have to fall back on Lemma \ref{lem:high_girth_graph_density}. Combining it with Lemma \ref{lem:make_girth_even} gives us the lemma below. Note that we do not apply this lemma for girth $10$ as we can get the same asymptotic number of edges for the higher (= better) girth $12$ from Lemma \ref{lem:low_even_girth_graph_density}.

\begin{lemma}
	\label{lem:high_even_girth_graph_density}
	For any even $\ell \geq 14$ there is a balanced, bipartite $n$-node graph with girth $\ell$ and \smash{$\Theta(n^{1+\frac{4}{3\ell-10}})$} edges if $\ell \equiv 2 \!\!\mod 4$, or \smash{$\Theta(n^{1+\frac{4}{3\ell-12}})$} edges if $\ell \equiv 0 \!\!\mod 4$.
\end{lemma}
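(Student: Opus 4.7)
The plan is to combine Lemma \ref{lem:high_girth_graph_density} with the bipartite-double-cover construction of Lemma \ref{lem:make_girth_even}. Set $k := (\ell-2)/2$, which is an integer since $\ell$ is even and at least $\ell/2 - 1 \geq 6$. Lemma \ref{lem:high_girth_graph_density} then produces an $n'$-node graph $G$ of girth at least $2k+1 = \ell-1$ whose edge count depends on the parity of $k$, and Lemma \ref{lem:make_girth_even} lifts $G$ to a balanced bipartite graph $G'$ on $2n'$ nodes with $2|E(G)|$ edges and girth at least $2k+2 = \ell$. Because $G'$ is bipartite it has only even cycles, so its girth is actually at least $\ell$ even if we only knew it exceeded $\ell-1$.

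The remaining work is purely an exercise in translating the parameter $k$ back to $\ell$ in the exponent. First I would observe the parity correspondence: $k = (\ell-2)/2$ is even exactly when $\ell \equiv 2 \pmod 4$, and odd exactly when $\ell \equiv 0 \pmod 4$. In the first case Lemma \ref{lem:high_girth_graph_density} gives $\Theta\bigl((n')^{1 + 2/(3k-2)}\bigr)$ edges, and substituting $3k - 2 = (3\ell - 10)/2$ yields the exponent $1 + \tfrac{4}{3\ell-10}$. In the second case we get $\Theta\bigl((n')^{1 + 2/(3k-3)}\bigr)$ edges, and $3k - 3 = (3\ell - 12)/2$ gives the exponent $1 + \tfrac{4}{3\ell-12}$.

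Finally, since the new node count is $n := 2n'$, the $\Theta$ notation absorbs the factor of $2$ both in the number of edges and in passing from $n'$ to $n$ in the base of the exponent. Thus $G'$ is the desired balanced bipartite $n$-node graph of girth $\ell$ with the stated edge count.

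Essentially no obstacle arises here; the only thing that needs a brief sanity check is that the bipartite double cover preserves girth correctly in the boundary case where $G$ has girth exactly $\ell-1$, which is handled by the parity argument above. Everything else is a direct citation of the two preceding lemmas plus elementary algebra.
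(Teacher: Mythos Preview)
Your proposal is correct and follows exactly the approach the paper intends: combine Lemma~\ref{lem:high_girth_graph_density} with the bipartite double cover of Lemma~\ref{lem:make_girth_even}, then translate the parameter $k=(\ell-2)/2$ back to $\ell$. The paper itself only states this combination in one sentence without spelling out the parity and exponent bookkeeping, so your write-up is in fact more detailed than the original.
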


\section{Lower Bounds for LOCAL and NCC}

To quantify the advantage of combining two communication modes in the \hybrid model, we are interested in the communication complexity of computing routing schemes and distance oracles of the constituent communication modes \LOCAL and \NCC. The according lower bounds are significantly higher than the upper bounds for the \hybrid model given in Section \ref{sec:upper_bounds}.
Since we do not consider this part as the main scope of this paper we restrict ourselves to proof sketches, which can be completed into full proofs with moderate effort.

We start with the \LOCAL model (corresponds to \hybridpar{\infty}{0}, c.f., Definition \ref{def:hybrid_model}), where practically all shortest path problems are acknowledged to be of global nature, i.e., they require at least $\Omega(D_G)$ rounds of communication, in general (where $D_G$ is the hop diameter of the graph $G$ and $D_G \in \Omega(n)$). Since all graph problems can be trivially solved in $\bigO(D_G)$ rounds in \LOCAL (by collecting the whole graph at one or all nodes), global problems are usually uninteresting in \LOCAL (unless supplemented with additional capabilities, like in \hybrid or constrained further, like in \CONGEST).
We show that the same is also true for computing routing schemes and distance oracles.

\begin{lemma}
	\label{lem:lower_bound_local}
	Computing distance oracles and \textit{stateless} routing schemes with constant stretch and exact \textit{stateful} routing schemes in the \LOCAL model with success probability strictly larger than $\frac{1}{2}$ takes $\Omega(n)$ rounds irrespective of the allowed label size.
\end{lemma}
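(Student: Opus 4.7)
The plan is an indistinguishability argument tailored to \LOCAL. After $T$ rounds in \LOCAL, every node $v$'s state, and hence its label $\lambda(v)$, oracle function $o_v$, and routing function $\rho_v$, is a deterministic function of $v$'s $T$-hop ball (with identifiers and edges) together with the random coins of the nodes in that ball. Consequently, if two input graphs $G_1, G_2$ induce identical $T$-hop balls at some node $u$, then coupling the random coins yields literally the same outputs $\lambda(u), o_u, \rho_u$ in both executions; in particular these random variables have identical distributions in the two graphs. Crucially, this indistinguishability holds \emph{regardless of how large labels are allowed to be}, since labels too are computed from local information only.

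For distance oracles with constant stretch $\alpha$, I would compare $G_1 := P_n$, the path on $v_0, \ldots, v_{n-1}$, with $G_2 := P_n \cup \{\{v_a, v_{n-1-a}\}\}$, where $a := \lfloor n/(4\alpha) \rfloor$. For any $T < a$ the $T$-hop balls of $v_0$ and $v_{n-1}$ are identical in the two graphs, yet $d_{G_1}(v_0, v_{n-1}) = n-1$ while $d_{G_2}(v_0, v_{n-1}) = 2a+1$. By the choice of $a$, the correctness intervals $[n-1, \alpha(n-1)]$ and $[2a+1, \alpha(2a+1)]$ are disjoint for $n$ large enough. Since $o_{v_0}(\lambda(v_{n-1}))$ has the same distribution under both graphs, it cannot lie in both disjoint intervals with probability $> \tfrac12$ simultaneously, forcing $T \geq a = \Omega(n)$.

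For constant-stretch stateless routing and exact stateful routing, I would instead use a cycle-based construction that makes the correct first hop at $v_0$ depend on hidden information. Let $C$ be the cycle on $v_0, \ldots, v_{n-1}$, and set $G_L := C \cup \{\{v_1, v_{n/2-1}\}\}$ and $G_R := C \cup \{\{v_{n-1}, v_{n/2+1}\}\}$. In both graphs $d(v_0, v_{n/2}) = 3$, and for $n$ sufficiently large the unique shortest $v_0$-$v_{n/2}$-path in $G_L$ begins with the edge $\{v_0, v_1\}$, while the unique one in $G_R$ begins with $\{v_0, v_{n-1}\}$. For $T < n/4 - 1$ the $T$-hop ball of $v_0$ coincides in $G_L$ and $G_R$, so by indistinguishability the first routing hop of $v_0$ has the same distribution in both executions. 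For exact stateful routing, the first hop must lie on a shortest path and is therefore forced to be $v_1$ in $G_L$ and $v_{n-1}$ in $G_R$, an immediate contradiction. For stateless $\alpha$-stretch routing, I will argue that a wrong first hop commits the packet to the long arc of the cycle, yielding a routing path of length at least $n/2$, which violates stretch $\alpha$ once $n > 6\alpha$; hence the first hop must again be correct with probability $> \tfrac12$ on each graph, contradicting the identical-distribution conclusion.

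The main obstacle is the no-loop analysis for stateless routing: one must show that a wrong first hop cannot be rescued by a downstream node that did observe the shortcut. The key point is that per-node stateless routing is deterministic, so turning the packet back through $v_0$ would form a deterministic cycle preventing termination; the only valid continuation from $v_{n-1}$ is therefore the shortcut-free arc of the cycle, which has length $n/2 - 1$ from $v_{n-1}$ to $v_{n/2}$, yielding total routing length $n/2$ and breaking constant stretch for $n$ large. The exact stateful case bypasses this subtlety because exactness forces every hop, not just the first, to lie on a shortest path.
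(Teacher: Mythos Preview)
Your distance-oracle argument is correct and, in a sense, cleaner than the paper's sketch: instead of random identifier assignment and a randomly attached target, you compare two \emph{fixed} graphs (the path $P_n$ versus $P_n$ with a shortcut $\{v_a,v_{n-1-a}\}$) whose $T$-hop balls at both $v_0$ and $v_{n-1}$ are literally identical for $T<a=\Theta(n/\alpha)$. Coupling the coins makes $o_{v_0}(\lambda(v_{n-1}))$ the same random variable in both instances, while the two correctness intervals are disjoint. This yields $\Omega(n)$ rounds for any constant stretch and any label size.

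Your routing argument, however, has a genuine gap. In $G_L$ the shortcut endpoint $v_1$ is adjacent to $v_0$, and in $G_R$ the endpoint $v_{n-1}$ is adjacent to $v_0$. Hence already for $T=2$ the $T$-hop ball of $v_0$ in $G_L$ contains $v_{n/2-1}$, while in $G_R$ it contains $v_{n/2+1}$ instead; the balls are \emph{not} identical as labeled graphs, and your indistinguishability premise fails for every $T\ge 2$. The same happens at $v_{n/2}$ (the other shortcut endpoints $v_{n/2-1}$, $v_{n/2+1}$ are its neighbors), so $\lambda(v_{n/2})$ is not preserved either. Consequently the ``same distribution of the first hop'' step, and with it both the stateless and the exact-stateful conclusions, does not go through.

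The idea is salvageable but not as written: you would need to place the shortcut so that both endpoints are at hop distance $\Omega(n)$ from $v_0$ \emph{and} from the target, and then redo the stretch analysis (the true distance can no longer be $3$, so the ``long arc exceeds stretch $\alpha$'' computation changes). Alternatively, you can follow the paper's approach and randomize: take a path with uniformly random identifiers and attach the target to a uniformly random end, so that by symmetry neither the middle node's $T$-ball nor any label the target can compute from its own $T$-ball carries information about the correct direction. The paper's symmetry argument avoids having to match $T$-balls exactly across two fixed instances.
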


\begin{proof}[Proof Sketch]
	We start with routing schemes and subsequently show how the argument can be adapted to distance oracles. Consider a path with $n\!-\!1$ nodes with randomly assigned identifiers in $[n\!-\!1]$, such that each permutation of $[n\!-\!1]$ along he path has the same probability. We pick a random node $v$ i.i.d., from the middle part of that path, i.e., from the nodes with distance at least $\ell$ from either end of the path for some $\ell \in \Omega(n)$ with $\ell < \frac{n}{2}$ ($v$ is unaware of $\ell$). Call one side of the path ``left'' the other ``right'' of $v$ (however, nodes have no concept of left or right).
	
	A dedicated node $u$ with $ID(u)=n$ is attached either to the left or right end of the path each with probability $\frac{1}{2}$.
	If $v$ sends the packet in the opposite direction of $u$, the \textit{exact} routing scheme immediately fails and having a stateful routing scheme does not help for the first routing decision. If $v$ does not know in which direction $u$ is, then sending the packet in an arbitrary direction fails with probability at least $\frac{1}{2}$.
	
	Presume we have $k < \ell$ rounds to compute the routing scheme that succeeds with probability higher than $\frac{1}{2}$. Note that this does suffice that $v$ sees one end of the path. Thus $v$ must learn which direction $u$ is from combining its local information with the label $\lambda(u)$ that $u$ computed. However, the information nodes $u$ and $v$ can gather in $k$ rounds in the \LOCAL model is restricted to their respective $k$-hop neighborhoods $\calB_{u,k},\calB_{v,k}$ of $u$ and $v$. Thus the most information about $G$ that $\lambda(u)$ can contain is $\calB_{u,k}$, so let us assume $\lambda(u) = \calB_{u,k}$. 
	
	If $\calB_{u,k} \cap \calB_{v,k} = \emptyset$, then node $v$ can not distinguish scenarios where $\calB_{u,k}$ is ``left'' from those where $\calB_{u,k}$ is to the ``right'' of $v$. Furthermore, due to the symmetry of our random setup, both of these possibilities have exactly the same probability. Thus $v$ can not make a decision that succeeds with probability better than $\frac{1}{2}$. The only way that $v$ can learn the direction of $u$ is if $\calB_{u,k} \cap \calB_{v,k} \neq \emptyset$. But since $\hop(u,v) \geq \ell \in \Omega(n)$ we have $k \in \Omega(n)$.
	
	This argument can also be applied to distance oracles and stateless routing schemes. Since here the first decision of $v$ is final (in stateless routing a packet may not reverse direction on a path) we can also accommodate any constant stretch. We randomly pick either the left side or the right side with probability $\frac{1}{2}$ and make it a constant factor longer than the other, so $v$ must know the location of $u$ to be able to give an approximation that is better than this constant. (Note that with a slightly more involved argument we can also accommodate a small fixed constant stretch for stateful routing schemes).
\end{proof}

Admittedly, it seems artificial to consider our problems only in the \NCC model (\hybridpar{0}{\polylog n}, c.f., Definition \ref{def:hybrid_model}), because computing distance oracles or routing schemes for a local network suggests that this infrastructure could and should also be used for that. However, for the sake of theoretical comparison of the models, we assume that each node knows its neighbors in some local communication graph $G$, but can only communicate within the restrictions of the \NCC model.

\begin{lemma}
	\label{lem:lower_bound_ncc}
	Computing exact distance oracles and stateful routing schemes in the \NCC model with constant success probability takes $\tilOm(n)$ rounds even for labels of size $cn$ for some constant $c > 0$.
\end{lemma}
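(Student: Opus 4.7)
My plan is to reduce from the node communication problem of Section \ref{sec:node_communication_problem}, using a stripped-down version of the unweighted gadget $\Gamma$ from Definition \ref{def:unweighted_construction}. Concretely, I take $\Gamma$ with $k = \Theta(n)$ and hop parameter $h = 1$, so the total number of vertices is $n = 3k+2$ and the random bit string $X = (x_{ij})_{i,j \in [k]}$, drawn i.i.d.\ uniform, has entropy $H(X) = k^2/2 = \Theta(n^2)$. Lemma \ref{lem:source_target_distance_unweighted} specializes to $d(s_i,t_j) = 2$ if $x_{ij} = 1$ and $d(s_i,t_j) = 3$ otherwise; moreover, the (unique) shortest $s_i$-$t_j$ path first visits $u_i$ when $x_{ij}=1$ and $v$ when $x_{ij}=0$. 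Hence both an exact distance oracle and an exact stateful routing scheme allow $B := \{s_1, \ldots, s_k\}$ to collectively recover $X$, reducing the problem to the node communication problem with $A := \{u_1,\ldots,u_k,t_1,\ldots,t_k\}$ knowing $X$.

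Next, I adapt Lemma \ref{lem:reduce_comm_problems} to \NCC. In its original proof the restriction $T < h$ was needed only so that local messages could not escape from either party's territory; since $\lambda = 0$ in \NCC no local message ever exists, so Alice can \emph{permanently} simulate $A$ and Bob can permanently simulate $V_\Gamma \setminus A \supseteq B$ for any number of rounds. The only cross-party traffic is the at most $n \gamma$ bits per round of global communication, so any $T$-round \NCC algorithm for the node communication problem yields a two-party protocol of expected transcript length at most $T \cdot n \cdot \gamma$. Combining this with Corollary \ref{cor:lower_bound_general_amended} gives $T \geq (p H(X) - 1 - y)/(n \gamma)$, where $y$ absorbs the total label information that $A$ may pre-share with $B$.

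To finish, I account for labels of size at most $cn$. Feeding all $k$ target labels to the sources costs $y \leq c k n = \Theta(c n^2)$ bits; choosing the constant $c > 0$ small enough relative to $p$ makes $y \leq pH(X)/2$, so the bound becomes $T \in \Omega\big(H(X)/(n\gamma)\big) = \Omega\big(n^2 / (n \polylog n)\big) = \tilOm(n)$, as claimed. The one genuine subtlety is the re-derivation of Lemma \ref{lem:reduce_comm_problems} without its $T < h$ hypothesis; this is immediate once one notices that $\lambda = 0$ makes the local graph communicatively inert, and the rest is routine bookkeeping that matches the $\Theta(n^2)$ entropy against the $n \cdot \polylog n$ per-round global bandwidth.
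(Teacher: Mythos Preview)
Your proposal is correct and follows essentially the same approach as the paper's own proof sketch: instantiate the unweighted gadget $\Gamma$ with $h=1$ and $k=\Theta(n)$, observe that in \NCC the absence of local edges removes the $T<h$ hypothesis from the simulation argument (the paper phrases this as using a ``simplified form of Theorem~\ref{thm:lower_bound_node_communication}, where we remove the dependency on $h$''), and then balance the $\Theta(n^2)$ entropy against the $\tilO(n)$ per-round global bandwidth while absorbing the $k$ labels of size $cn$ into the free-communication budget. Your write-up is in fact more explicit than the paper's sketch on each of these points.
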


\begin{proof}[Proof Sketch]
	We sketch this proof based on our arguments for the lower bounds in the \hybrid model, so knowledge thereof is required. Although this proof sketch is rather informal, the rigorous formal arguments can be derived from (a much simplified form of) our lower bounds for the \hybrid model with moderate effort. 	
	We will use the basic setup of the unweighted graph construction in Definition \ref{def:unweighted_construction} depicted in Figure \ref{fig:lower_bound_basic}. Since there is no need to prohibit local communication with a long path between node sets $A$ and $B$, we are allowed to set $h = 1$. This in turn allows us to make the set $A$ and $B$ of size $\Omega(n)$. 
	
	For the lower bound of the node communication problem (c.f., Def.\ \ref{def:node_comm_problem}) in \NCC we can use a simplified form of Theorem \ref{thm:lower_bound_node_communication}, where we remove the dependency on $h$ (second argument of the min function) due to the lack of local communication. Then the rest follows from a reduction. Roughly speaking, by computing exact distance oracles and routing schemes nodes (who get all labels in advance ``for free'') can determine the existence edges that are sampled from a complete bipartite graph formed among nodes in $A$ (c.f., Theorem \ref{thm:lower_bound_unweighted})
	
	These edges are determined by a random bit vector $X$ of length $\Omega(n^2)$ (the number of edges of the complete bipartite graph). This solves the node communication problem on $X$, thus there must have been (roughly) $H(X) \in \Omega(n^2)$ bits communication from $A$ to $B$ whereas all nodes combined have a bandwidth of only $\tilO(n)$ bits per round. Not even all labels (``free information'') of size up to $cn$ bits for some sufficiently small $c > 0$ can help to communicate this information from $A$ to $B$. Note that for \NCC can also show roughly the same constant stretch factors that we achieved in the \hybrid model even for unweighted graphs.
\end{proof}

\shrt{}

\shrt{

\section{Deferred Proof of Section \ref{sec:node_communication_problem}}
\label{sec:node_communication_problem_proofs}

\begin{proof}[Proof of Lemma \ref{lem:reduce_comm_problems}]
	We will derive a protocol $\mathcal P$ that uses (i.e., simulates) algorithm $\mathcal A$ in order to solve the two-party communication problem. First we make a few assumptions about the initial knowledge of both parties in particular about the graph $G$ from the node communication problem, you can think of this information as hard coded into the instructions of $\mathcal P$. The important observation is that none of these assumptions will give Bob any knowledge about $X$.
	
	Specifically, we assume that Alice is given complete knowledge of the topology $G$ and inputs of all nodes in $G$ (in particular the state of $X$ and the source codes of all nodes specified by $\calA$). Bob is given the same for the subgraph induced by $V \setminus A$, which means that the state of $X$ remains unknown to Bob (c.f., Def.\ \ref{def:node_comm_problem}). To accommodate randomization of $\mathcal A$, both are given the same copy of a string of random bits (determined randomly and independently from $X$) that is sufficiently long to cover all ``coin flips'' used by any node in the execution of $\mathcal A$.
	
	Alice and Bob simulate the following nodes during the simulated execution of algorithm $\mathcal A$. For $i \in [h\!-\!1]$ let $V_i := \{v \in V \mid \hop(v,A) \leq i \}$ be the set of nodes at hop distance at most $i$ from $A$. Note that $A \subseteq V_i$ for all $i$. In round 0 of algorithm $\mathcal A$, Alice simulates all nodes in $A$ and Bob simulates all nodes in $V \setminus A$. However, in subsequent rounds $i > 0$, Alice simulates the larger set $A \cup V_i$ and Bob simulates the smaller set $B \cup V \setminus V_i$.
	
	Figuratively speaking, in round $i$ Bob will relinquish control of all nodes that are at hop distance $i$ from set $A$, to Alice. This means, in each round, every node is simulated \textit{either} by Alice \textit{or} by Bob.	
	We show that each party can simulate their nodes correctly with an induction on $i$. Initially ($i=0$), this is true as each party gets the necessary inputs of the nodes they simulate. Say we are at the beginning of round $i > 0$ and the simulation was correct so far. It suffices to show that both parties obtain all messages that are sent (in the \hybridpar{\infty}{\gamma} model) to the nodes they currently simulate.
	
	The communication taking place during execution of $\mathcal A$ in the \hybridpar{\infty}{\gamma} model is simulated as follows. If two nodes that are currently simulated by the \emph{same} party, say Alice, want to communicate, then this can be taken care as part of the internal simulation by Alice. If a node that is currently simulated (w.l.o.g.) by Bob wants to send a message over the \emph{global} network to some node that Alice simulates, then Bob sends that message directly to Alice as part of $\mathcal P$, and that message becomes part of the transcript.
	
	Now consider the case where a \textit{local} message is exchanged between some node $u$ simulated by Alice and some node $v$ simulated by Bob. Then in the subsequent round Alice will \textit{always} take control of $v$, as part of our simulation regime. Thus Alice can continue simulating $v$ correctly as she has all information to simulate all nodes all the time anyway (Alice is initially given all inputs of all nodes). Therefore it is \textit{not} required to exchange {any} \textit{local} messages across parties for the correct simulation.
	
	After $T$ simulated rounds, Bob, who simulates the set $B$ until the very end (as $T < h$), can derive the state of $X$ from the local information of $B$ with success probability at least $p$ (same as algorithm $\mathcal A$). Hence, using the global messages that were exchanged between Alice and Bob during the simulation of algorithm $\mathcal A$ we obtain a protocol $\mathcal P$ that solves the two party communication problem with probability $p$. Since total global communication is restricted by $n \cdot \gamma$ bits per round in the \hybridpar{\infty}{\gamma} model, Alice sends Bob at most $T \cdot n \cdot \gamma$ bits during the whole simulation.
\end{proof}

\begin{proof}[Proof of Corollary \ref{cor:lower_bound_general_amended}]
	As above, the node communication problem reduces to the communication problem between Alice and Bob, where Alice is now allowed to send $y$ bits to Bob in advance. Note that this still requires Alice to send $p \cdot H(X)-1-y$ remaining bits in expectation, as per Lemma \ref{lem:lower_bound_two_party}. 	
	The same contradiction as in Theorem \ref{thm:lower_bound_node_communication} can be derived as follows. Fewer rounds than stated in this lemma would imply that the transcript of global messages from Alice to Bob, would be shorter than $p \cdot H(X)-1-y$ bits (essentially by substituting $p \cdot H(X)-1$ for $p \cdot H(X)-1-y$ in the previous proof). Thus the transcript would be less than $p \cdot H(X)-1$ bits even when we add the $y$ ``free'' bits to the transcript.
\end{proof}

\section{Deferred Proofs of Section \ref{sec:lower_bounds_approx}}
\label{sec:lower_bounds_approx_proofs}

\begin{proof}[Proof of Lemma \ref{lem:source_target_distance_weighted}]
	Let $U := \{u_1, \dots ,u_k,v\}$ be the vertex cut that separates any $s_i$ from any $t_j$. The shortest \textit{simple} $s_i$-$t_j$-path that crosses $U$ via $v$ has length $w_2 + w_0 + h - 1$ \textit{independently} from $x_e$ (simple implies that a path can not ``turn around'' and go via $u_i$). 
	
	Consider the shortest $s_i$-$t_j$-path that does \textit{not} contain $v$. In the case $x_e=1$, i.e., $e = \{u_i,t_j\}$ exists in $\Gamma$, this $s_i$-$t_j$-path is forced to cross $U$ via $u_i$ and then goes directly to $t_j$ via $e$, and thus has length $w_2 + w_1 + h - 1$.
	
	Let us analyze the length of the $s_i$-$t_j$-path that does \textit{not} contain $v$ for the case $x_e=0$ (i.e., $e \notin E_\Gamma$). Let $G'$ be the subgraph that corresponds to $G$ after removing each edge $e' \in E$ with $x_{e'}=0$. Then that $s_i$-$t_j$-path has to traverse $G'$ to reach $t_j$.	
	The sub-path from $u_i$ to $t_j$ in $G'$ has to use at least $(\ell\!-\!1)$ edges, because otherwise $e = \{u_i,t_j\}$ would close a loop of less than $\ell$ edges in $G'$ (and thus also in $G$), contradicting the premise that $G$ has girth $\ell$. Thus, for $e \notin E_\Gamma$ any $s_i$-$t_j$-path that does not contain $v$ has length \textit{at least} $w_2+(\ell\!-\!1)w_1+h-1$.
	
	We sum up the cases. If $x_e = 1$, then the $s_i$-$t_j$-path \textit{not} containing $v$ of length $w_2 + w_1 + h - 1$ is shorter than the one via $v$ of length $w_2 + w_0 + h - 1$, since $w_1 < w_0$. If $x_e = 0$, then the $s_i$-$t_j$-path via $v$ of length $w_2 + w_0 + h - 1$ is shorter than the one not containing $v$ of length at least $w_2 + (\ell-1)w_1 + h - 1$ due to $w_0 < (\ell\!-\!1) w_1$.
\end{proof}

\begin{proof}[Proof of Lemma \ref{lem:approx_lower_bound_groundwork}]
	As $\mathcal A$ solves the node communication problem (Def.\ \ref{def:node_comm_problem}) it takes at least \smash{$\min\!\big(\frac{pH(X)-1-y}{n \cdot \gamma}, h\big)$} rounds by Corollary \ref{cor:lower_bound_general_amended}, where $H(X) \in \Theta (k^{1+\delta})$ (see property (2), further above), $p$ is the constant success probability and $y$ describes the ``free'' communication. 
	
	The arguments of \smash{$\min\!\big(\frac{pH(X)-1-y}{n \cdot \gamma}, h\big)$} behave inversely, since increasing the distance $h = hop(A,B)$ leaves only \smash{$k \in \Theta\big(\frac{n}{h}\big)$} nodes for the graph $G$, which decreases $H(X) \in \Theta (k^{1+\delta})$. So in order to maximize the number of rounds given by the min function, we solve the equation \smash{$\frac{pH(X)-1-y}{n \cdot \gamma} = h$} subject to $k \cdot h = \Theta(n)$. Slashing constants and neglecting $y$ for now, this simplifies as follows
	\[
	\Theta \big( \tfrac{k^{1+\delta}}{n\cdot \gamma} \big) = \Theta (h), \quad \text{subject to} \quad k \cdot h = \Theta(n).
	\] 
	The solution is \smash{$k = \Theta\big(n^{\frac{2}{2+\delta}}\cdot \gamma^{\frac{1}{2+\delta}}\big)$} and \smash{$h = \Theta\big(n^{\frac{\delta}{2+\delta}}/ \gamma^{\frac{1}{2+\delta}}\big)$} (which the willing reader may verify by inserting), resulting in a lower bound of \smash{$\Omega(n^{\frac{\delta}{2+\delta}} / \gamma^{\frac{1}{2+\delta}}) = \Omega\big((\tfrac{n^\delta}{\gamma})^{\frac{1}{2+\delta}}\big)$} rounds.
	
	When we factor the free communication of $y = c \cdot k^{1+\delta}$ bits back into the equation \smash{$\frac{pH(X)-1-y}{n \cdot \gamma} = h$}, then there are no asymptotic changes to the outcome of our calculations as long as we choose the constant $c$ such that \smash{$c \cdot k^{1+\delta} \leq \frac{pH(X)-1}{2}$} (for all $n$ bigger than some constant $n_0$).
\end{proof}

\begin{proof}[Proof of Lemma \ref{lem:lower_bound_distance_oracle}]
	Set $w_2 = 1$ (this weight is only needed later for stateful routing scheme lower bounds). To make the idea described above work, we have to make the unweighted edges of $\Gamma$ (the $h$-hop $s_i$-$u_i$-paths) insignificant for the approximation ratio by scaling the weights $w_1,w_0$ by a large factor $t$.
	For that purpose we introduce the parameter $t>0$, which is specified later. 	
	We choose $w_1 = t$ and $w_0 =(\ell\!-\!1\!-\!\frac{\varepsilon}{2})\cdot t$. In particular, this means $w_1 < w_0 < (\ell-1) w_1$ (the precondition of Lemma \ref{lem:source_target_distance_weighted}).
	
	Let $e = \{u_i,t_j\} \in E$. Let $d_1 = d(s_i,t_j)$ for the case $x_e = 1$ and $d_0 = d(s_i,t_j)$ for the case $x_e = 0$. By Lemma \ref{lem:source_target_distance_weighted}, we know that $d_1 = t \!+\! h$ and \smash{$d_0 = (\ell\!-\!1\!-\!\frac{\varepsilon}{2})\cdot t + h$}.		
	Let $\mathcal A$ be an approximation algorithm for the distance oracle problem with stretch $\alpha_\ell = \ell\!-\!1\!-\!\varepsilon$. 
	
	For the cases $x_e = 1$ and $x_e = 0$, respectively, let $\tilde d_1$ and $\tilde d_0$ be distance approximations of $d(s_i,t_j)$ with stretch $\alpha_\ell$ that $s_i$ determines with its local table and the label of $t_j$ (which we computed with $\mathcal A$). 
	Note that our claims about $\tilde d_0, \tilde d_1$ will only depend on $x_e$ and are independent from $x_{e'}$ of other edges $e'\in E \setminus \{e\}$ (even though the \textit{exact} value of $\tilde d_0, \tilde d_1$ might depend on the $x_{e'}$).
	
	The goal is to show 
	$\tilde d_1 < d \leq \tilde d_0$ for some constant $d > 0$, which enables $s_i$ to distinguish $x_e=1$ from $x_e=0$ from its approximation of $d(s_i,t_j)$.
	We know that $\tilde d_1 \leq \alpha_\ell \cdot d_1 = (\ell\!-\!1\!-\!\varepsilon)(t \!+\! h)$. We also have $\tilde d_0 \geq d_0$ since our approximations are supposed to be one-sided.
	Then 
	\begin{align*}
	\tilde d_1 & \leq (\ell\!-\!1\!-\!\varepsilon)(t \!+\! h)\\ 		
	& = (\ell\!-\!1\!-\!\tfrac{\varepsilon}{2})\cdot t + h + (\ell\!-\!2\!-\!\varepsilon) \cdot h -\tfrac{\varepsilon}{2}\cdot t \tag*{\small\text{\textit{expand}}}\\		
	& < (\ell\!-\!1\!-\!\tfrac{\varepsilon}{2})\cdot t + h \tag*{\small\text{\textit{for large enough} $t$}}\\
	& = d_0 \leq \tilde d_0.
	\end{align*}
	
	The strict inequality is obtained by choosing $t > \frac{2(\ell-2-\varepsilon)}{\varepsilon} \cdot h \in \Theta(\frac{\ell h}{\varepsilon})$. Note that edge weights remain polynomial in $n$ with this choice of $t$ since $\ell, h \in O(n)$ and $\varepsilon$ is constant. 	
	So we get $\tilde d_1 < d_0 \leq \tilde d_0$, which implies the following. Let \smash{$\tilde d(s_i,t_j)$} be the distance estimate that $s_i$ actually outputs. Then it is $x_e=0$ if \smash{$\tilde d(s_i,t_j) < d_0$}, else it is $x_e=1$. Hence the nodes $B := \{s_1, \dots, s_k\}$ collectively learn $X$ and thus solve the node communication problem, which takes \smash{$\Omega\big((\tfrac{n^\delta}{\gamma})^{\frac{1}{2+\delta}}\big)$} rounds by Lemma \ref{lem:approx_lower_bound_groundwork}.
	
	Two things remain to be mentioned. First, we can assume that each $s_i$ also has advance knowledge of $d_0$, as this does not carry any information about $X$ to the nodes $B = \{s_1, \dots, s_k\}$ and therefore does not make the node communication problem easier. 
	
	Second, the nodes $s_i$ can only produce the distance estimations \smash{$\tilde d(s_i,t_j)$} when they are also provided with the labels $\lambda(t_1), \dots, \lambda(t_k)$.
	Here we assume that the nodes $s_i$ get these labels in advance as part of the contingent of ``free'' communication that we budgeted for in Lemma \ref{lem:approx_lower_bound_groundwork} and which does not make the node communication problem asymptotically easier. 
	
	In particular let $c_1>0$ be the constant from Lemma \ref{lem:approx_lower_bound_groundwork}, such that we are allowed \smash{$y = c_1 k^{1+\delta}$} bits of free communication in total. This leaves \smash{$c_1 k^{\delta}$} bits for each of the $k$ labels $\lambda(t_1), \dots, \lambda(t_k)$. In the proof of Lemma \ref{lem:approx_lower_bound_groundwork} we chose \smash{$k = c_2 \cdot n^{\frac{2}{2+\delta}}\cdot \gamma^{\frac{1}{2+\delta}}$} (for some constant $c_2 > 0$), resulting in a label size of at most \smash{$c_1 c_2 \cdot n^{\frac{2\delta}{2+\delta}}\cdot \gamma^{\frac{\delta}{2+\delta}}$}.
\end{proof}	

\begin{proof}[Proof of Lemma \ref{lem:lower_bound_stateless_routing}]
	We set $w_2 = 1$ (this weight only plays a role later \textit{stateful} routing lower bound) and $w_1 < w_0 < (\ell \!- \! 1)w_1$ (more precise values are determined further below). Let $s_i, t_j$ be a source-target pair of $\Gamma$ where $e := \{s_i,t_j\} \in E$ is part of $G$ (but not necessarily $\Gamma$). For the case $x_e = 1$ we define $d_1 := d(s_i,t_j) = w_1\!+\!h$ and $d_0 := d(s_i,t_j) = w_0\!+\!h$ for the case $x_e = 0$ (c.f., Lemma \ref{lem:source_target_distance_weighted}).
	
	Let $\mathcal A$ be an algorithm solving the \textit{stateless} routing problem with the claimed approximation ratio (with const.\ probability). Let $P$ be the \textit{simple} $s_i$-$t_j$-path induced by the \textit{stateless} routing scheme computed by $\mathcal A$. Recall that $P$ must be simple as otherwise a packet that is oblivious to the prior routing decisions would be trapped in a loop. Our aim is that $s_i$ can decide whether $x_e =0$ or $x_e =1$ from the next routing node on $P$. That is, we want that $P$ contains $v$ if and only if $x_e = 0$. However, there are a few options to obtain the $s_i$-$t_j$-path $P$ that do, in certain cases, not abide by this requirement. We enumerate these in the following.
	
	Assume $P$ does not contain $e$. Then option (1) is to go the left ``lane'' via $v$ and then use the direct blue edge $\{v', t_j\}$ to get to $t_j$, i.e., the path of length $d_0$. If $P$ goes from $s_i$ directly to $u_i$, then $P$ can be completed into a path to $t_j$ by (2) using only edges that are also part of $G$ (but not $e$), i.e., only red edges in Figure \ref{fig:lower_bound_enhanced}. Note that option (2) must include at least $\ell \! - \! 1$ red edges due to the girth $\ell$ of $G$. Option (3) is where $P$ goes to $u_i$ first uses any red edge $\{u_i,t_p\}$ ($p \neq j$) and then the two blue edges $\{t_p, v'\}, \{v', t_j\}$ to reach $t_j$. Note that for the case $x_e = 0$ all other $s_i$-$t_j$-paths are either strictly longer than option (1),(2),(3) or not simple.
	The distances of the three paths (1),(2),(3) are at least (recall $w_2 = 1$):	
	\begin{align*}
	d^{(1)} & := w_0 + h \;(= d_0)\\
	d^{(2)} & := (\ell \! - \! 1)w_1 + h\\
	d^{(3)} & := w_1 + 2w_0 + h
	\end{align*}
	To enforce $v \notin P$ in case $x_e = 1$, we require that path (1) is unfeasible, i.e., exceeds the allowed distance $\alpha_\ell d_1$. Furthermore, to enforce $v \in P$ in case $x_e = 0$, we require that the paths (2) and (3) exceed the allowed distance $\alpha_\ell d_0$. From this we obtain the following conditions.
	\begin{align*}
	\alpha_{\ell} \cdot d_1 & \stackrel{!}{<} d^{(1)} = w_0 + h\tag{1}\\
	\alpha_{\ell} \cdot d_0 & \stackrel{!}{<} d^{(2)} = (\ell \! - \! 1)w_1 + h \tag{2} \\
	\alpha_{\ell} \cdot d_0 & \stackrel{!}{<} d^{(3)} = w_1 + 2w_0 + h \tag{3}
	\end{align*}
	The remaining part of the proof is merely technical, we need to maximize $\alpha_\ell$ under the above constraints. (Afterwards, the rest follows from having solved the node communication problem as in the proof of Lemma \ref{lem:lower_bound_distance_oracle}). Set $w_1 := t < w_0$ for some yet unspecified variable $t > 0$. Further, we set $\alpha := \frac{w_0}{t} - \eps$ and show that this fulfills Equation (1):
	\[
	\alpha_{\ell} \cdot d_1 = \big(\tfrac{w_0}{t} - \eps\big)(t+h) = w_0 + \tfrac{h w_0}{t} - \eps(t+h) < w_0 + h.
	\]
	For Equation (2) we obtain:
	\begin{align*}
	& \alpha_{\ell} \cdot d_0 = \big(\tfrac{w_0}{t} \!-\! \eps\big)(w_0\!+\!h)  < (\ell \! - \! 1)t + h\\
	\Longleftrightarrow \quad & \tfrac{w_0^2}{t} + \tfrac{w_0}{t}(h-\eps t)-\eps h - h < (\ell \! - \! 1)t\\
	\Longleftrightarrow \quad & w_0^2 + \underbrace{w_0(h-\eps t)}_{<0, \text{ for } t > h/\eps} \underbrace{-\eps ht - ht}_{<0} < (\ell \! - \! 1)t^2\\
	{\Longleftarrow} \quad & w_0^2 \leq (\ell \! - \! 1)t^2\\
	{\Longleftrightarrow} \quad & w_0 \leq t \cdot \sqrt{\ell \! - \! 1}.
	\end{align*}
	Now let us turn to Equation (3)
	\begin{align*}
	& \alpha_{\ell} \cdot d_0 = \big(\tfrac{w_0}{t} \!-\! \eps\big)(w_0\!+\!h)  < t + 2w_0 + h\\
	\Longleftrightarrow \quad & \tfrac{w_0^2}{t} + \tfrac{w_0}{t}(h-\eps t)-\eps h < t + 2w_0 + h\\
	\Longleftrightarrow \quad & w_0^2 + w_0(\underbrace{h-\eps t}_{<0}-2t) \underbrace{-\eps ht - ht}_{<0} - t^2 < 0\\
	\Longleftarrow \quad & w_0^2 -2tw_0 - t^2\leq 0 \\
	\Longleftarrow \quad & w_0 \leq t \cdot (1\!+\!\sqrt 2).
	\end{align*}
	Therefore, Equations (2),(3) are fulfilled if $w_0 \leq t \cdot \sqrt{\ell \! - \! 1}$ and $w_0 \leq t \cdot (1\!+\!\sqrt 2)$ (and \smash{$t > \frac{h}{\eps}$}, which can be chosen freely). This implies that the stretch $\alpha_\ell = \frac{w_0}{t} - \eps$ must satisfy $\alpha_\ell\leq \sqrt{\ell \! - \! 1} - \eps$ and $\alpha_\ell \leq 1\!+\!\sqrt 2 \!-\! \eps$, whereas the former condition for $\alpha_\ell$ dominates the latter if and only if $\ell \leq 8$.
	
	With this choice of $w_0, \alpha_\ell$, the first node that a packet from $s_i$ with destination $t_j$ is routed to is the node $v$ if and only if $x_e = 0$. Hence, the nodes in $B$ collectively learn $X$ from the information provided by algorithm $\calA$ and the labels of the nodes $t_j$, which therefore solves the node communication problem. The runtime and size of the labels then follows the same way as in the proof of Lemma \ref{lem:lower_bound_distance_oracle}.
\end{proof}	

\begin{proof}[Proof of Lemma \ref{lem:lower_bound_stateful_routing}]
	
	The beginning of the proof is similar to the one of Lemma \ref{lem:lower_bound_stateless_routing}.	
	Consider algorithm $\mathcal A$ that solves the \textit{stateful} routing problem with the claimed approximation ratio and constant probability. Let $s_i, t_j \in V_\Gamma$ with $e := \{s_i,t_j\} \in E$ and define $d_1 := d(s_i,t_j) = w_1\!+\!w_2\!+\!h\!-\!1$ and $d_0 := d(s_i,t_j) = w_0\!+\!w_2\!+\!h\!-\!1$ for the cases $x_e = 1$, $x_e = 0$ respectively (we will ensure $w_1 < w_0 < (\ell\!-\!1)w_1$ so that Lemma \ref{lem:source_target_distance_weighted} applies). 
	
	Let $P$ be the $s_i$-$t_j$-path induced by the \textit{stateful} routing scheme computed by $\mathcal A$. Similar to before, our goal is to show that the first node on $P$ is $v$ if and only if $x_e = 0$, so that $s_i$ learns $x_e$ from its routing decision. As before, we aim to prohibit all routing options (i.e., make them break the stretch guarantee) that do not abide by this requirement. We will extend our list of options for $P$ which are in some cases undesirable from the previous proof.
	
	We have the loop-less routing options (1),(2),(3) from before (c.f., proof of Lemma \ref{lem:lower_bound_stateless_routing}), whose lengths change by an additive term $w_2-1$ due to the introduction of weight $w_2$ (orange edges in Figure \ref{fig:lower_bound_enhanced}). Note that any path (possibly with loops) that contains $s_i$ just once is at least as long as one of the options (1)-(3) (in their respective cases) so prohibiting the latter prohibits the former.
	
	
	The problem that arises is from visiting $s_i$ at least twice is that it can mislead $s_i$ by first going to $v$ even though $x_e = 1$ or vice versa. 
	Assume the case that $e$ is not in $\Gamma$ ($x_e = 0$), then we need to prohibit routing option (4) that visits the first node on the path towards $u_i$, returns to $s_i$, travels directly to $v'$ and uses the blue edge to $t_j$.		
	Conversely, assuming the case that $e$ is present in $\Gamma$ ($x_e = 1$), $P$ we need to prohibit option (5), which visits $v$ first but then returns to $s_i$, travels to $u_i$ and uses $e$ to reach $t_j$. Note that all paths that contain $s_i$ at least twice, are at least as long as one of the paths (4),(5) (in their respective cases). The described routing paths (1)-(5) have the following respective lengths:
	\begin{align*}
	d^{(1)} & := w_0 + w_2 + h -1\\
	d^{(2)} & := (\ell \! - \! 1)w_1 + w_2 + h -1\\
	d^{(3)} & := 2w_0 + w_1 + w_2 + h -1\\
	d^{(4)} & := w_0 + 3w_2 + h -1\\
	d^{(5)} & := w_1 + 3w_2 + h -1
	\end{align*}	
	
	We obtain inequalities similar as before. For the reasoning of inequalities (1)-(3) consider the proof of Lemma \ref{lem:lower_bound_stateless_routing}. Observe that if $P$ would follow path options (4),(5) it could make $s_i$ believe $x_e = 1$ or $x_e = 0$ respectively, even though the opposite is true. We require that path options (4) and (5) exceed the stretch $\alpha_\ell$ in the cases in which they are undesirable, that is, $x_e = 0$ and $x_e = 1$, respectively. 
	\begin{align*}
	\alpha_{\ell} \cdot d_1 & \stackrel{!}{<} d^{(1)} = w_0 + w_2 + h -1\tag{1}\\
	\alpha_{\ell} \cdot d_0 & \stackrel{!}{<} d^{(2)} = (\ell \! - \! 1)w_1 + w_2 + h -1 \tag{2} \\
	\alpha_{\ell} \cdot d_0 & \stackrel{!}{<} d^{(3)} = 2w_0 + w_1 + w_2 + h -1 \tag{3} \\
	\alpha_{\ell} \cdot d_0 & \stackrel{!}{<} d^{(4)} = w_0 + 3w_2 + h -1 \tag{4} \\
	\alpha_{\ell} \cdot d_1 & \stackrel{!}{<} d^{(5)} = w_1 + 3w_2 + h -1 \tag{5}
	\end{align*}
	This leaves us with an optimization problem where $\alpha_{\ell}$ is to be maximized in the domain $1 \leq w_1 \leq w_0,w_2$, subject to conditions (1)-(5). This is admittedly a bit tedious, in particular since the optimization problem has a different result for girth $\ell \in \{4,6,8,10\}$ (we gain no improvement for larger $\ell$). We do not reproduce all necessary calculations of the optimization in detail, instead we give some explanations to make our results (given in Table \ref{tbl:lower_bounds_stateful_opt_values}) reproducible. 
	
	One obstacle is the presence of edges with weight 1 in $\Gamma$ forming the path of length $h-1$, which is not yet fixed and therefore prohibits solving the optimization problem directly. Our strategy is to first relax our problem to a simplified one with zero-weight edges. That is, we set all edges, except those with weight $w_0,w_1,w_2$, to zero and adapt the distances $d_0, d_1, d^{(1)}, \dots,  d^{(5)}$ accordingly (essentially slashing $h\!-\!1$ in those distances). We also set $w_1 = 1$. This eliminates $w_1,h$ from conditions (1)-(5) and the resulting maximization problem(s) can be solved directly.
	
	From the results of the simplified optimization problems (for girths $\ell \in \{4,6,8,10\}$), we obtain almost optimal solutions for the general optimization problem with non-zero weights as follows. We multiply the values of $w_0,w_1,w_2$ with a (sufficiently large) parameter $t \geq 1$ and subtract an $\eps>0$ from $\alpha_\ell$ to accommodate a (small) slack that is required in the general inequalities. The value of $t$ depends on the slack $\eps$ in $\alpha_\ell$ that is given but is generally polynomial in $n$ as long as $\eps$ is constant.
	The results obtained using this procedure are given in Table \ref{tbl:lower_bounds_stateful_opt_values}.
	
	\begin{table}[h]
		\begin{center}
		\begin{tabular}{p{5mm}>{\centering\arraybackslash}p{14mm}>{\centering\arraybackslash}p{14mm}>{\centering\arraybackslash}p{6mm}>{\centering\arraybackslash}p{14mm}}
			\toprule
			$\ell$ & $\alpha_\ell$ & $w_0$ & $w_1$ & $w_2$\\
			\midrule
			4 & $\sqrt 2 - \eps$ & $2t\sqrt 2 \!-\!t$ & $t$ & $t$\\[1.5mm]
			6 & $\frac{5}{3} - \eps$ & $\frac{5t}{2}$ & $t$ & $\frac{5t}{4}$ \\[1.5mm]
			8 & $\tfrac{7}{4} - \eps$ & $\frac{35t}{11}$ & $t$ & $\frac{21t}{11}$ \\[1.5mm]
			10 & $\frac{3+\sqrt{17}}{4} - \eps$ & $\frac{3+\sqrt{17}}{2}t$ & $t$ & $\frac{5+\sqrt{17}}{4}t$\\
			\bottomrule
		\end{tabular}
		\end{center}
		\caption{\lng{\boldmath} Results of maximizing $\alpha_\ell$, s.t., conditions (1)-(5) for $\ell \in \{4,6,8,10\}$ (and scaling with $t$ and introducing slack $\eps$).}
		\label{tbl:lower_bounds_stateful_opt_values}
	\end{table}
	
	The most relevant parameter in Table \ref{tbl:lower_bounds_stateful_opt_values} are certainly the values for $\alpha_\ell$ for $\ell \in \{4,6,8,10\}$, whereas we also provide the weights $w_0,w_1,w_2$ for reproducibility. Showing that the given parameters do in fact satisfy conditions (1)-(5) for any constant $\eps > 0$ and some choice of $t$, is a repetitive task. We show this once for $\ell=10$ and condition (1) (which is almost tight in this case), the other cases can be repeated analogously.
	
	\begin{align*}
	\alpha_{\ell} \cdot d_1 &  = \big(\tfrac{3+\sqrt{17}}{4} \big)d_1 - \eps d_1\\
	& = \big(\tfrac{3+\sqrt{17}}{4} \big)(w_1 + w_2 + h -1) - \eps d_1\\
	& = \big(\tfrac{3+\sqrt{17}}{4} \big)(w_1 + w_2) + h \!-\!1 + \smash{\underbrace{\big(\tfrac{\sqrt{17}-1}{4} \big)(h\!-\!1) - \eps d_1 }_{<0,\text{ for large }t}}\\
	& < \big(\tfrac{3+\sqrt{17}}{4} \big)\big(\tfrac{9+\sqrt{17}}{4}\cdot t\big) + h \!-\!1\\
	& = \big(\tfrac{11+3\sqrt{17}}{4}\big) \cdot t + h \!-\!1\\
	& = \big(\tfrac{3+\sqrt{17}}{2} + \tfrac{5+\sqrt{17}}{4}\big) \cdot t + h \!-\!1\\
	& = w_0 + w_2 + h \!-\!1 = \smash{d^{(1)}}.
	\end{align*}
	Note that, since $d_1$ grows linear in $t$ the inequality in the fourth line holds for \smash{$t > \big(\tfrac{\sqrt{17}-1}{4\eps} \big)(h\!-\!1) \in \bigO(n)$}.
	
	Finally, by enforcing conditions (1)-(5) with the appropriate parameters given in Table \ref{tbl:lower_bounds_stateful_opt_values}, we can guarantee that the first node that a packet from $s_i$ to $t_j$ is first routed to $v$ if and only if $x_e = 0$. From this the nodes in $B$ collectively learn $X = (x_e)_{e \in E}$ thus solving the node communication problem. The runtime and size of the labels follows as in the proof of Lemma \ref{lem:lower_bound_distance_oracle}.
\end{proof}
}

\bibliography{ref/hybrid_routing_refs}

\end{document}